\documentclass[aps,prl,superscriptaddress,10pt,article,showpacs,longbibliography,twocolumn]{revtex4-2}

\usepackage{blindtext}
\usepackage{lipsum}
\usepackage{graphics}
\usepackage{amsmath}
\usepackage{graphicx}
\usepackage{graphics}
\usepackage{amssymb}
\usepackage{pifont}
\usepackage{verbatim}
\usepackage{physics}
\usepackage{float}
\usepackage[normalem]{ulem}
\usepackage[dvipsnames]{xcolor}
\usepackage{bbm}
\usepackage{enumitem} 
\usepackage{multirow}

\usepackage{dsfont}
\newcommand{\identity}{\mathds{1}}

\usepackage{graphicx}
\usepackage{bm}
\usepackage{amsmath}
\usepackage{physics}
\usepackage{amssymb}
\usepackage{amsfonts}
\usepackage{amsthm}
\usepackage{bbm}
\usepackage{mathtools}
\usepackage{braket}
\usepackage[normalem]{ulem}
\usepackage{wrapfig}
\usepackage{tikz}
\usepackage{dsfont}
\usepackage{comment}
\usepackage{thmtools,thm-restate}
\usepackage[export]{adjustbox}

\definecolor{blueviolet}{rgb}{0.2, 0.2, 0.6}
\definecolor{webgreen}{rgb}{0,.5,0}
\definecolor{webbrown}{rgb}{.6,0,0}
\usepackage[bookmarks=false,
	colorlinks=true, %allcolors=blueviolet,
	urlcolor=webbrown, 
	linkcolor=blueviolet, 
	citecolor=webgreen,
	pdfstartpage=1,
	pdfstartview={FitH},  % FitBH
	bookmarksopen=false
	]{hyperref}
    
\newtheorem{theorem}{Theorem}

\newtheorem*{theorem*}{Theorem}

\newtheorem*{task*}{Task}
\newtheorem*{proposition*}{Proposition}

\newcommand{\bs}{\boldsymbol}
\newcommand{\be}{\begin{equation}}
\newcommand{\ee}{\end{equation}}

\DeclareMathOperator*{\E}{{\mathbb{E}}}

\begin{document}

\title{Probing the classical complexity of quantum dynamics experiments}

\author{Thomas Schuster}
\affiliation{Walter Burke Institute for Theoretical Physics and Institute for Quantum Information and Matter, California Institute of Technology, Pasadena, California 91125, USA}
\affiliation{Google Quantum AI, Venice, California 90291, USA}

\author{Andreas Elben}
\affiliation{PSI Center for Scientific Computing, Theory and Data  and ETHZ-PSI Quantum Computing Hub,
Paul Scherrer Institute, CH-5232 Villigen-PSI, Switzerland}

\begin{abstract}
A confluence of recent works has shown that many quantum circuits and  dynamics are efficiently simulable by classical algorithms that track local information, even when conventional complexity measures such as the entanglement and magic are high.
Here, we introduce a novel measure of complexity, the \emph{reactivity}, to capture this new method of classical attack.
Unlike conventional complexity measures, the reactivity does not capture a property of a quantum state or operator in isolation, but rather a quantum experiment as a whole.
We provide numerical and rigorous evidence that quantum experiments with low reactivity are simple by a host of measures: they are efficient to classically simulate, learn, and fast-forward.
This motivates the search for quantum experiments with high reactivity, which may evade these simplistic features.
To this end, we introduce easily implementable experimental protocols---dubbed \emph{Pauli path spectroscopy}---that allow one to efficiently measure the reactivity of any quantum experiment of interest.
Our protocols are applicable even when the experiment itself is beyond the reach of  classical simulation.
\end{abstract}

\maketitle

What is the classical complexity of  simulating quantum dynamics?
A colloquial wisdom states that classical simulation of large quantum systems is hard, and hence quantum simulation is a fail-proof application of quantum computers~\cite{feynman2018simulating}.
Nevertheless, our understanding and characterization of quantum advantage in physical settings, outside of sampling tasks~\cite{aaronson2011computational,arute2019quantum,bouland2019complexity,movassagh2023hardness,bouland2022noise}, is relatively limited.
This is especially felt  in current quantum experiments, which often realize physical dynamics whose classical complexity is entirely unknown beforehand.

Adding to this puzzle, a recent confluence of classical algorithms has suggested that many quantum experiments can be simulated far more efficiently than previously believed~\cite{white2018quantum,rakovszky2022dissipation,ye2020emergent,holdendye2026diameter,guharoy2026reweighted,klein2022time,artiaco2024efficient,kuprov2007polynomially,karabanov2011accuracy,surace2019simulating,bhateja2026compressed,white2023effective,beguvsic2023fast,rudolph2023classical,angrisani2024classically,yithomas2024comparing,begusic2025real,loizeau2025quantum,rudolph2025pauli,fontana2023classical,dowling2026noise,shao2025characterizing,xu2026classical,gao2018efficient,aharonov2023polynomial,schuster2025polynomial,gonzalez2025pauli,angrisani2026simulating,cirstoiu2024fourier,martinez2025efficient,mele2024noise,granet2025dilution}.
For several decades, our leading metrics for understanding the classical hardness of quantum dynamics were the entanglement and magic.
These quantify the hardness of simulations via tensor networks \cite{paeckel2019time,cirac2021matrix} and stabilizer methods \cite{bravyi2019simulation}, respectively.
Recent algorithms succeed via a starkly different approach: by retaining \emph{local} information about a time-evolved state or operator, but purposefully forgetting complex \emph{non-local} information.
This builds on a physical intuition, that the evolution of local properties of many-body systems should be explainable by local information, even when complex non-local features are present.
In practice, such algorithms often store time-evolved observables in the Pauli basis~\cite{white2023effective,beguvsic2023fast,rudolph2023classical,angrisani2024classically,yithomas2024comparing,begusic2025real,rudolph2025pauli,fontana2023classical,loizeau2025quantum,shao2025characterizing,dowling2026noise,xu2026classical,gao2018efficient,aharonov2023polynomial,schuster2025polynomial,gonzalez2025pauli,angrisani2026simulating,cirstoiu2024fourier,martinez2025efficient}, or perform tensor network truncations that preference local information~\cite{white2018quantum,rakovszky2022dissipation,ye2020emergent,holdendye2026diameter}.
These approaches have enabled efficient and accurate classical simulations of quantum circuits and dynamics that appear extremely complex by traditional  measures~\cite{white2018quantum,rakovszky2022dissipation,ye2020emergent,holdendye2026diameter,guharoy2026reweighted,klein2022time,artiaco2024efficient,kuprov2007polynomially,karabanov2011accuracy,surace2019simulating,bhateja2026compressed,white2023effective,beguvsic2023fast,rudolph2023classical,angrisani2024classically,yithomas2024comparing,begusic2025real,loizeau2025quantum,rudolph2025pauli,fontana2023classical,dowling2026noise,shao2025characterizing,xu2026classical,gao2018efficient,aharonov2023polynomial,schuster2025polynomial,gonzalez2025pauli,angrisani2026simulating,cirstoiu2024fourier,martinez2025efficient,mele2024noise,granet2025dilution}.

This progress raises essential questions.
Which quantum circuits and dynamics can be  simulated with local information algorithms, and which cannot? 
Can we formulate a simple metric, analogous to the entanglement and magic, to understand these methods' success?
And can quantum experiments \emph{themselves} help to answer these questions, especially in beyond-classical regimes?

\begin{figure}
\centering
\includegraphics[width=0.95\columnwidth]{figures/PPS_fig1_sRGB_400dpi.jpg}
\caption{
\textbf{(a)} 
Any quantum experiment, $\tr(O\,\mathcal{C}(\psi))$, can be expanded as a sum over \emph{Pauli paths} (illustrated as space-time patterns of Pauli operators).
\textbf{(b)} Pauli path spectroscopy probes this decomposition by measuring the response of the experiment to controlled local perturbations.
This allows one to extract the \emph{reactivity function}, $R(w)$, which quantifies the total contribution to $\tr(O\,\mathcal{C}(\psi))$ from Pauli paths of weight $w$.
Contributions from small $w$ (green) are easy to classically simulate, while contributions from high $w$ (red) are hard to simulate for classical  algorithms based on local information.}
\label{fig:1}
\end{figure}

In this work, we address these questions by introducing a physical metric---the \emph{reactivity}---to characterize the complexity of quantum experiments with respect to local information algorithms.
The reactivity measures how much a quantum experiment is influenced by local versus non-local information.
This in turn  determines how
strongly the  experiment ``reacts'' to external perturbations.
Unlike conventional complexity measures, the reactivity does not characterize a quantum state or operator in isolation, but rather a quantum experiment as a whole.
The reactivity is inspired by, and unifies and generalizes, several existing concepts: the operator backflow of infinite-temperature correlation functions~\cite{rakovszky2018diffusive,von2022operator}; the effective volume of quantum circuits~\cite{kechedzhi2023effective}; and the influence of classical boolean functions~\cite{odonnel2014analysis}.

Our main contributions are threefold.
First, building on recent Pauli path decompositions of quantum circuit dynamics~\cite{gao2018efficient,aharonov2023polynomial,schuster2025polynomial,gonzalez2025pauli}, we introduce the reactivity and describe its connection to the response of an experiment to local perturbations.
Second, we perform detailed analyses that support use of the reactivity as a complexity metric.
These include numerical evidence that the reactivity correlates with the computational resources of local information algorithms for quantum many-body dynamics, and rigorous proofs that circuits with low reactivity are efficient to  simulate classically, and to learn and fast-forward on quantum devices.
Finally, we outline how scalable experimental measurements of the reactivity, for any quantum circuit or dynamics, can be easily performed.
Our protocol measures the reaction of the circuit to purposefully-inserted perturbations, combined with straightforward classical post-processing.

We remark that, similar to other complexity metrics, the reactivity can  easily be ``spoofed'' by an adversarial experimenter.
That is, one can design circuits that appear complex from the point of view of the reactivity, but which are simple by other means.
To our knowledge, this is a shared trait of every practical complexity metric, including the entanglement and magic.
This highlights that these metrics should be used in complement with one another, and physical intuition, in practice.

\emph{Background.}---Let us begin by providing a brief review of local information (LI) algorithms, and the intuition for their success.
We use this term to encompass a broad suite of recent numerical algorithms, including those that store sparse approximations of time-evolved operators in the Pauli basis~\cite{beguvsic2023fast,begusic2025real,loizeau2025quantum,rudolph2025pauli} and those that merge related approximations with tensor network methods~\cite{white2018quantum,rakovszky2022dissipation,klein2022time}.
The key shared feature of these algorithms is that, when applied to physical systems, they tend to keep track primarily of local information.
In some cases, this is explicit in the algorithm instructions~\cite{white2018quantum,rakovszky2022dissipation}; in others, it occurs implicitly due to physical effects~\cite{beguvsic2023fast,begusic2025real,loizeau2025quantum,rudolph2025pauli}.
These algorithms have led to startlingly accurate simulations of highly-entangled quantum many-body dynamics~\cite{beguvsic2023fast,begusic2025real,loizeau2025quantum}, as well as significant interest for benchmarking quantum advantages in experiments~\cite{kechedzhi2023effective} and quantum optimization and machine learning tasks~\cite{rudolph2023classical}.

The most striking feature of LI algorithms is their ability to accurately predict \emph{expectation values}, even when \emph{exact simulation} of quantum states and operators is classically hard.
To build intuition for this separation, we can consider its most trivial example, a local random circuit.
Here, exact simulations of quantum states and operators are believed to be hard due to the hardness of sampling~\cite{arute2019quantum,bouland2019complexity,movassagh2023hardness}.
Nonetheless, expectation values are trivial to predict, as they decay exponentially to zero~\cite{schuster2024random}.
This separation arises because the complex components of time-evolved states and operators do not \emph{contribute} to the expectation value, despite their existence in the exact time-evolved operator and state.

While this example illustrates the possibility of a large distinction between computing expectation values and exact simulation, most quantum dynamics  of interest are not so trivial.
Indeed, the premier applications of LI algorithms have been to strongly-interacting Hamiltonian dynamics with non-zero expectation values up to late times.
This application is motivated by a simple  intuition: that local dynamics should be explainable by local information.
Hence, even if complex non-local information is physically present in exact time-evolution, it may be safely neglected to predict local expectation values, much as in the random circuit example.

Despite this intuition, and significant numerical studies, the precise extent to which LI algorithms succeed and fail has remained a broad and fundamental open question.
Developing the theoretical and experimental tools to answer this question is the  goal of our work.

\emph{The reactivity function.}---To capture the complexity of LI algorithms, we would like to distinguish the contributions to a given quantum experiment from local information versus non-local information.
This is naturally achieved using the Pauli path formalism~\cite{gao2018efficient,aharonov2023polynomial,schuster2025polynomial,loizeau2025quantum,rudolph2025pauli,begusic2025real,gonzalez2025pauli}.
We consider any expectation value (i.e.~quantum experiment), $\tr( O \, \mathcal{C}(\psi))$, where $\psi$ is an initial state  on $N$ qubits, $\mathcal{C} = \mathcal{C}_t \circ \ldots \circ \mathcal{C}_1$ is a quantum circuit with $t$ layers, and $O$ is an observable.
We can decompose the expectation value into a sum of contributions from \emph{paths} of Pauli operators,
\begin{equation}
    \tr(O \mathcal{C}(\psi)) = \sum_{\vec P} A_{\vec P},
\end{equation}
where each path is a sequence of Pauli operators, $\vec P = (P_0,\ldots,P_t)$, which contributes to the experiment with an amplitude, $A_{\vec P} = \overline{\tr}(O P_t) \cdot \prod_{s=1}^t \overline{\tr}(P_s \mathcal{C}_s(P_{s-1})) \cdot \tr(P_0 \psi)$ where $\overline{\tr}(\cdot) = 1/2^N \tr(\cdot)$.

The Pauli path decomposition naturally separates contributions from local information, corresponding to paths of small Pauli operators, from non-local information, corresponding to large Pauli operators.
Our metric of complexity is now straightforward.
For any time $\tau \in [0,t]$, we define the \emph{reactivity function}, $R_\tau(w)$, as the total contribution from all Pauli paths with weight $w$ at time $\tau$,
\begin{equation}
    \tr( O \, \mathcal{C}(\psi)) = \sum_w R_\tau(w), 
    \vspace{-1mm}
\end{equation}
where $R_\tau(w) = \sum_{\vec P} \delta_{w[P_\tau] = w} A_{\vec P}$, and $w[P_\tau]$ is the number of non-identity elements in $P_\tau$.
We also define the \emph{global reactivity function}, $R(w)$, in the same manner, but with respect to the space-time weight, $w[\vec P] = \sum_{\tau=0}^t w[P_\tau]$.

The reactivity function decomposes any quantum experiment into contributions sorted by their locality.
If an experiment has high reactivity, i.e.~non-negligible contributions from large weights $w$, it will likely appear complex to LI algorithms.
In particular, owing to the exponential increase in the number of Pauli operators with the weight, one would typically expect the simulation complexity to grow exponentially in the required $w$.
On the other hand, if an experiment has low reactivity, i.e.~contributions only from small weights, it will be easy to simulate with LI algorithms.
In practice, most applications of LI algorithms involve weights $w \lesssim 10$~\cite{fn1}.

We make two further remarks.
First, our name---the \emph{reactivity}---is inspired by a close connection between our complexity metric and the \emph{reaction} of a quantum experiment to local noise.
In particular, if one applies local depolarizing noise of strength $\gamma$ on each qubit at time $\tau$, the experimental outcome becomes,
\begin{equation} \label{eq:reactivity-noise}
    \tr( O \, \mathcal{C}_\gamma(\psi))  = \sum_w e^{-\gamma w} R_\tau(w).
\end{equation}
A similar relation holds for the global reactivity, for noise applied at every circuit layer.
Hence, contributions from large weights $w$ react rapidly to even small amounts of noise, $\gamma \sim 1/w$, while contributions from small weights are less affected~\cite{fn2}.
This physical connection will form the basis of our experimental protocols later on.
It also suggests that the reactivity may be useful more broadly, for example in understanding the impact of noise, or the efficacy of noise mitigation strategies~\cite{granet2025dilution,suchsland2026quantum}.

Second, we emphasize that the reactivity probes fundamentally different characteristics than conventional measures of operator growth, such as out-of-time-order correlators~\cite{xu2022scrambling} and size distributions~\cite{roberts2018operator}, which quantify the weight needed to predict expectation values in \emph{time-reversal} experiments but not elsewhere~\cite{schuster2022operator,cotler2023information,schuster2022learning,schuster2024random,schuster2025strong,abanin2025constructive}.
The reactivity generalizes the operator backflow~\cite{rakovszky2018diffusive} from infinite-temperature correlators to any quantum experiment.
The global reactivity is related to the effective quantum volume~\cite{kechedzhi2023effective}, which equals its first moment.

\begin{figure*}
  \centering
  \includegraphics[width=\textwidth]{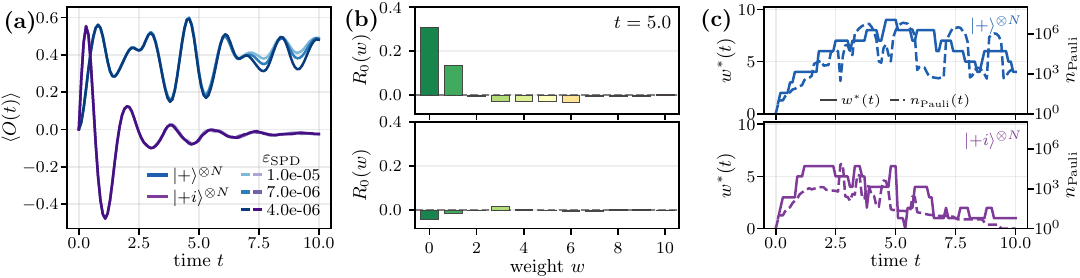}  \includegraphics[width=\textwidth]{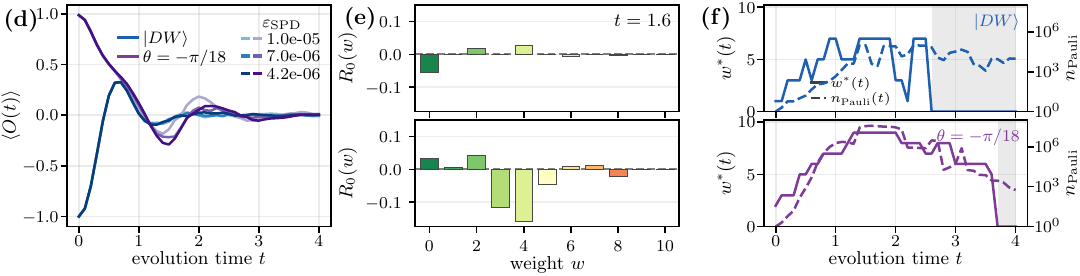}
 \caption{Numerical simulations of expectation values and reactivity functions using the SPD algorithm~\cite{begusic2025real}, for \textbf{(a-c)}~the observable $O=Z_{26}$ in the 1D MFIM on $N = 51$ qubits, and \textbf{(d-f)}~$O=Z_{3,3}$ in the 2D TFIM on a $5\times 5$ lattice. 
 \textbf{(a,d)}~Expectation value $\langle O(t) \rangle \equiv \bra{\psi} e^{iHt} O e^{-iHt} \ket{\psi}$ for two example initial product states $\ket{\psi}$ in each model (see SM \cite{supp} for definitions).
 Shading illustrates convergence with the SPD coefficient threshold \(\epsilon_{\text{SPD}}\); smaller thresholds correspond to more complex and more accurate simulations.
 \textbf{(b,e)}~The reactivity function \(R_{\tau = 0}(w)\) for each state, at a time chosen near the maximum memory cost of the SPD simulation. Green to red shading is an aid to the eye, and does not represent numerical data. 
 \textbf{(c,f)}~The tail weight \(w^{*}(t)\) (solid lines, left axis) and SPD memory cost \(n_{\rm Pauli}(t)\) (dashed lines, right axis, logarithmic scale) over time for each state. 
 The two quantities track one another remarkably closely across all times, states, and models, supporting the use of the reactivity as a proxy for the complexity of local information algorithms.
 Shaded regions mark the decayed regime,
$|\langle O(t)\rangle| \leq \epsilon_{\text{tail}}$, in which the
prediction of $\langle O(t)\rangle$ becomes trivial: $w^{*}(t)$ is no
longer informative and typically vanishes, while $n_{\rm Pauli}(t)$
continues to reflect operator structure that no longer contributes to
$\langle O(t)\rangle$~\cite{supp}.
 See main text for discussion, and the SM for complete definitions, robustness checks, and additional data~\cite{supp}.}
\label{fig:numerics_maintext}
\end{figure*}

\emph{Reactivity as a complexity metric for LI algorithms.}---Let us now
address the use of the reactivity as a practical complexity metric for LI
 algorithms.
We provide related rigorous guarantees in the SM~\cite{supp}, and in-depth studies of the physical behavior of the reactivity in a companion work~\cite{companion}.
We focus on the sparse Pauli dynamics (SPD) algorithm~\cite{begusic2025real} for concreteness, and expect qualitatively similar behavior for other LI algorithms.
The SPD algorithm
 time-evolves an observable, $O(t) = \sum_P c_P(t) P$, in the Pauli basis, and deletes Pauli strings whose coefficients are below a truncation threshold, $|c_P(t)| \leq \epsilon_{\text{SPD}}$, at each time step.
The classical cost of the algorithm is
summarized by the number of retained Pauli strings,
\(n_{\rm Pauli}(t)\), needed to achieve a desired target accuracy~\cite{supp}.

We consider two paradigmatic models
(Fig.~\ref{fig:numerics_maintext}): the 1D mixed-field
Ising model (MFIM), for which LI algorithms 
typically converge easily~\cite{begusic2025real,rakovszky2022dissipation}, and the 2D transverse-field Ising model
(TFIM), for which LI algorithms have failed in
experimentally relevant regimes~\cite{haghshenas2026digital}. 
In each model, we study the time-evolution of a local observable, $O = Z_{(N+1)/2}$, from two initial product states. 
We observe significant model and state-dependent variations in the convergence of the SPD algorithm.
For some ``easy'' models and states [MFIM, $\ket{i}^{\otimes N}$; Fig.~\ref{fig:numerics_maintext}(a)], SPD converges quickly at large thresholds, while for other ``hard'' instances [TFIM, $\ket{\theta = -\frac{\pi}{18}}$~\footnote{Here, we follow Ref.~\cite{haghshenas2026digital} and let $\ket{\theta} \equiv ( \cos(\theta/2) \ket{0} + \sin(\theta/2) \ket{1})^{\otimes N}$.}; Fig.~\ref{fig:numerics_maintext}(d)], SPD converges only at much smaller thresholds.

To explain these variations, we turn to the reactivity function.
For most models and states, we observe the highest reactivities at the initial time, $\tau = 0$~\cite{companion}.
The resulting functions, $R_0(w)$, are shown in Fig.~\ref{fig:numerics_maintext}(b,e), at times $t$ where the cost of SPD is near maximal.
We observe an array of interesting behaviors, some fluctuating quickly with $w$ and others  slowly, and, crucially, a strong qualitative correspondence with the SPD convergence.
In particular, the aforementioned ``easy'' example has a reactivity concentrated at weights $\lesssim 4$, while the ``hard'' example has support up to large weights $\approx 8$.
In general, we have not observed support at weights larger than $ \approx 10$ in our numerical simulations, due to the difficulty of converging the SPD algorithm for such systems.

To test this observed correspondence  systematically, we compare the memory cost $n_{\text{Pauli}}$ required by the SPD algorithm with the maximum weight $w_*$ at which the reactivity has support  [Fig.~\ref{fig:numerics_maintext}(c,f)].
The former is obtained by sweeping the SPD threshold, and choosing the largest value that achieves a fixed target accuracy~\cite{supp}.
The latter is obtained by choosing the smallest $w_*$ such that $| \sum_{w > w_*} R_0(w) | \leq \epsilon_{\text{tail}}$ for a small fixed $\epsilon_{\text{tail}}$~\cite{supp}.
We  average each quantity over small time windows, $\Delta t \approx 0.3$, to smooth over sharp features. 
We observe a surprisingly close  agreement between $w_*$ and the logarithm of  $n_{\text{Pauli}}$ across all models, states, and times considered.
The two quantities rise at similar rates, peak at similar times, and subsequently decay, in close correspondence with one another.
We note that this agreement is particularly remarkable as the SPD algorithm truncates operators based only on their coefficient magnitudes, and not weight.
Hence, our results provide  evidence that such algorithms can implicitly depend on the weight  even when not explicit in algorithm instructions. 
Our observation that the reactivity peaks and decays at late times also sharply contrasts with other complexity metrics. 

\emph{Learning and fast-forwarding.}---Beyond the classical simulation complexity, in the SM~\cite{supp},
we also provide rigorous evidence  that the reactivity controls the ease of \emph{learning} and \emph{fast-forwarding} quantum experiments.
For learning, we prove that, for any complete basis of states $\{\psi\}$,  $\tr(O\mathcal{C}(\psi))$ can be accurately predicted using local randomized measurement data if the mean-square reactivity is supported on small weights.
For fast-forwarding, we consider a circuit composed of two time-evolutions, $\mathcal{C} \equiv \mathcal{C}_2 \circ \mathcal{C}_1$, and prove that $\tr(O \mathcal{C}(\psi))$ can be accurately computed from randomized measurement data on $\mathcal{C}_1$ and $\mathcal{C}_2$ individually, under a similar assumption.
In both cases, the required number of randomized measurements scales exponentially in the maximum weight, $n^{\mathcal{O}(w_*)}$.

\begin{figure*}
  \centering
  \includegraphics[width=\textwidth]{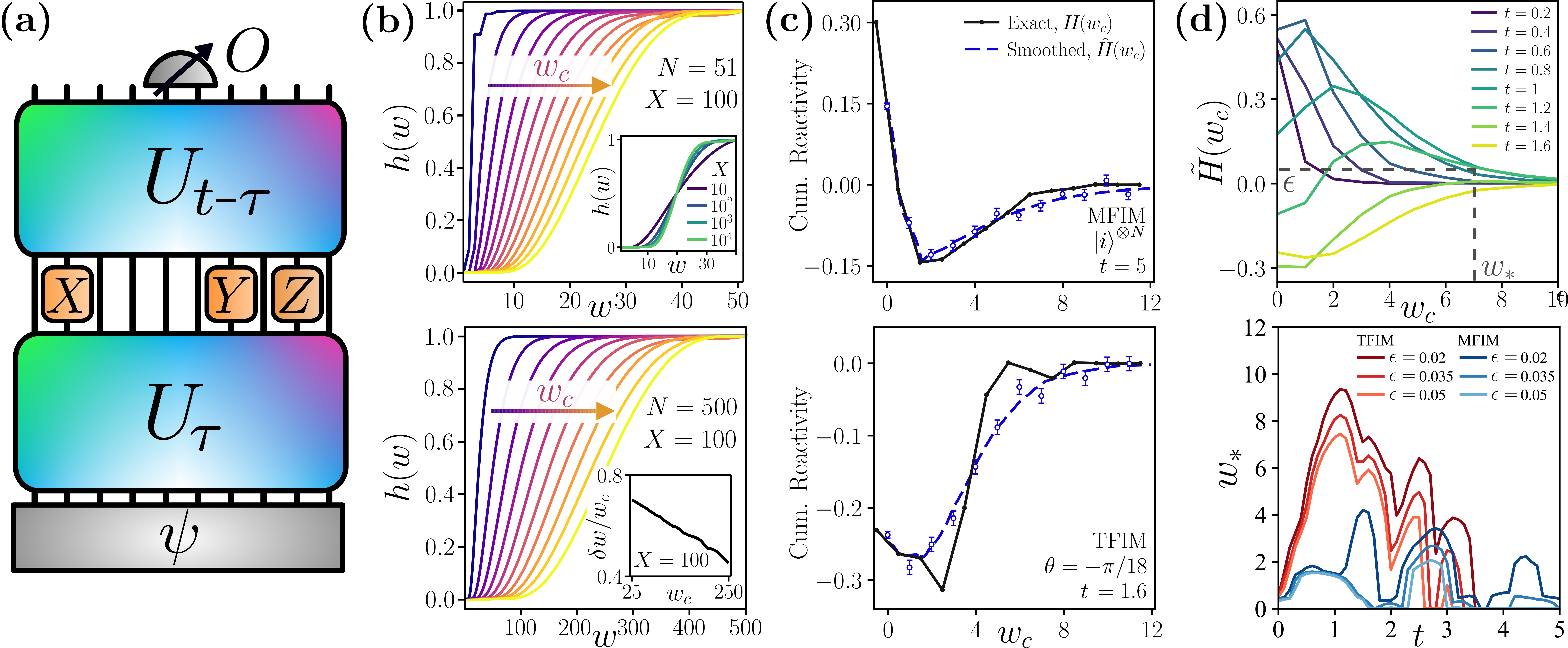}  
 \caption{\textbf{(a)} 
 The Pauli path spectroscopy protocol.
 To probe the reactivity function, $R_\tau(w)$, one inserts random Pauli operators of various weights $k$ at time $\tau$ of the experiment of interest.
 Taking a linear combination of the mean expectation value for each weight $k$ yields the inner product, $\sum_w h(w)R_\tau(w)$, of $R_\tau(w)$ with the filter function $h(w) = \sum_k h_k F(w,k;N)$ (see main text).
\textbf{(b)} To probe the total contribution of Pauli paths above a threshold weight $w_c$, we numerically optimize the filter functions to mimic a Heaviside function at weight $w_c$.
Top: Results at $N = 51$ qubits and $X=100$ sampling overhead; inset shows that larger sampling overheads enable sharper approximations. 
Bottom: Results at $N=500$ qubits and the same sampling overhead are similar and demonstrate scalability; inset shows that the function's relative width decreases mildly with $w_c$.
\textbf{(c)} Comparison of the exact cumulative reactivity function, $H(w_c) = \sum_{w > w_c} R(w)$ (black), and the ``smoothed'' cumulative reactivity function, $\tilde{H}(w_c) = \sum_w h(w;w_c) R(w)$ obtained from filter functions  with $X = 100$ sampling overhead (blue dashed lines denote exact $\tilde{H}(w_c)$; blue circles denote  estimated $\tilde{H}(w_c)$ assuming $10^6$ experimental shots per $w_c$). Models and initial states are described in Fig.~\ref{fig:numerics_maintext}.
\textbf{(d)} Examples of physical investigations enabled by our protocol. Top: The smoothed cumulative reactivity at various times in the TFIM, $\ket{\theta = -\pi/18}$.
Bottom: Time-evolution of the maximum weight $w_*$ at which $\tilde{H}(w_c)$ has support, for the MFIM, $\ket{i}^{\otimes N}$ (blue) and TFIM, $\ket{\theta=-\pi/18}$ (orange), for several precisions $|\tilde{H}(w_*)| \leq \epsilon$ (taking $10^7$ shots per $w_c$).}
\label{fig:3}
\end{figure*}

\emph{Experimental protocols.}---Having established the connection between the reactivity and the complexity of LI algorithms, let us now address its scalable experimental measurement~\cite{fn3}.
We remark that the existence of such a measurement protocol is highly non-trivial, as most complexity proxies require an exponential sampling overhead to observe~\cite{schuster2024random}.

Our protocol, \emph{Pauli path spectroscopy} [Fig.~\ref{fig:3}(a)], exploits the relation between the reactivity and noise.
In particular, our Eq.~(\ref{eq:reactivity-noise}) already implies that the noisy expectation value, $C(\gamma) \equiv \tr(O \mathcal{C}_\gamma(\psi))$, is the Laplace transform of $R(w)$.
This suggests that the reactivity can be recovered by an inverse Laplace transform, $R(w) = \sum_\gamma b_{w,\gamma} C(\gamma)$, for certain coefficients $b_{w,\gamma}$.
Unfortunately, the inverse Laplace transform is notoriously ill-conditioned~\cite{epstein2008bad}, and one finds that the coefficients $b_{w,\gamma}$ diverge exponentially in the weight $w$ that one is trying to recover.

To overcome this barrier, we ask a complementary question: What features of the reactivity function \emph{can} be efficiently recovered?
In particular, we observe that any linear combination of noisy expectation values,
\begin{equation} \label{eq:reactivity-measurement}
    \sum_\gamma h_\gamma \tr(O \, \mathcal{C}_\gamma(\psi)) = \sum_w h(w) R_\tau(w),
\end{equation}
can be efficiently estimated to precision $\varepsilon$ with a number of samples $(\sum_\gamma |h_\gamma|)^2/\varepsilon^2$~\cite{supp}.
To do so, one simply repeats the original experiment with local depolarizing noise of varying rates $\gamma$ inserted at time $\tau$, and takes a linear combination of their results.
Decomposing each noisy expectation value using Eq.~(\ref{eq:reactivity-noise}), one finds that this linear combination is equal to the overlap of the reactivity $R(w)$ with the \emph{filter function}, $h(w) \equiv \sum_\gamma h_\gamma e^{-\gamma w}$.
Hence, the overlap of $R(w)$ with any $h(w)$ with moderate-sized coefficients can be efficiently measured.

We find that these efficiently measurable $h(w)$ tend to be \emph{smooth}.
In particular, resolving a feature of $R(w)$ near weight $w$ with width $\delta w$ can generically be achieved with a sampling overhead $X \equiv (\sum_\gamma|h_\gamma|)^2 \sim e^{\mathcal{O}(w/\delta w)}$.
Crucially, as long as the relative width of the feature is order one, $\delta w/w = \mathcal{O}(1)$, the sampling overhead does not increase with $w$, and hence is scalable to large-weight, potentially beyond-classical, regimes.
We provide an example of such filter functions in Fig.~\ref{fig:3}(b), and systematic numerical and analytical studies in the End Matter.

In practice, we  find that it is more efficient to replace the inserted depolarizing channel with discrete single-qubit Pauli noise events.
In particular, if one repeats the experiment with random Pauli operators of weight $k$ inserted at time $\tau$, one can form the linear combination,
\begin{equation} \label{eq:reactivity-measurement 2}
    \sum_k h_k \tr(O \, \mathcal{C}_k(\psi)) = \sum_w h(w) R_\tau(w),
\end{equation}
with $h(w) \equiv \sum_k h_k F(w,k;N)$, where $\sum_k |h_k| \leq \sum_\gamma |h_\gamma|$, and $F(w,k;N) \approx (1-4w/3N)^k$~\cite{supp}.

Let us now illustrate how our protocol might be applied [Fig.~\ref{fig:3}(b-d)].
We begin with a primary motivation: to determine whether a given quantum experiment is dominated by local or non-local information.
To probe this, we design a family of filter functions, $h(w;w_c)$, that resemble smoothed Heaviside functions centered at $w = w_c$ [Fig.~\ref{fig:3}(b)].
The overlap of $R_\tau(w)$ with $h(w;w_c)$ thus probes the cumulative contribution to the experiment from weights $w \gtrsim w_c$. 
We numerically optimize the choice of $h_k$ to achieve the sharpest Heaviside approximation while maintaining a desired sampling overhead~\footnote{As shown in Fig.~\ref{fig:3}(b), we find that in practice, sampling overheads of $X \approx 10^2$ already yield reasonable smooth approximations to the Heaviside function. Increasing $X$ beyond this overhead yields diminishing returns.}.

We next proceed to the filter function's experimental measurement.
We numerically simulate our protocol using the SPD algorithm, with realistic shot noise added. 
The resulting ``smoothed'' cumulative reactivities, $\tilde{H}(w_c) \equiv \sum_w h(w;w_c) R(w)$, are shown in Fig.~\ref{fig:3}(c), alongside their exact counterparts, $H(w_c) \equiv \sum_{w > w_c} R(w)$.
We consider two representative models and states, identical to our earlier studies (Fig.~\ref{fig:numerics_maintext}): the 1D MFIM with  state $\ket{i}^{\otimes N}$, and the 2D TFIM with  state $\ket{\theta = -\pi/18}$.
For both examples, the smoothed reactivities display small shot-noise error, and resemble smooth interpolations of the exact reactivities, as anticipated.

Many further investigations are possible from the smoothed reactivity measurement data.
We display several examples in Fig.~\ref{fig:3}(c,d).
In Fig.~\ref{fig:3}(d), top, we show the smoothed cumulative reactivity of the TFIM for times $t$ increasing from zero until the breakdown of the SPD algorithm.
We observe a clear growth to larger-weight contributions leading up to this breakdown; one envisions that, in a quantum experiment, this growth could be characterized to much larger weights.
We also observe broad oscillatory features emerge in the smoothed reactivity, suggesting that its precise  form may be interesting to explore in larger-scale investigations.

To more compactly visualize the time-dependence of the smoothed reactivity, we can adopt our procedure from Fig.~\ref{fig:numerics_maintext}(c,f).
Namely, for each time $t$, we compute the weight $w^*$ above which $|\tilde{H}(w_c)| \leq \epsilon$ for a small $\epsilon$ [Fig.~\ref{fig:3}(d), bottom].
In keeping with our numerical observations, we observe that $w^*$ is nearly constant in time for the classically-easy MFIM, yet rapidly grows in time for the  TFIM.
In the latter, we also observe an intriguing dependence of the rate of this growth on the target error $\epsilon$; this connects to an important question in quantum simulation, of how quantum advantage scales with the demanded precision.

We conclude by emphasizing that our experimental approach, unlike our numerical simulations, is easily scalable to larger system sizes and weights $w_c$.
For our protocol with depolarizing noise, this is extremely simple: one re-scales the noise rates  $\gamma \rightarrow \gamma/x$, which replaces $w_c \rightarrow x w_c$ with no change in sampling overhead.
For our protocol with discrete noise events, we numerically find moderate efficiency \emph{gains} from working at larger $w_c$; examples are shown for $w_c$ up to $250$ in  Fig.~\ref{fig:3}(b).
This scalability suggests that the reactivity may be useful for benchmarking  complexity in quantum experiments that are beyond the reach of classical simulation:
Does an experiment reach ``only a little'' or ``very far'' beyond local-information classical algorithms?
And how does this answer scale with the evolution time, precision, system size, temperature and other characteristics of the system?

Finally,  we also introduce a second measurement protocol, which provides exact access to $R(w)$ for any infinite-temperature correlator (Fig.~\ref{fig:coherent})~\cite{supp}.
Our protocol borrows ideas from  information scrambling~\cite{schuster2022many} to replace the inserted noise with coherent rotations on a doubled system of EPR pairs with ancilla qubits.
This enables exact estimation of $R_\tau(w)$ from a Fourier transform.

\begin{figure}
\centering
\includegraphics[width=0.85\columnwidth]{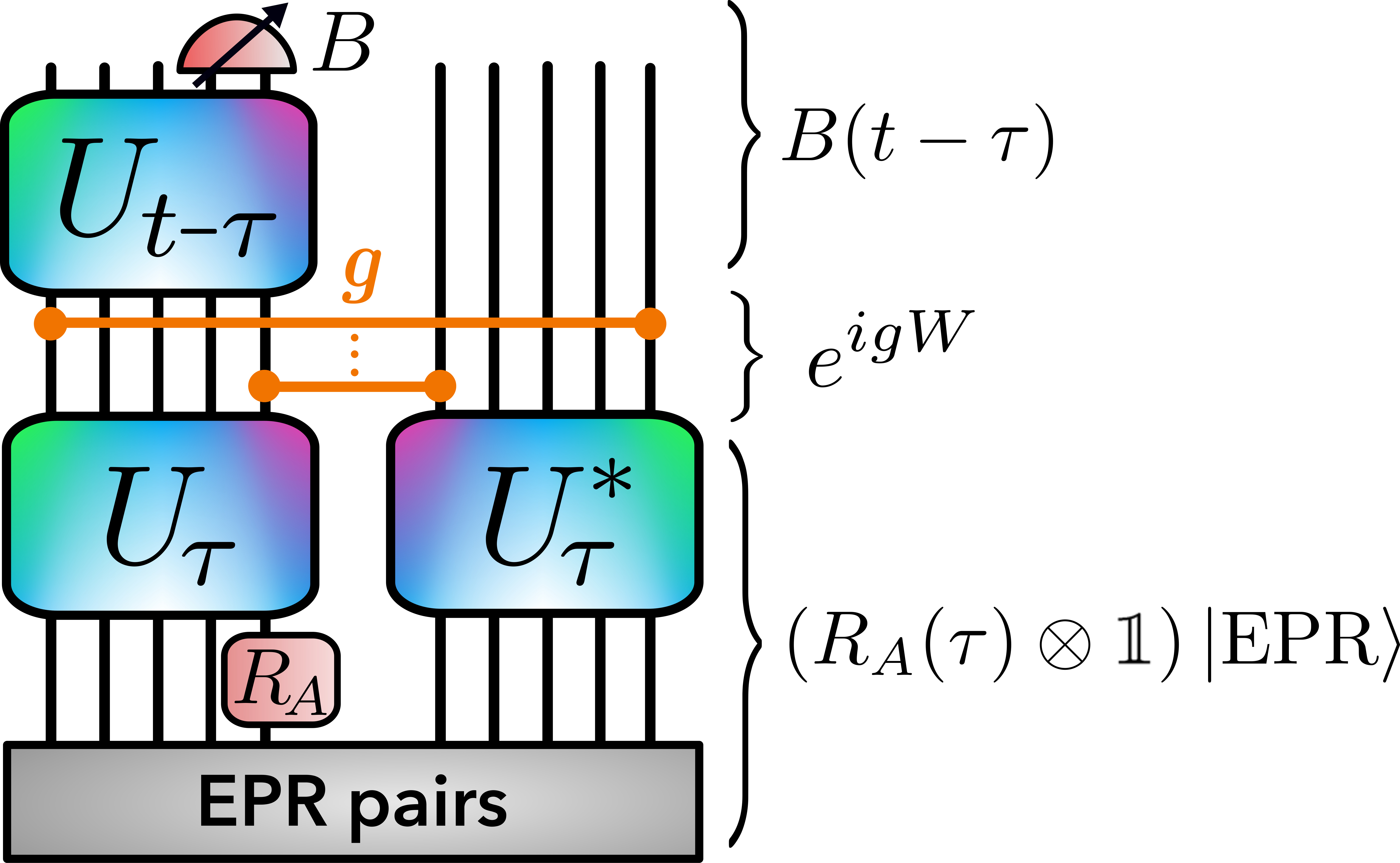}
\caption{Coherent protocol for Pauli path spectroscopy, which uses $N$ EPR pairs to measure the  Fourier transform of the reactivity, $\mathcal{R}_\tau(g) = \sum_w e^{igw} R_\tau(w)$, for any infinite-temperature correlator, $\overline{\tr}(B(t)A)$.
Here, $R_A \equiv (\identity+iA)/\sqrt{2}$ and $W \equiv \frac{3N}{4} -\frac{1}{4}\sum_{i=1}^N (X_{i,L} \otimes X_{i,R}-Y_{i,L} \otimes Y_{i,R}+Z_{i,L} \otimes Z_{i,R})$~\cite{supp}.}
\label{fig:coherent}
\end{figure}

\emph{Outlook.}---We have introduced a simple and measurable metric for the complexity of quantum dynamics experiments with respect to classical algorithms that track local information. 
As outlined  above, perhaps the most exciting direction opened by our work is the ability to pursue experimental investigations of quantum advantages at scales beyond the reach of classical algorithms.
There is little barrier to such study, as our protocol is easily implementable, and scalable to large sizes.

More broadly, our introduction of the reactivity itself raises several questions.
From a fundamental perspective, the reactivity represents a new probe of quantum information dynamics, with fundamentally different behaviors from conventional measures such as information scrambling.
Can one compute  reactivity function dynamics in representative models, or in systems with field-theoretic or hydrodynamic descriptions?
The  connection between the reactivity and local depolarizing noise may be  helpful in this regard.
Second, from a practical perspective, what applications might the reactivity have beyond benchmarking classical simulation complexity? 
As one example, a concurrent work building on our results uses the reactivity function to improve the understanding and design of quantum error mitigation methods~\cite{suchsland2026quantum}.
The physical link between the reactivity and noise suggests that related applications may continue to be found.

\textit{Acknowledgements}---We are grateful to Samantha Barron, Abhinav Kandala, Zoe Holmes, Hsin-Yuan Huang, Zlatko Minev, Thomas O'Brien, John Preskill, Tibor Rakovszky, Manuel Rudolph, Philippe Suchsland, Kristan Temme, Shreya Vardhan, Curt von Keyserlingk, Brayden Ware, and Norman Y. Yao for valuable discussions and insights.
We acknowledge the use of LLMs as an aid in numerical studies.
T.S. acknowledges
support from the Walter Burke Institute for Theoretical Physics at Caltech and the U.S. Department of Energy, Office of Science, National Quantum Information Science Research Centers, Quantum Systems Accelerator.
The Institute for Quantum Information and Matter is an NSF Physics Frontiers Center.

\let\oldaddcontentsline\addcontentsline
\renewcommand{\addcontentsline}[3]{}
\bibliography{refs}
\let\addcontentsline\oldaddcontentsline

\onecolumngrid
\newpage

\let\oldaddcontentsline\addcontentsline
\renewcommand{\addcontentsline}[3]{}
\section*{End Matter}
\let\addcontentsline\oldaddcontentsline

\onecolumngrid

\vspace{-2mm}
Here, we provide details on the construction of the smoothed filter functions described in the main text. We focus on filter functions that approximate either the Heaviside function or the delta function. We consider three approaches to constructing such functions: one based on the Chebyshev polynomials, one based on a particular simple analytic functional form, and one based on numerical optimization.

As discussed in the main text, we consider two forms of inserted noise: local depolarizing noise of strength $\gamma$ on each qubit, or application of random Pauli operators of weight $k$.
These yield the respective filter functions,
\begin{equation}
 h(w)=\sum_\gamma h_\gamma e^{-\gamma w},
 \qquad
 h(w)=\sum_k h_k \prescript{}{2}F_1^{}(-w,-k,-N;4/3),
 \vspace{-1mm}
 \label{eq:endmatter-responses}
\end{equation}
where $h_\gamma$ and $h_k$ are chosen coefficients, and $\prescript{}{2}F_1^{}(-w,-k,-N;4/3)$ is the mean damping of a weight-$w$ Pauli operator by a random weight-$k$ Pauli insertion (see SM~\cite{supp}).
The sampling overheads  are $X=(\sum_\gamma|h_\gamma|)^2$ and $X=(\sum_k|h_k|)^2$, respectively.
In general, larger and more signed coefficients enable more versatile filter functions, but also incur much larger sampling overheads.
We remark that any filter function using depolarizing noise can be implemented using noise insertions, with an equal or smaller sampling overhead, by setting $\tilde{h}_k \equiv \sum_\gamma p^\gamma_k h_\gamma$ where $p^\gamma_k = e^{-(3/4)\gamma N} ((3/4)\gamma N)^k/k!$ is the probability of $k$ noise events occurring in the depolarizing channel of strength $\gamma$ on $N$ qubits~\cite{suchsland2026quantum}.

\emph{Fourier-to-Chebyshev filters.}---Our first construction is based on a mapping to the Chebyshev polynomials.
This construction is the most convenient from an analytic perspective.
However, it tends to yield filter functions with undesirable oscillations in practice [Fig.~\ref{fig:filter-comparison}(a)], and so we do not use this method outside of our theoretical proofs.
We remark that our numerically optimized filter functions [Fig.~\ref{fig:filter-comparison}(c)] display a similar scaling of the sampling overhead with the inverse width as the Cheybyshev filter functions, but without these undesired oscillations.

Our method exploits a change of variables, from the weight $w$ to an angle $\theta$ defined by $\cos(\theta)=2e^{-\gamma_0 w}-1$, where $\gamma_0=\log(2)/w_*$ sets $\theta = 0$ at $w = w_*$ for convenience.
After this change of variables, we approximate the delta and Heaviside functions in $w$ (and hence, in $\theta$) by their truncated Fourier series, 
\vspace{-1mm}
\begin{align}
 h^{\mathrm{delta}}_n(\theta)
 & =
    \frac{1}{n}+\frac{2}{n}\sum_{m=1}^{(n-1)/2}(-1)^m\cos(2m\theta) =\frac{1}{n} \frac{\sin(n(\theta-\pi/2))}{\sin(\theta-\pi/2)}, \nonumber\\
 h^{\mathrm{Heaviside}}_n(\theta)
 &=\frac12-\frac{2}{\pi}\sum_{m=0}^{(n-1)/2}
   \frac{(-1)^m}{2m+1}\cos((2m+1)\theta).
\end{align}
respectively.
This can be converted to a power series in $z \equiv e^{-\gamma_0 w}$ using the Chebyshev polynomials, $T_m(2z-1)$, which obey $T_m(\cos\theta)=\cos(m\theta)$.
Explicitly, this yields
\begin{equation}
    h^{\text{delta}}_n(w;w_*) = 
    (-1)^{(n-1)/2} \frac{1}{n} \frac{T_n(2e^{-\gamma_0w}-1)}{2e^{-\gamma_0w}-1},
    \qquad
    h^{\text{Heaviside}}_n(w;w_*) = 
    \frac12-\frac{2}{\pi} \sum_{m=0}^{(n-1)/2}\frac{(-1)^m}{2m+1} T_{2m+1}(2e^{-\gamma_0w}-1).
\end{equation}
Each filter is a degree-$n$ polynomial in  $e^{-\gamma_0 w}$, and so can be measured with inserted noise rates $\gamma=0,\gamma_0,\ldots,n\gamma_0$. 

The Fourier-to-Chebyshev filter functions provide the cleanest analytic understanding of the dependence of the sampling overhead on the width $\delta w$.
The filter functions have width $\delta \theta \sim 1/n$, since they are obtained from a Fourier expansion up to frequency $n$.
This translates to $\delta w \sim \gamma_0^{-1}/n \sim w_*/n$.
Meanwhile, they have sampling overhead $X \sim e^{\mathcal{O}(n)}$ due to the exponential growth in the Chebyshev polynomial coefficients.
Hence, we have $X \sim e^{\mathcal{O}(w_*/\delta w)}$.

\emph{Monotonic analytic filters.}---Our second construction is an analytic construction that avoids the undesirable oscillations of the Chebyshev filter functions, at the cost of slightly worse scaling of the sampling overhead.
We introduce the following filter functions [Fig.~\ref{fig:filter-comparison}(b)], parameterized by the real number $r \in [0,\infty)$, 
\begin{equation}
 h^{\text{delta}}_r(w;w_*)=\frac{C_r}{w}
 \exp\!\left[-r\left(\frac{\kappa_rx}{w}
               +\frac{w}{\kappa_rx}\right)\right],
               \qquad
 h^{\text{Heaviside}}_r(x,w)= \int_{w_*}^\infty dw_*' h^{\text{delta}}_r(w;w_*').
 \label{eq:endmatter-monotonic}
\end{equation}
where $\kappa_r=\frac{K_2(2r)}{K_1(2r)}$, $C_r=\frac{\kappa_r}{2K_1(2r)}$, and $K_\nu(z)$ are the modified Bessel functions of the second kind.
For each $w$ and $r$, the first filter function is a normalized probability distribution over $w_*$ with mean $w$.
The distribution is tightly peaked about its mean,
with a width scaling as a small fraction, $\sim \! 1/2\sqrt{r}$, of the mean, $( \overline{w_*^2}-\overline{w_*}^2 )^{1/2} = \frac{w}{\sqrt{2r}} ( 1 + \mathcal{O}(1/r) )$.
The parameter $r$ controls the degree of smoothing under the filter function.
Larger $r$ (i.e.~less smoothing) will bring the filter function closer to the delta function, but will also incur a higher sampling overhead.

The precise functional form of $h^{\text{delta}}_r(w;w_*)$ was chosen with three ingredients in mind: 
(i) a width that scales as a tunable fraction of the mean, which we find is the natural scaling to keep the sampling overhead $X(w_*)$ non-increasing in $w_*$,
(ii) steep vanishing as $w,w_* \rightarrow 0$, to avoid unphysical support on negative weights, 
and (iii) a sufficiently simple expression to enable analytic computation of its inverse Laplace transform,
\begin{equation}
 h^{\text{delta}}_r(w,w_*)=\int_0^\infty d\gamma\,e^{-\gamma w}
 C_rJ_0\!\left(2\sqrt{r\kappa_r}\sqrt{w_*\gamma-r/\kappa_r}\right)
 \Theta(w_*\gamma-r/\kappa_r),
 \label{eq:endmatter-monotonic-inverse}
\end{equation}
where $J_\nu(z)$ is the Bessel function of the first kind.
This expression gives the filter function coefficients for inserted depolarizing noise.
Integrating the first filter function from $w_*$ to infinity gives our second filter function, which rises monotonically from zero to one and therefore approximates a Heaviside function.
The corresponding expression for its inverse Laplace transform involves $J_1$ in place of $J_0$.
The sampling overhead of each filter function grows as $e^{\mathcal{O}(r)}$, owing to the normalization coefficient $C_r$.
This is exponential in the \emph{square} of the inverse width, and hence sub-optimal compared to our Chebyshev and numerical filter functions at large $r$.
In practice, we truncate the integral over $\gamma$ at a large order one upper bound, chosen such that the truncation has a negligible effect at any $w \geq 1$.

\begin{figure*}[t]
  \centering
  \includegraphics[width=\textwidth]
    {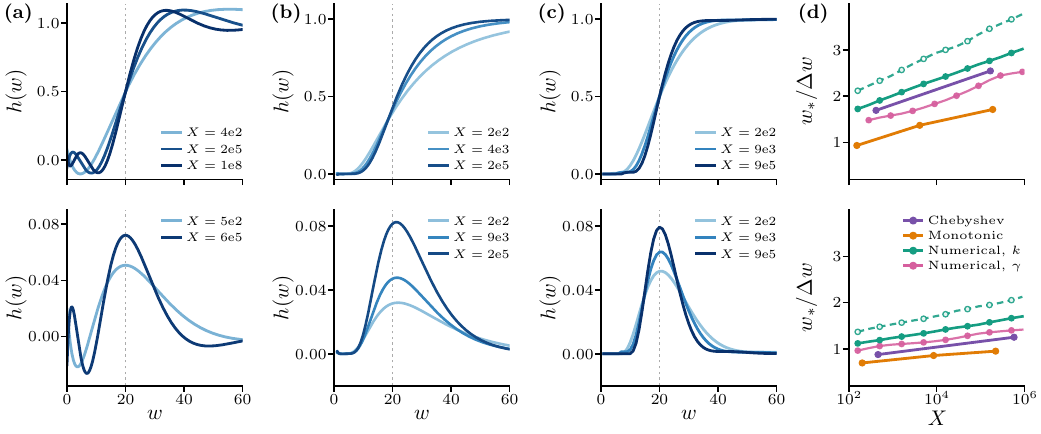}
  \caption{Smoothed filter functions that approximate the Heaviside function $\Theta(w-x)$ (top) and delta function $\delta(w-x)$ (bottom) for $x = 20$.
  \textbf{(a)} Fourier-to-Chebyshev filter functions, for degrees $n=3,5,7$
  (top) and $n=5,7$ (bottom).
  \textbf{(b)} Monotonic filter functions, for $r=1,2,3$ (top) and
  $r=1.75,2.5,3$ (bottom).
  \textbf{(c)} Numerically optimized filter functions, for $N=100$ using
  random Pauli insertions.
  In \textbf{(a-c)}, color denotes the sampling overhead
  $X=(\sum_r |h_r|)^2$, listed in each legend, using random Pauli insertions.
  \textbf{(d)} The inverse relative width $w_*/\Delta w$ versus $X$ for each filter function.
  Here, $\Delta w$ is equal to the FWHM for the delta filter functions, and analogous for the Heaviside filter functions.
  The numerically optimized filter functions use either random Pauli insertions ($k$) or
  noise rates ($\gamma$); open dashed and filled solid
  curves denote $N=50$ and $N=100$, respectively.}
  \label{fig:filter-comparison}
\end{figure*}

\emph{Numerically optimized filters.}---In practice, we find that numerical optimization of the filter functions can yield more desirable properties with smaller sampling overhead compared to our analytic constructions [Fig.~\ref{fig:filter-comparison}(c)].
To construct our numerical filter functions, we consider a fixed grid of noise rates $\gamma$ or Pauli insertion weights $k$, chosen sufficiently finely that the resulting filter function does not depend on the grid spacing.
For the first filter function, we then solve the convex optimization task of minimizing a weighted squared difference from the delta function, while requiring the filter function to remain positive, normalized, and rising monotonically toward a single peak at $w=w_*$ before falling monotonically from then on.
For the second filter function, we solve the convex optimization task of minimizing a weighted squared difference from the Heaviside function, while requiring the filter function to remain monotonically increasing between zero and one, and to equal $1/2$ at $w=w_*$. 
In both cases, the weights are chosen to penalize deviation farther from $w_*$ more heavily than those near $w_*$.
These optimization conditions can be solved efficiently and are observed to yield smooth filter functions without oscillating features.

\emph{Sampling overhead comparison.}---We compare the inverse width of each filter function versus its sampling overhead in Fig.~\ref{fig:filter-comparison}(d).
The numerically optimized filter functions perform best in all cases; for random Pauli insertions, the width also improves at smaller $N$.
For the Chebyshev and numerical filter functions, the observed data are consistent with an exponential scaling of the sampling overhead with the inverse relative width, $X \sim e^{\mathcal{O}(w_*/\delta w)}$.

\newpage
{\centering
\large\bfseries
Supplementary Material: Probing the classical complexity of quantum dynamics experiments
\par}

\tableofcontents

\section{The reactivity of quantum experiments} \label{sec: reactivity}

In this section, we provide a more detailed review of the Pauli path framework and our definition of the reactivity, as well as comparisons between the reactivity and other metrics for complexity in large quantum systems.

\subsection{Review of the Pauli path framework} \label{sec: framework}

We consider quantum experiments that measure an observable $O$ after performing time-evolution on an initial state $\psi$.
We will generally have in mind that the observable $O$ is local; however, this is not fundamental to the Pauli path framework nor the definition of the reactivity.
For simplicity, we focus on unitary time-evolution generated by a depth $T$ quantum circuit, $U = U_T U_{T-1} \cdots U_2 U_1$ and consider systems consisting of $N$ qubits with Hilbert space dimension $d=2^N$.
Our analysis readily generalizes to continuous-time Hamiltonian evolution, as well as evolution under noisy circuits or Lindbladian dynamics.

For any state $\psi$, circuit $U$, and observable $O$, the expectation value of the observable can be decomposed as a sum of so-called \emph{Pauli paths}~\cite{gao2018efficient,aharonov2023polynomial,schuster2025polynomial}, as
\begin{equation}
    \tr( O U \psi U^\dagger ) = \sum_{\vec{P}} A(\vec{P}; \psi, U, O).
\end{equation}
Here, each Pauli path, $\vec{P} = (P_0,\ldots,P_T)$, corresponds to a sequence of Pauli operators $P_t$ for each layer, $t = 0,\ldots, T$, of the circuit.
Each Pauli operator is a tensor product, $P = \bigotimes_{i=1}^n P_i$, of single-qubit Pauli operators, $P_i \in \{ \identity_i, X_i, Y_i, Z_i \}$, on each qubit of the system.
The contribution of a given Pauli path $\vec{P}$ is equal to the product,
\begin{equation}
    A(\vec{P}; \psi, U, O) = \tr( \psi P_0 ) \cdot \prod_{t=1}^T \overline{\tr}( P_t U_t P_{t-1} U_t^\dagger ) \cdot \overline{\tr}( P_T O ),
\end{equation}
of (i) the overlap of the first Pauli operator $P_0$ with the initial state $\psi$, (ii) the product of transition amplitudes from each instantaneous Pauli operator $P_{t-1}$ to its subsequent Pauli operator $P_t$, under the circuit layer $U_t$, and (iii) the overlap of the final Pauli operator $P_T$ with the observable $O$.
This expression is easily derived by inserting a resolution of the identity, $X = \sum_P \overline{\tr}(XP) P$ where $\overline{\tr}(\cdot) \equiv \frac{1}{2^n}\tr(\cdot)$, of the observable in the Pauli basis at every time step. 
This is analogous to the usual Feynman path integral, but over a complete basis of operators instead of states.

We remark that the contribution $A(\vec{P})$ of a Pauli path $\vec{P}$ can be large only if all three factors (i-iii) are large.
This sharply differs from traditional metrics of operator support, such as operator size distributions, out-of-time-order correlators, and operator entanglement entropies, which consider time-evolved observables independent of any initial state [i.e.~factors (ii-iii) but not (i)].
Similarly, it differs from traditional metrics of state support, such as quantum weight enumerators, magic, or entanglement entropies, which consider time-evolved states independent of any observable [i.e.~factors (i-ii) but not (iii)].
We provide a further discussion of these differences in later subsections.
The consideration of all three factors (i-iii) in combination is essential to capturing the behavior of quantum experiments, which necessarily involve both an initial state $\psi$ and a measured observable $O$.

\subsection{Review of connection to experimental noise} \label{sec: framework noise}

For completeness, let us also briefly recall the connection between the Pauli path framework and experimental noise.
This connection motivated the development of the Pauli path framework in previous works~\cite{gao2018efficient,aharonov2023polynomial,schuster2025polynomial}, and will be the inspiration for our measurement protocols for the reactivity introduced later on.

Consider the same circuit $U$ as in the previous subsection, but in which local depolarizing noise channels of strength $\gamma$ are applied to each qubit at each circuit layer.
For an individual qubit $i$, we write the action of a single local depolarizing channel on a qubit $i$ as $\mathcal{D}_{\gamma,i}(\rho) = e^{-\gamma} \rho + (1-e^{-\gamma}) \tr_i(\rho)$.
In the Pauli basis, this translates to $\mathcal{D}_{\gamma,i}(P) = e^{-\gamma} P$, for each  Pauli operator $P$ with non-identity support on qubit $i$, and $\mathcal{D}_{\gamma,i}(P) = P$, for each $P$ with identity support on qubit $i$.
Taking the  product over all qubits $i = 1,\ldots,n$ at a single circuit layer, $\mathcal{D}_\gamma = \prod_i \mathcal{D}_{\gamma,i}$, we find $\mathcal{D}_\gamma(P) = e^{-\gamma w[P]}$.
Here, as in the main text, the weight, $w[P] = \sum_{i=1}^n (1-\delta_{P_i, \identity})$, of the Pauli operator $P$ is defined as its total number of non-identity elements.
We are interested in the noisy circuit, $\mathcal{C}_\gamma = \mathcal{D}_\gamma \circ \mathcal{U}_T \circ \mathcal{D}_\gamma \ldots  \mathcal{D}_\gamma \circ \mathcal{U}_1 \circ \mathcal{D}_\gamma$, where $\mathcal{U}_t(\cdot) \equiv U_t (\cdot) U^\dagger_\tau$ denotes conjugation by $U_t$.

From the above, the expectation value in the noisy quantum circuit can immediately be written as~\cite{gao2018efficient,aharonov2023polynomial,schuster2025polynomial},
\begin{equation}
    \tr( O \mathcal{C}_\gamma( \psi ) ) = \sum_{\vec{P}} e^{-\gamma w[\vec{P}]} A(\vec{P}; \psi, U, O),
\end{equation}
where $w[\vec{P}] = \sum_{\tau=0}^t w[P_t]$ is the weight of the Pauli path $\vec{P}$, and $A(\vec{P}; \psi, U, O)$ are the Pauli path amplitudes of the ideal noiseless circuit.
Hence, the Pauli path amplitudes $A(\vec{P}; \psi, U, O)$ are intimately tied to the response of the quantum experiment to local depolarizing noise.
We remark that this connection can (trivially) be adapted to any unital single-qubit noise channels, by performing appropriate single-qubit rotations of the Pauli basis and minor modifications to the definition of the weight~\cite{schuster2025polynomial}.

\subsection{Definition of the reactivity} \label{sec: reactivity definition}

Let us now turn to our definition of the reactivity.
As discussed in the main text, our aim is to capture the extent to which a quantum experiment is influenced by local information versus non-local information.
This is naturally achieved within the Pauli path framework, as introduced in the previous subsections.

One can quantify the locality of a Pauli path in different ways, which lead to a different variants of the reactivity.
We focus on two variants in our work.
\begin{enumerate}
\item First, we quantify the locality of a Pauli path $\vec{P}$ via its \emph{weight} at a particular time $s$, 
\begin{equation}
    w[P_\tau] \equiv \sum_{i=1}^n (1-\delta_{P_{s,i},\identity}).
\end{equation}
From this, we define the \emph{reactivity function}, $R_\tau(w)$, of a quantum experiment as the total contributions from all Pauli paths with weight $w$ at time $s$,
\begin{equation}
    R_\tau(w; \psi, U, O) = \sum_{\vec{P}} \delta_{w[P_\tau], w} \cdot A(\vec{P}; \psi, U, O).
\end{equation}
We will often neglect the labels $\psi, U, O$ when they are clear from context.

\item Second, we quantify the locality of a Pauli path $\vec{P}$ via its summed weight over all times, 
\begin{equation}
    w[\vec{P}] = \sum_{\tau=0}^t w[P_\tau] = \sum_{\tau=0}^t \sum_{i=1}^n (1-\delta_{P_{s,i},\identity}).
\end{equation}
From this, we define the \emph{global reactivity function}, $R_g(w)$, as the total contribution from all Pauli paths with summed weight $w$,
\begin{equation}
    R_g(w; \psi, U, O) = \sum_{\vec{P}} \delta_{w[\vec{P}],w} \cdot A(\vec{P}; \psi, U, O).
\end{equation}
We will again neglect the labels $\psi, U, O$ when they are clear from context.
\end{enumerate}

Our two definitions of the reactivity closely resemble one another, and are likely governed by similar descriptions in physical settings.
Nonetheless, they are adapted to different motivations and experimental contexts.

Our first definition, $R_\tau(w)$, is designed to most closely capture the complexity of practical local information classical algorithms.
These algorithms typically store an approximation of a time-evolved state or operator at each fixed moment in time, and update this approximation time-step-by-time-step through a quantum circuit or dynamics.
In general, the complexity of such an algorithm would be determined by the limiting time step, i.e.~the step at which the Pauli paths yielding significant contributions are most non-local.
To identify this time, one could measure or compute the reactivity at a small handful of times $\tau$, and observe for which value of $\tau$ the reactivity is the largest.
In our numerical studies for this work, we find that the limiting time $\tau$ is usually the initial time, $\tau=0$, for the quantum many-body dynamics experiments we have considered; we study the slice dependence in more detail in \cite{companion}.
(For infinite-temperature correlation functions, the limiting time is instead typically $\tau = t/2$.)

Our second definition, $R_g(w)$, is best suited to capturing the impacts of physical noise in quantum experiments, which will in general accumulate over time.
In particular, if one performs a quantum circuit with local depolarizing noise of strength $\gamma$ at each qubit and circuit layer, one finds that the noisy expectation value is controlled by the global reactivity function, $\tr(O\mathcal{C}_\gamma(\psi)) = \sum_w e^{-\gamma w} R_g(w)$.
Our primary motivation for introducing $R_g(w)$ is that we believe it may be useful in future work, for example in quantum error mitigation or quantum simulation algorithms, in which understanding the response of an expectation value to experimental errors is invaluable to the interpretation and design of quantum experiments.
On a distinct note, we remark that $R_g(w)$ is also closely connected to existing rigorously-proven classical algorithms for simulating noisy quantum circuits~\cite{gao2018efficient,aharonov2023polynomial,schuster2025polynomial,gonzalez2025pauli}, which typically enumerate all low-weight Pauli paths in a circuit regardless of their configurations in space and time.
Unfortunately, such algorithms are often less useful from a numerical standpoint, owing to a large practical number of terms in this enumeration.

Finally, while we have focused on these two definitions of the reactivity in our work, further adaptations are easily possible.
As one example, one could define a \emph{temporal reactivity function}, $R_x(w) = \sum_{\vec{P}} \delta_{w_x[\vec{P}],w} \cdot A(\vec{P})$, which captures the total contribution from all Pauli paths with weight $w$ on a fixed \emph{spatial} slice $x$, $w_x[\vec{P}] = \sum_{\tau=0}^t (1-\delta_{P_{t,x}})$.
We speculate that this might be useful for describing the complexity of transverse tensor network contraction methods~\cite{berezutskii2025tensor}, which simulate a quantum circuit qubit-by-qubit rather than layer-by-layer.
As a second example, one could define a \emph{local reactivity parameter}, $R_{x,s} = \sum_{\vec{P}} (1-\delta_{P_{s,x},\identity})$, which captures the total contribution from all Pauli paths with non-identity support on a given qubit $x$ at a given time $s$.
We speculate that this could serve as a useful one-number metric for the importance of a particular circuit gate (i.e.~at location $x$ and time $s$) within a given experiment.

\subsection{Comparison to operator size distributions}

In this section, we review a second, existing, measure of operator complexity, the \emph{operator size distribution}~\cite{roberts2018operator,schuster2022operator}.
Our purpose in reviewing this is (i) to emphasize the fundamental difference between operator size distributions, which also sort quantum information by the Pauli weight, and the reactivity function, and (ii) because our experimental protocols for measuring the reactivity immediately extend to measuring the operator size distribution as well (see later sections in the Supplementary Material).

In physics literature, the weight of a Pauli operator is often referred to as its \emph{size}.
To define the size distribution of a time-evolved operator, let us first decompose the operator as a sum of Pauli operators, 
\begin{equation}
    U^\dagger O U = \sum_{P} a(P; U, O) \, P.
\end{equation}
The amplitudes, $a(P; U, O)$, are normalized, in the sense that $\sum_P | a(P; U, O) |^2 = \overline{\tr}( O^\dagger O )$, where $\overline{\tr}(\cdot) \equiv \tr(\cdot) / 2^n$ is the normalized trace. 
We assume without loss of generality that $O$ is normalized, so that $\sum_P | a(P; U, O) |^2 = \overline{\tr}( O^\dagger O ) = 1$.
The size distribution, $P(w)$, of the operator is given by,
\begin{equation}
    P(w; U, O) = \sum_{P \, : \, w[P] = w} | a(P; U, O) |^2 = \overline{\tr}\big( U^\dagger O U \cdot \mathcal{P}_w [ U^\dagger O U ] \big).
\end{equation}
As for the reactivity, we will often suppress the labels $U,O$ when clear from context.
The size distribution is a normalized probability distribution, obeying $P(w) \geq 0$ and $\sum_w P(w) = 1$.
This contrasts with the reactivity function, which is neither positive nor (in general) normalized to one.

Let us now address purpose (i).
Unlike the reactivity, the operator size distribution is defined independently of any initial state $\psi$.
In practice, this means that the operator size distribution provides a reasonable proxy for the complexity of \emph{exact} simulation of a time-evolved operator $U^\dagger O U$, in the absence of any initial state.
This is in contrast to the reactivity function, which provides a proxy for the complexity of \emph{approximate} simulations, once the operator's expectation value is taken in an initial state $\psi$.

Let us now address purpose (ii).
The application of our experimental protocols to measuring size distributions follows because the size distribution can be viewed as the reactivity function of the Loschmidt echo circuit, or ``mirror circuit'', at its middle layer.
To see this, let $(V_{2t},\ldots,V_{t+1},V_t,\ldots,V_0) = (U^\dagger_1,\ldots,U^\dagger_{t}, U_t,\ldots,U_1)$ denote the mirror circuit and consider the initial state, $\psi_O = (\identity + O)/2^n$ (we assume that $O$ is a Pauli operator for simplicity).
We then have $P(w; U, O) = R_\tau(w; \psi_O, V, O)$.
This explains the size distribution's normalization to one, i.e.~the value of the Loschmidt echo for  unitary dynamics.
It also explains its positivity: in the Loschmidt echo, the ``initial state'' can be viewed as being proportional to the operator $U^\dagger O U$ itself, thus the contribution of each Pauli operator is squared and hence positive.

\subsection{Comparison to other complexity metrics}
\label{sec:appendix-othermetrics}

In the previous subsection, we saw that the operator size distribution could be interpreted as a proxy for the complexity of performing exact operator time-evolution using LI algorithms, while the reactivity function could be interpreted as a proxy for the complexity of performing  time-evolution that only seeks to the recover the expectation value outcome of a given quantum experiment.
In this subsection, we briefly remark that a similar contrast holds with many other complexity measures as well.
For example, the entanglement entropy, magic, and quantum weight enumerators~\cite{rains1998quantum,miller2026experimental} all seek in different ways to characterize the complexity of a given quantum state, in the absence of any observable.
Meanwhile, the operator entanglement entropy, operator magic, and operator size distribution all seek to characterize the complexity of a given quantum operator, in the absence of any initial state.
Other complexity metrics, such as the temporal entanglement entropy~\cite{hastings2015connecting,foligno2023temporal,vilkoviskiy2025temporal}, cannot cleanly be sorted as characterizing either an individual operator nor state, yet seem to display  qualitative features in common with other measures above, such as extensive values in random quantum circuits in the absence of noise~\cite{foligno2023temporal,vilkoviskiy2025temporal}.

In contrast to those listed above, several other more recent complexity measures \emph{do} bear a close connection and similar physics to the reactivity.
The most notable examples are the operator backflow~\cite{rakovszky2022dissipation,von2022operator} and effective quantum volume~\cite{kechedzhi2023effective}, whose connections to the reactivity are discussed in the main text . 
Other more specialized approximate measures of complexity have been introduced in specific contexts.
For example, out-of-time-order correlator experiments explicitly foil local information classical simulation algorithms, by refocusing large Pauli strings using time-reversal prior to measurement~\cite{xu2022scrambling,mi2021information,abanin2025constructive}.
This yields a very high reactivity by design.
Nevertheless, one finds that such experiments can sometimes be susceptible to more advanced classical simulation algorithms, for example, ones that approximate Pauli path dynamics using a classical probabilistic process with local quantum corrections added in~\cite{abanin2025constructive}.
The complexity of such classical algorithms is related to the typical locality of the relevant quantum corrections~\cite{abanin2025constructive}, in a manner broadly reminiscent to the reactivity function.
For these reasons, we hope that the central ideas of our work, i.e.~the formal definition of the reactivity function, and our experimental protocols to measure it, might also bear fruit in such other contexts as well.
Indeed, the connection between locality and complexity is found in many contexts, as is also the connection between locality and response to perturbations.

\section{Numerical results: further details}
\label{sec:appendix-numerics}

In this section, we present further details and additional results from our numerical studies of the reactivity function and the complexity of the sparse Pauli dynamics (SPD) algorithm.

\subsection{Models and settings}
\label{sec:appendix-models-settings}

We study two paradigmatic spin models. The one-dimensional mixed-field
Ising model (MFIM) on a chain of $N=51$ sites,
\begin{equation} \label{eq:H-mfim}
  H_{\rm MFIM} = \sum_{i} Z_i Z_{i+1}
  + g_x \sum_i X_i + g_z \sum_i Z_i,
\end{equation}
with $g_x = 1.4$ and $g_z = 0.9045$, a standard nonintegrable parameter
choice; and the two-dimensional transverse-field Ising model (TFIM) on
a $5\times5$ square lattice with open boundaries,
\begin{equation} \label{eq:H-tfim}
  H_{\rm TFIM} = J \sum_{\langle ij\rangle} Z_i Z_j
  + g_x \sum_i X_i,
\end{equation}
with $J=-1$ and $g_x = 2$~\cite{haghshenas2026digital}.

All simulations evolve the local observable $O = Z_i$ on the center
site in the Heisenberg picture ($Z_{25}$ and $Z_{3,3}$, respectively), using sparse Pauli dynamics (SPD) \cite{begusic2025real} with
second-order Trotterization at step $\Delta t = 0.05$ (MFIM) and
$\Delta t = 0.01$ (TFIM), coefficient truncation at threshold
$\epsilon_{\text{SPD}}$, and readout interval $0.1$ over the time windows
$t \in [0,10]$ (MFIM) and $t \in [0,4]$ (TFIM). For each model the
evolution is repeated over a sweep of truncation thresholds
$\epsilon_{\text{SPD}}^{(0)} \le \epsilon_{\text{SPD}}^{(1)}  \le \cdots$, with $\epsilon_{\text{SPD}}^{(0)} $ the
smallest, ranging from $4.21 \cdot 10^{-6}$ to $ 0.01$ for TFIM and from  $4 \cdot 10^{-6}$ to $0.01 $ for MFIM, respectively.

The initial states are product states. For the MFIM we use six
states: the four translation-invariant states
$|+\rangle^{\otimes N}$, $|{+}i\rangle^{\otimes N}$,
$|0\rangle^{\otimes N}$, and $|1\rangle^{\otimes N}$, and the two
domain-wall states
$|{+}\rangle^{\otimes 25}|{-}\rangle^{\otimes 25}|{+}\rangle$ and
$|0\rangle^{\otimes 25}|1\rangle^{\otimes 25}|0\rangle$, in which a
domain of $25$ flipped sites is embedded in the $N = 51$ site chain,
with the measured site at its first site. 
For the TFIM we use eight
states: the seven tilted states
$\ket{\theta_k} \equiv \bigotimes_i \big(\cos\tfrac{\theta_k}{2}\ket{0}_i
+ \sin\tfrac{\theta_k}{2}\ket{1}_i\big)$~\cite{haghshenas2026digital},
with $\theta_k = (k-3)\,\pi/18$ for $k = 0,\dots,6$, where the
extremal angle $\theta_0 = \arcsin(g_x J/4) = -\pi/6$ coincides with
the mean-field ground-state tilt of the infinite lattice; and the
$Z$-basis domain-wall state
$|0\rangle^{\otimes 12}|1\rangle^{\otimes 12}|0\rangle$, in which a
domain of $12$ flipped sites is embedded in the $5\times 5$ lattice
with respect to its row-major ordering. The analogous $X$-basis
domain-wall state shows no dynamics---it is an eigenstate of the
global spin flip $\prod_i X_i$ conserved by $H_{\rm TFIM}$, so that
$\langle O(t)\rangle$ vanishes identically---and we do not consider
it further.

The main text shows two states per model: $|+\rangle^{\otimes N}$
and $|{+}i\rangle^{\otimes N}$ for the MFIM, and the tilted state
$\theta = -\pi/18$ and the $Z$-basis domain-wall state for the TFIM
[Fig.~\ref{fig:numerics_maintext}].
Sec.~\ref{sec:appendix-states} extends the diagnostics to the full
state sets.

\subsection{Definitions of the diagnostics}
\label{sec:appendix-definitions}

\emph{Tail edge $w^{*}(t)$.}
For a fixed SPD threshold $\epsilon_{\text{SPD}}=\epsilon_{\text{SPD}}^{(0)}$, we define the net signed tail above weight $w$,
\begin{equation} \label{eq:S}
  H(t,w) \equiv \Big|\sum_{w'\ge w} R_0(w',t)\Big|,
\end{equation}
so that $H(t,0)=|\langle O(t)\rangle|$. Because contributions of
opposite sign cancel, $H(t,w)$ is not monotone in $w$. For a tolerance
$\epsilon_{\rm tail}>0$ the tail edge is the smallest weight beyond
which the net tail remains below tolerance at every higher weight,
\begin{equation} \label{eq:wstar}
  w^{*}_{\epsilon_{\rm tail}}(t) \equiv
  \min\{\, w \,:\, H(t,w') \le \epsilon_{\rm tail}
  \;\;\forall\, w' \ge w \,\}.
\end{equation}
The contiguous-from-the-top requirement selects the weight at which
the net tail becomes, and stays, negligible, rather than an isolated
weight at which an accidental cancellation dips below tolerance. We
use $\epsilon_{\rm tail}=10^{-2}$; if no such $w$ exists, $w^{*}$
is set to the largest weight resolved. To desensitize the result to
the exact tolerance we report the median over a narrow band,
\begin{equation} \label{eq:wstar_jitter}
  w^{*}(t) = \mathrm{med}\big\{\,
  w^{*}_{c\,\epsilon_{\rm tail}}(t) : c \in \{0.9,1.0,1.1\}\,\big\},
\end{equation}
in which we multiply the tail tolerance by $c = 0.9,1.0,1.1$, respectively.

\emph{Memory cost $n_{\rm Pauli}(t)$.}
SPD propagates the Heisenberg operator independently of the initial
state, storing $N_{\rm Pauli}(\epsilon_{\rm SPD},t)$ strings; the state enters
through the accuracy at which a threshold reproduces
$\langle O(t)\rangle$. Let
$f(t,\epsilon_{\rm SPD})=\sum_w R_0(w,t;\epsilon_{\rm SPD})=\langle O(t;\epsilon_{\rm SPD})$ denote the truncated
expectation value and $\bar f(t)$ the median of $f$ over the
$m_{\rm ref}=3$ smallest thresholds, forming our ground truth. With
$D(t,\epsilon)=|f(t,\epsilon_{\rm SPD})-\bar f(t)|$, we select the loosest
threshold connected to the tightest by an unbroken run of agreement
within a band $\delta$,
\begin{equation} \label{eq:epsstar}
  \epsilon^{*}_{{\rm SPD},\delta}(t) \equiv
  \max\{\, \epsilon_{\rm SPD} \,:\, D(t,\epsilon_{\rm SPD}') \le \delta
  \;\;\forall\, \epsilon_{\rm SPD}' \le \epsilon_{\rm SPD} \,\},
\end{equation}
which guards against a coarse threshold that agrees with the reference
by accident (e.g.\ at zero-crossings of $\langle O(t)\rangle$, or at
late times where a maximally truncated operator predicts zero). The
reported cost is
$n_{\rm Pauli}(t) = N_{\rm Pauli}(\epsilon^{*}_{{\rm SPD},\delta}(t),t)$ with
$\delta = 0.1$, again taken as the median over a narrow band of $\delta$,
\begin{equation} \label{eq:npauli_jitter}
  n_{\rm Pauli}(t) = \mathrm{med}\big\{\,
  N_{\rm Pauli}(\epsilon^{*}_{{\rm SPD},c\delta}(t),t) :
  c \in \{0.9,1.0,1.1\}\,\big\}.
\end{equation}

\emph{Time window.}
To suppress observed sharp oscilations as a function of time, $H(t,w)$ and
$D(t,\epsilon_{\rm SPD})$ are replaced by their root-mean-square values over a
sliding window of width $\Delta t_w = 0.3$ (three readout points)
before Eqs.~\eqref{eq:wstar} and~\eqref{eq:epsstar} are applied;
$\Delta t_w = 0$ recovers the per-time definitions.

\emph{Decayed regime.}
The tail edge is informative only while the observable is resolvable
above the tail tolerance $\epsilon_{\text{tail}}$. In all time-resolved figures we shade the
late-time region in which the windowed $|\langle O(t)\rangle|$ remains
below $\epsilon_{\rm tail}$, determined with the same
contiguous-from-the-top rule as Eq.~\eqref{eq:wstar}: shading begins at
the first time, after the maximum of $|\langle O(t)\rangle|$, from
which the windowed observable stays below tolerance. There the net
tail falls below tolerance at every weight, prediction of
$\langle O(t)\rangle$ becomes trivial, and $w^{*}(t)$ typically vanishes, while
$n_{\rm Pauli}(t)$ can continue to reflect operator structure of $O(t)$ that does not contribute to the expectation value $\langle O(t)\rangle$. Where $w^{*}$ remains transiently
finite inside the shaded region, the small expectation value arises
from cancellation between low- and high-weight contributions rather
than from an absent tail.

\subsection{Robustness of the diagnostics}
\label{sec:appendix-robustness}

The diagnostics of Sec.~\ref{sec:appendix-definitions} carry four
numerical parameters: the SPD truncation threshold $\epsilon_{\text{SPD}}$ at
which the reactivity is computed, the tail tolerance
$\epsilon_{\rm tail}$ defining $w^{*}(t)$, the accuracy band
$\delta$ selecting $n_{\rm Pauli}(t)$, and the Trotter step
$\Delta t$ of the underlying evolution. Here we verify that the
correspondence between $w^{*}(t)$ and $n_{\rm Pauli}(t)$ reported in
the main text does not hinge on any of these choices.
Figure~\ref{fig:appendix-robustness} examines each parameter in turn
for a representative hard state of the TFIM, the tilted state
$\theta = -\pi/18$ of Fig.~\ref{fig:numerics_maintext}; the remaining
main-text states behave analogously (not shown).
We emphasize the distinction between the narrow $\pm10\%$ jitter
medians of Eqs.~\eqref{eq:wstar_jitter}
and~\eqref{eq:npauli_jitter}, which are part of the definitions and
regularize discretization artifacts, and the wide
$\times 0.5$--$\times 2$ sweeps displayed here, which probe the
sensitivity of the diagnostics to the overall tolerance scale.
Accordingly, panels~(b,c) are computed without the jitter medians,
so that the bare sensitivity is visible; at unit scale factor they
reproduce the baseline values $\epsilon_{\rm tail} = 10^{-2}$ and
$\delta = 0.1$, differing from the main-text curves only by the
omitted median. The sweeps in panels~(a--c) are pure re-analyses of
fixed SPD data, so no run-to-run variation enters; only panel~(d)
involves separate simulations. In panels~(b--d), shading marks the
decayed regime of Sec.~\ref{sec:appendix-definitions}, determined at
the baseline tolerance.

\paragraph{SPD truncation threshold.}
The reactivity $R_{0}(w;\epsilon)$ is itself computed by SPD, so
Pauli paths with coefficients below $\epsilon$ are discarded
before the tail is formed; the tail edge is then evaluated at the
reference threshold $\epsilon_{\rm ref} = \epsilon_{\rm SPD}^{(0)}$
(Sec.~\ref{sec:appendix-definitions}). Panel~(a) shows $|R_{0}(w)|$
at fixed $t = 2.0$ for five thresholds spaced by factors of
$\approx 2$. The finer thresholds coincide up to intermediate
weights and separate only in the deep tail, which extends to larger
weights as $\epsilon_{\rm SPD}$ decreases; the coarsest threshold shows
visible deviations already at intermediate weights. This is
the implicit weight dependence of LI algorithms discussed in the
main text, visible directly in the data: although SPD truncates on
coefficient magnitude alone, raising the threshold depletes the
reactivity from the high-weight end, so that coefficient truncation
acts, in effect, as a truncation on large weights. At this time the
tail edge [$w^{*}(t=2.0) = 9$, cf.~panel~(b)] lies within the region
over which the finer thresholds agree.
Where the edge instead approaches the largest weight retained, the
true tail extends beyond the available SPD resolution and
$w^{*}(t)$ should be read as a lower bound; the high-$w^{*}$
plateaus reported here are correspondingly conservative.

\paragraph{Tail tolerance $\epsilon_{\rm tail}$.}
Panel~(b) recomputes $w^{*}(t)$ with the tolerance scaled by factors
$\{0.5,\,2^{-1/2},\,1,\,2^{1/2},\,2\}$. The curves are nested by
construction, since a looser tolerance can only lower the contiguous
edge of Eq.~\eqref{eq:wstar}. Through the rise and plateau they
remain within two to three weight units of one another, and the
rise--plateau--decay structure is preserved. Approaching the decayed
regime, the curves collapse to zero at slightly different,
tolerance-ordered times, while showing qualitatively the same behavior. 

\paragraph{Accuracy band $\delta$.}
Panel~(c) recomputes $n_{\rm Pauli}(t)$ with the band $\delta$
scaled by the same factors. Enlarging $\delta$ can only raise the
selected threshold $\epsilon^{*}_{\rm SPD, \delta}(t)$ of
Eq.~\eqref{eq:epsstar}, so looser bands $\delta$ certify coarser runs with typically
smaller counts. The spread is strongly time-dependent. Through the
rise and at the peak, where the cost is large, the curves agree to
within about half a decade; in the late-time decay the spread grows
to two to three decades. This is a late-time effect: as
$\langle O(t)\rangle$ decays, predicting it to within $\delta$
approaches triviality, and progressively coarser thresholds satisfy
Eq.~\eqref{eq:epsstar} depending on $\delta$. The accuracy-matched
cost is thus intrinsically less sharply defined approaching the
decayed regime, and the comparison of $w^{*}(t)$ and
$\log n_{\rm Pauli}(t)$ there should be read with this in mind. The
rise--plateau--decay profile is unchanged at every $\delta$.

\paragraph{Trotter step $\Delta t$.}
Panel~(d) compares both diagnostics across
$\Delta t \in \{0.01,\,0.023,\,0.037,\,0.05\}$; the evolved
expectation value itself agrees across steps (not shown), so any
$\Delta t$-dependence of the diagnostics reflects the SPD
representation rather than the simulated physics. Both diagnostics
prove remarkably stable. The tail edge is essentially independent of
$\Delta t$ at all times: it is a structural property of the
Heisenberg-evolved operator, set by the underlying continuous-time
dynamics rather than by its discretization. The memory cost is
likewise nearly $\Delta t$-independent where it is large: through
the rise and at the peak, the accuracy-matched counts nearly
coincide, even though the number of strings generated at a fixed
truncation threshold differs strongly between step sizes. The
accuracy matching of Eq.~\eqref{eq:epsstar} thus largely removes the
dependence on the discretization, returning a cost characteristic of
the simulated dynamics rather than of its representation. Only in
the late-time decay do the counts separate, with finer steps
retaining more strings---the same window in which the
accuracy-matched cost is intrinsically less sharply defined, as
discussed above.

\paragraph{Implications.}
Across all four sweeps, the diagnostics are robust wherever the
observable is well resolvable and the cost is large: the tail edge, the
rise--plateau--decay profile of the memory cost, the co-movement of
the two diagnostics, and the peak cost itself are stable under
$\mathcal{O}(1)$ rescalings of the tolerances and a factor-of-five
change in the Trotter step. The residual sensitivity of both
diagnostics is confined to the late-time decay, where the prediction
target approaches triviality and the accuracy-matched cost is
intrinsically less sharply defined. Absolute counts in that window
remain both tolerance- and step-dependent and should not be
over-interpreted.

\begin{figure*}[t]
  \centering
  \includegraphics[width=\textwidth]{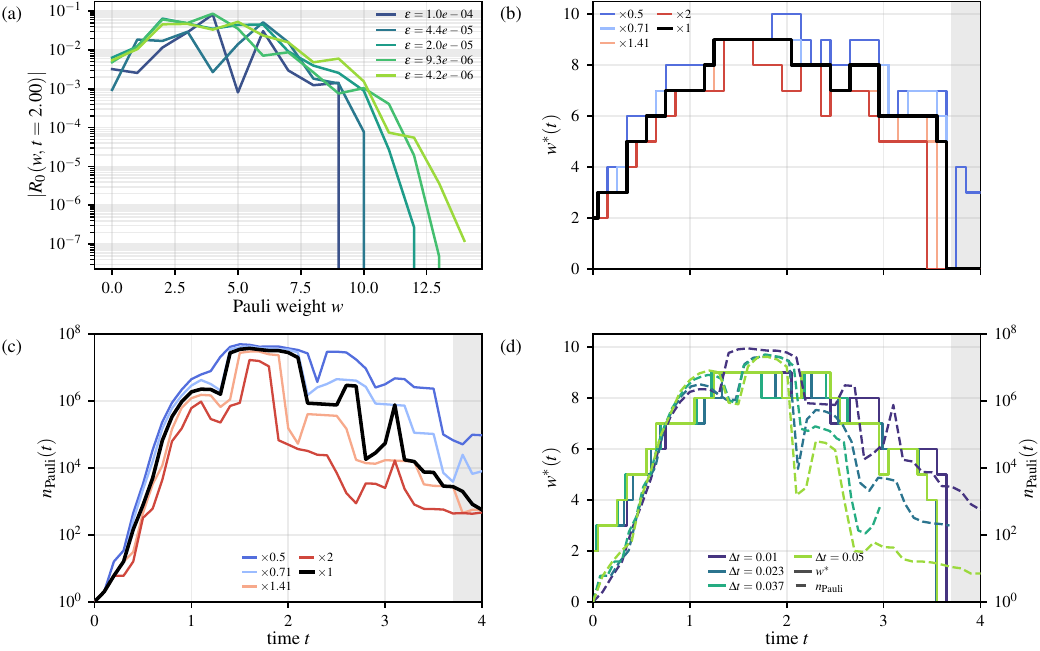}
  \caption{\textbf{Robustness of the diagnostics} (TFIM, tilted state
    $\theta = -\pi/18$, $\Delta t = 0.01$ unless stated).
    \textbf{(a)}~Reactivity $|R_{0}(w)|$ at $t={2.0}$ for five SPD
    truncation thresholds between $\epsilon_{\rm SPD} = 4.2\times10^{-6}$
    and $1.0\times10^{-4}$.
    \textbf{(b)}~Tail edge $w^{*}(t)$ at
    $\epsilon_{\rm tail}\times\{0.5, 0.71, 1, 1.41, 2\}$.
    \textbf{(c)}~Memory cost $n_{\rm Pauli}(t)$ at
    $\delta\times\{0.5, 0.71, 1, 1.41, 2\}$.
    \textbf{(d)}~$w^{*}(t)$ (solid, left axis) and $n_{\rm Pauli}(t)$
    (dashed, right axis, log scale) across Trotter steps
    $\Delta t \in \{0.01, 0.023, 0.037, 0.05\}$.
    Panels (b,c) are computed without the jitter medians of
    Eqs.~\eqref{eq:wstar_jitter} and~\eqref{eq:npauli_jitter}, so that
    the bare sensitivity is visible; in (b--d), shading marks the
    decayed regime of Sec.~\ref{sec:appendix-definitions}.}
  \label{fig:appendix-robustness}
\end{figure*}

\subsection{Comparison of tail-edge definitions}
\label{sec:appendix-edge}

The tail edge admits several reasonable definitions, differing in
whether the tail is measured by its $L_1$ mass,
$M_{\rm abs}(t,w) = \sum_{w'\ge w}|R_0(w',t)|$, or by its net signed
contribution $H(t,w)$ [Eq.~\eqref{eq:S}], and---since $S$ is not
monotone in $w$---whether the edge is the first weight at which the
quantity dips below tolerance (``literal'') or the smallest weight
beyond which it remains below tolerance (``contiguous'',
Eq.~\eqref{eq:wstar}). For the monotone $L_1$ mass the two criteria
coincide, leaving three distinct edges. At equal tolerance
$\epsilon_{\rm tail}$, they are nested at every time,
\begin{equation} \label{eq:nesting}
  w^{*}_{\rm abs}(t) \;\ge\; w^{*}_{\rm con}(t) \;\ge\;
  w^{*}_{\rm lit}(t),
\end{equation}
since $H(t,w) \le M_{\rm abs}(t,w)$ pointwise and the contiguous
criterion is stricter than the literal one; we verify the nesting
numerically at every readout time.
Figure~\ref{fig:appendix-edge-O-tfim} compares the three edges for
the two main-text TFIM states. As in
Fig.~\ref{fig:appendix-robustness}, the edges are computed without
the jitter median, and additionally without the time window of
Sec.~\ref{sec:appendix-definitions}, so that the bare per-time
behavior is visible.

The edges agree through the rise and most of the plateau, and differ
through two cancellation mechanisms. First, sign cancellations pull
the signed-sum edges below the $L_1$ edge: the literal edge dips
whenever the net tail happens to cross tolerance at an isolated
weight, and vanishes at zero-crossings of $\langle O(t)\rangle$
itself [where $H(t,0)=|\langle O(t)\rangle|$ forces
$w^{*}_{\rm lit}=0$]; the contiguous edge is affected only when the
entire upper tail net-cancels, producing sporadic transient dips.
Second, after $\langle O(t)\rangle$ has decayed, the two signed-sum
edges fall to zero---no net high-weight contribution remains---up to
the cancellation transients discussed in
Sec.~\ref{sec:appendix-definitions}, whereas the $L_1$ edge stays
finite, reflecting operator structure that is present but no longer
contributes.

We discard the literal edge as an unstable summary, and adopt the
contiguous signed-sum definition over the $L_1$ definition for a
reason of experimental accessibility: Pauli path spectroscopy
estimates smooth functionals of the signed reactivity,
$\sum_w h(w) R_\tau(w)$, so the signed tail is the quantity the
protocol can directly access, whereas the $L_1$ mass is not. Where
$\langle O(t)\rangle$ is resolvable above the tail tolerance, the
two agree up to sporadic cancellation dips of the contiguous edge.

\begin{figure}[t]
  \centering
  \includegraphics[width=\textwidth]{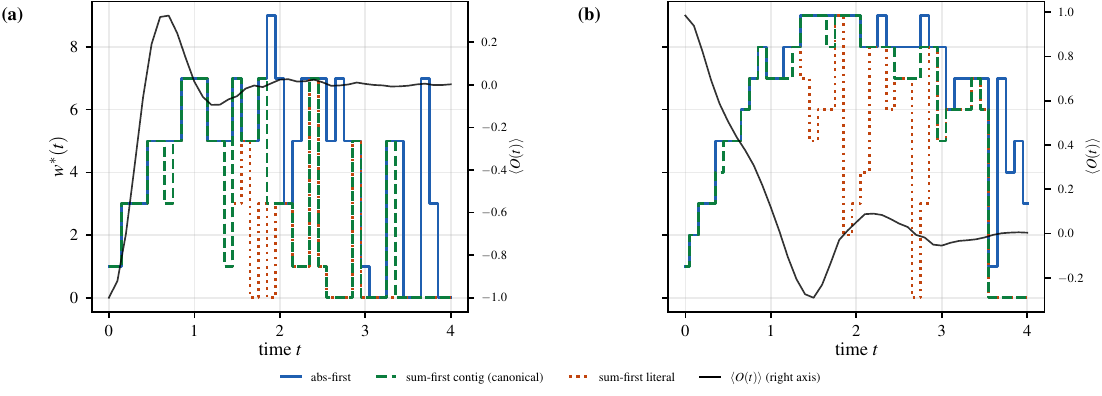}
    \caption{\textbf{Tail-edge definitions vs.\ $\langle O(t)\rangle$ (TFIM).}
    The three edges of Eq.~\eqref{eq:nesting} (left axis)---the $L_1$
    edge $w^{*}_{\rm abs}$ (solid), the contiguous signed edge
    $w^{*}_{\rm con}$ (dashed; the definition used in the main text),
    and the literal signed edge $w^{*}_{\rm lit}$ (dotted)---overlaid
    on $\langle O(t)\rangle$ (right axis, black), for \textbf{(a)}~the
    tilted state $\theta = -\pi/18$ and \textbf{(b)}~the domain-wall
    state. All edges are computed at
    $\epsilon_{\rm tail} = 10^{-2}$ without the jitter median or
    time window of Sec.~\ref{sec:appendix-definitions}, so that the
    bare per-time behavior is visible. The nesting
    $w^{*}_{\rm abs} \geq w^{*}_{\rm con} \geq w^{*}_{\rm lit}$ holds
    at every readout time.     Where $\langle O(t)\rangle$ is resolvable, the three edges largely agree; the signed edges dip where the upper tail net-cancels, and $w^{*}_{\rm lit}$ additionally vanishes at zero-crossings of $\langle O(t)\rangle$. In the decayed regime (shaded), the signed edges fall to zero while $w^{*}_{\rm abs}$ remains finite.}
  \label{fig:appendix-edge-O-tfim}
\end{figure}

\subsection{Extended state dependence}
\label{sec:appendix-states}
 
This subsection extends the diagnostics of the main text to the full
state sets of Sec.~\ref{sec:appendix-models-settings}.
 
\paragraph{Convergence.}
Figure~\ref{fig:appendix-expval} shows the evolved expectation value
$\langle O(t)\rangle$ for all initial states of both models.
The MFIM curves are essentially threshold-independent over the full
time window, consistent with the rapid convergence of LI methods for
this model. The TFIM curves display a visible threshold dependence
at intermediate times for several states, signalling the onset of
the costly regime; the sensitivity of the diagnostics themselves to
these thresholds is quantified in
Sec.~\ref{sec:appendix-robustness}.

\begin{figure*}[t]
  \centering
  \includegraphics[width=1.\textwidth]{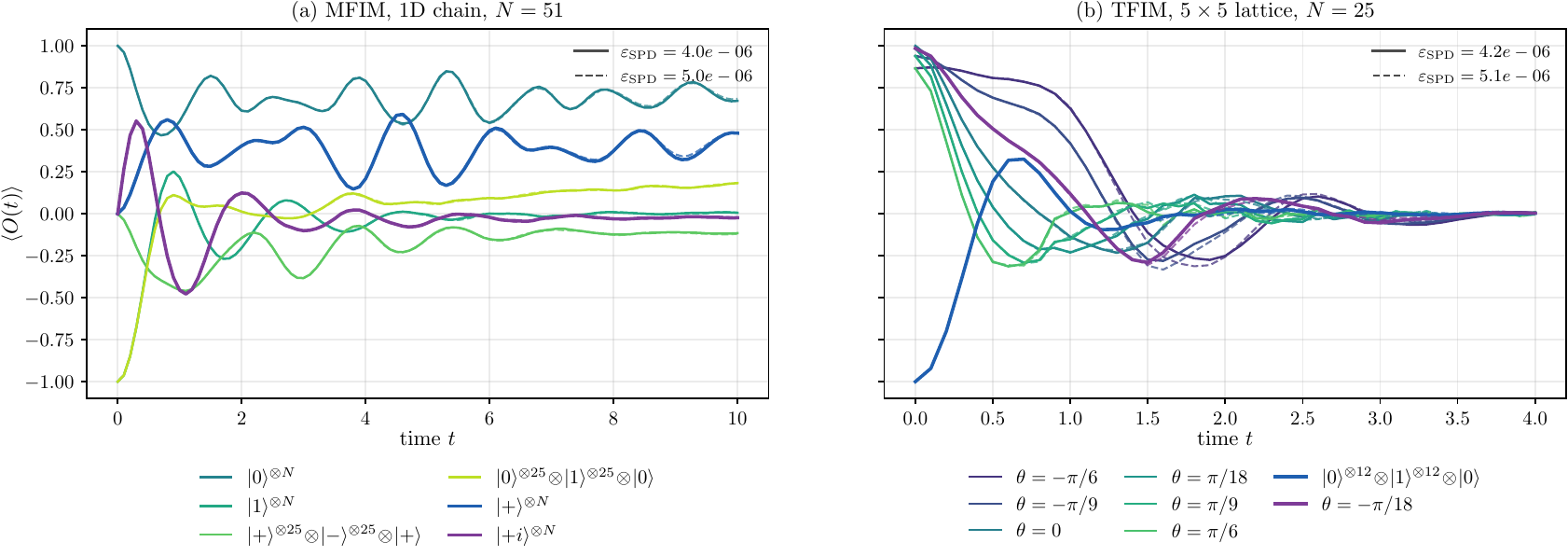}
  \caption{\textbf{Convergence, MFIM and TFIM.} (a)) Evolved expectation value
    $\langle O(t)\rangle$ for the six MFIM initial states, at two SPD
    truncation thresholds $\epsilon$ (solid/dashed). (b) Same for the nine TFIM initial
    states.}
  \label{fig:appendix-expval}
\end{figure*}

\paragraph{Per-state diagnostics.}
Figures~\ref{fig:appendix-mfim-grid} and~\ref{fig:appendix-tfim-grid}
show the tail edge $w^{*}(t)$ (left axis) and memory cost
$n_{\rm Pauli}(t)$ (right axis, logarithmic scale) for every state
of both models; shading marks the decayed regime. Across the full
sets, the two diagnostics track one another remarkably closely,
rising, plateauing, and decaying together while the observable is
resolvable. The largest deviations occur in the MFIM, for
$|0\rangle^{\otimes N}$ and in particular $|1\rangle^{\otimes N}$,
where $n_{\rm Pauli}(t)$ falls or fluctuates strongly at late times
while the tail edge remains large. We note that absolute values of
$n_{\rm Pauli}$ are not comparable between the two models, which are
simulated at different Trotter steps
(Sec.~\ref{sec:appendix-models-settings}); the tail edge, being
essentially independent of the discretization
(Sec.~\ref{sec:appendix-robustness}), may be compared directly.

In the TFIM, the diagnostics are strikingly uniform across the
tilted family: every tilt reaches a comparable plateau,
$w^{*} \approx 9$, at comparable times. Since plateau values
approaching the largest weights resolved by SPD are lower bounds
(Sec.~\ref{sec:appendix-robustness}), residual differences between
the tilts may be unresolved. The $Z$-basis domain-wall state, by
contrast, is comparatively inexpensive in both models, in keeping
with the early decay of its expectation value. More broadly, since
all states of a given model probe the identical Heisenberg-evolved
operator, the order-of-magnitude spread of $n_{\rm Pauli}$ across
states and times demonstrates that the simulation cost of the SPD algorithm is a property of the fully quantum
experiment, including the state, dynamics, and observable jointly, rather than
of the time-evolved operator alone.

\begin{figure*}[t]
  \centering
  \includegraphics[width=0.85\textwidth]{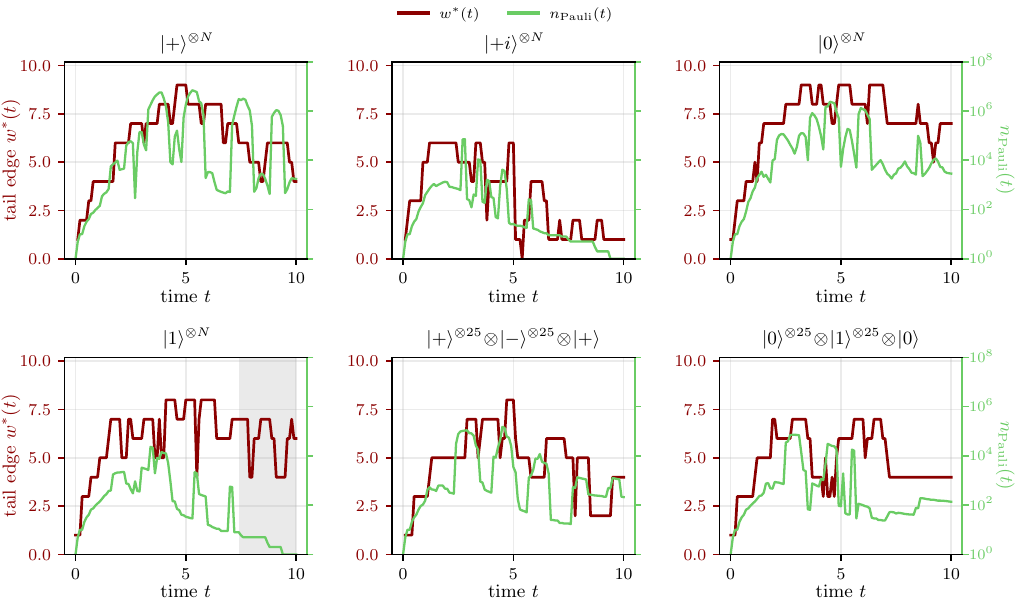}
\caption{\textbf{Extended state dependence, MFIM.} Tail edge
    $w^{*}(t)$ (solid, left axis) and memory cost $n_{\rm Pauli}(t)$
    (dashed, right axis, log scale) for each of the six MFIM states
    of Sec.~\ref{sec:appendix-models-settings}; shading marks the
    decayed regime of Sec.~\ref{sec:appendix-definitions}.}
  \label{fig:appendix-mfim-grid}
\end{figure*}

\begin{figure*}[t]
  \centering
  \includegraphics[width=0.85\textwidth]{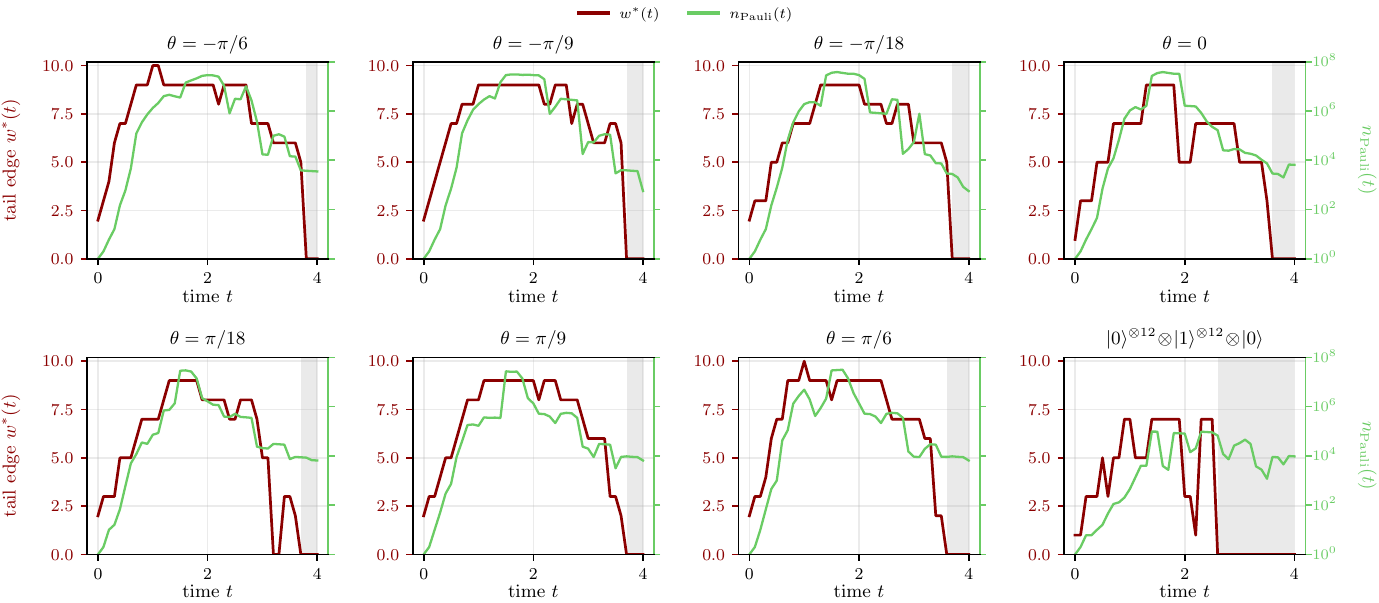}
\caption{\textbf{Extended state dependence, TFIM.} As in
    Fig.~\ref{fig:appendix-mfim-grid}, for the eight TFIM states.
    Panels $k = 0$--$6$ show the tilted family
    $\theta_k = (k-3)\,\pi/18$; the last panel is the $Z$-basis
    domain-wall state (Sec.~\ref{sec:appendix-models-settings}).}
  \label{fig:appendix-tfim-grid}
\end{figure*}

\section{Pauli path spectroscopy}

In this section, we provide full details and derivations for each of our protocols to experimentally measure the reactivity.
We begin, and spend the majority of time on, our ``decoherent'' protocol [Fig.~\ref{fig:3}(a)] involving inserted noise.
We provide an extensive discussion of numerical and analytical methods to optimize the protocol's performance and additional numerical studies on reactivity functions encountered in practice.
We then turn to our coherent measurement protocol, and provide a careful derivation relating the protocol's outcomes and the Fourier transform of the reactivity.

\subsection{Decoherent measurement protocol}

We now provide the details of our decoherent protocol to measure the reactivity.
The description of the protocol is straightforward, and is provided in the first two subsections.
The remaining subsections are devoted to the choice of protocol parameters and analysis of the experimental results.

\subsubsection{Description of the decoherent protocol} \label{sec: description decoherent}

We consider estimating the reactivity of any quantum experiment, involving a state $\psi$, unitary $U$, and observable $O$.
Our protocol applies equally to the space-time reactivity and the reactivity at a fixed circuit layer $t$.
For concreteness, we focus our presentation on the reactivity at a fixed circuit layer.

Our protocol estimates the reactivity by introducing artificial noise events into the experiment.
As discussed in Section~\ref{sec: reactivity}, in the absence of any such events, the experimental outcome can be expanded as,
\begin{equation}
    \tr(O U \psi U^\dagger ) = \sum_w R_\tau(w) = \sum_w \tr( U^\dagger_{t-\tau} O U_{t-\tau} \mathcal{P}_w \big[ U_\tau \psi U^\dagger_\tau \big] ),
\end{equation}
where $R_\tau(w)$ captures the contribution of all Pauli paths with weight $w$ at time $\tau$, and we expand $U = U_{t-\tau} U_\tau$.
Now, we consider performing the same experiment, but with single-qubit depolarizing noise of strength $\gamma$ artificially applied to each qubit in between circuit layers $\tau$ and $\tau+1$.
That is, we apply the channel, $\mathcal{D}_\gamma \equiv \bigotimes_{i=1}^N \mathcal{D}_{i,\gamma}$, with $\mathcal{D}_{i,\gamma}[\rho] \equiv e^{-\gamma} \rho + (1-e^{-\gamma}) \tr_i(\rho) \cdot \identity_i/2$.
The channel acts on Pauli operators as $\mathcal{D}_\gamma[P] = e^{-\gamma w[P]} P$.
Given this simple action, the noisy experimental outcome is equal to, 
\begin{equation}
    C_\tau(\gamma) \equiv \tr( U^\dagger_{t-\tau} O U_{t-\tau} \mathcal{D}_\gamma \big[ U_\tau \psi U^\dagger_\tau \big] ) = \sum_w R_\tau(w) e^{-\gamma w}.
\end{equation}
The contribution of each weight is damped by the exponential $e^{-\gamma w}$.
The implementation of artificial noise could be achieved, for example, by performing a random single-qubit Pauli rotation, $P_i \in \{ X_i, Y_i, Z_i \}$, with probability $p_\gamma = (1-e^{-\gamma})(3/4)$, on each qubit $i=1,\ldots,N$.

In principle, one can repeat the above protocol for $N$ distinct values $\{ \gamma_i \}$ and take a linear combination of the results to recover the exact reactivity function.
That is, one estimates the reactivity as
\begin{equation}
    \hat{R}_\tau(w) \equiv \sum_\gamma b_{w \gamma} \hat C_\tau(\gamma),
\end{equation}
where the coefficients $b_{w \gamma}$ are the matrix inverse of $a_{\gamma w'} \equiv e^{-\gamma w'}$ (i.e.~$b_{w \gamma} = (\bs{a}^{-1})_{w \gamma}$ where $\bs{a}$ denotes the matrix with elements $a_{\gamma w'})$, and $\hat C_\tau(\gamma)$ is the experimental estimate of $C_\tau(\gamma)$.
With this choice of $b_{w \gamma}$, one can easily verify that the experimental estimate, $\hat{R}_\tau(w)$, converges to the true reactivity in expectation.
This inversion is known as the discrete inverse Laplace transform (also known as the ``inverse Z-transform'' when the $\gamma$ are equally spaced).

Unfortunately, the uncertainties produced in the inverse Laplace transform are typically extremely large, causing the approach to fail for all but the smallest system sizes.
This is caused by ill-conditioning in the matrix $b_{w \gamma} = (\bs{a}^{-1})_{w \gamma}$, which contains both small singular values, of order one, and extremely large singular values, growing exponentially in $N$.
The uncertainty in our estimate $\hat{R}_\tau(w)$ is quantified by the covariance matrix,
\begin{equation}
    \Sigma_{w w'} \equiv \text{Cov}( \hat{R}_\tau(w) , \hat{R}_\tau(w') ) = \sum_{\gamma \gamma'} b_{w \gamma} \Sigma^0_{\gamma \gamma'} b_{w' \gamma' }, \,\,\,\,\,\, \,\,\,\,\,\, \Sigma^0_{\gamma \gamma'} = \delta_{\gamma \gamma'} \varepsilon_\gamma^2,
\end{equation}
where $\varepsilon_{\gamma}$ is the uncertainty in the experimental estimate of each $C_\tau(\gamma)$.
For example, if the uncertainty arises from shot noise in the measurement outcomes, one has $\varepsilon_{\gamma}^2 \sim \mathcal{O}(1/M_{\gamma})$ where $M_{\gamma}$ is the number of shots for each $\gamma$.
When the matrix $\bs{b}$ has large singular values, small uncertainties in the experimental measurements $C_\tau(\gamma)$ can translate to large uncertainties in the estimated reactivity.
Mathematical literature on the inverse Laplace transform suggests that these large singular values are fundamental to any noise-based approach~\cite{epstein2008bad}.

To sidestep this issue, we aim to only estimate \emph{low-uncertainty} features of the reactivity $R_\tau(w)$.
Roughly speaking, these correspond to features of $R_\tau(w)$ that overlap with the smaller singular values of $\bs{b}$.
To be specific, we consider the inner product of the reactivity with a ``filter function'', $h(w)$,
\begin{equation}
    \sum_w h(w) R_\tau(w).
\end{equation}
This can be estimated via
\begin{equation} \label{eq: estimate C from full}
    \sum_w h(w) \hat R_\tau(w) = \sum_\gamma h_\gamma \hat C_\tau(\gamma),
\end{equation}
where $h(w) = \sum_\gamma h_\gamma e^{-\gamma w}$,
with variance, $\text{Var}(\sum_w h(w) \hat R_\tau(w)) = \sum_\gamma h_\gamma^2 \varepsilon_\gamma^2$, where $\varepsilon_\gamma^2 \leq 1/M_\gamma$ is the uncertainty in the estimate of each $C_\tau(\gamma)$.
We provide a detailed analysis of the behavior of the variance for different filter functions in Sections~\ref{sec: numerical filter functions} and~\ref{sec: smoothed reactivity}.

As discussed in the main text and End Matter, we find that, in general, the variance above is small when $h(w)$ is \emph{smooth} as a function of the weight $w$. 
This follows because the ill-conditioning of the inverse Laplace transform arises mainly when estimating high-frequency components of $R_\tau(w)$, i.e.~those components that oscillate quickly in $w$.
As discussed in the End Matter, this motivates us to define two smoothed variants of the reactivity function.
First, when $h^{\text{delta}}(w;w_c)$ is chosen to approximate a delta function about $w \approx w_c$, we define the \emph{smoothed reactivity function},
\begin{equation}
    \tilde{R}_\tau(w_c) \equiv \sum_{w} h^{\text{delta}}(w;w_c) R_\tau(w),
\end{equation}
where $h^{\text{delta}}(w;w_c)$ will typically have width $\delta w = \mathcal{O}(w_c)$ to maintain sample efficiency.
Second, when 
$h^{\text{Heaviside}}(w;w_c)$ is chosen to approximate a Heaviside function at $w \approx w_c$, we define the \emph{smoothed cumulative reactivity function},
\begin{equation}
    \tilde{H}_\tau(w_c) \equiv \sum_{w} h^{\text{Heaviside}}(w;w_c) R_\tau(w),
\end{equation}
where $h^{\text{Heaviside}}(w;w_c)$ will typically rise from zero to one over width $\delta w = \mathcal{O}(w_c)$ to maintain sample efficiency.
In what follows, we often abbreviate $h^{\text{delta}}(w;w_c)$ or $h^{\text{Heaviside}}(w;w_c)$ as merely $h(w;w_c)$ or $h(w)$ when clear from context.

\subsubsection{Replacing noise events with random Pauli insertions} \label{sec: decoherent random Paulis}

In this section, we introduce a convenient second variant of our decoherent protocol.
This variant is relevant when the artificial noise channels in the protocol are implemented via random Pauli rotations.
When this is the case, it can be favorable to work directly with the Pauli rotations themselves instead of the noise channel $\mathcal{D}_\gamma$.
To this end, in this section we repeat the analysis of Section~\ref{sec: description decoherent} in this scenario.

Consider the same protocol as in Section~\ref{sec: description decoherent}, but where we replace the noise channel of strength $\gamma$ with the application of a random Pauli operator of weight $k$.
Averaging over all Paulis of a fixed weight $k$, this effectively performs the operation
\begin{equation}
    \mathcal{D}^{\text{Pauli}}_k[(\cdot)] = \frac{1}{3^k {N\choose k}} \sum_{Q : w[Q] = k} Q (\cdot) Q^\dagger,
\end{equation}
where $3^k {N\choose k}$ is the total number of Pauli operators of weight $k$.
To extract the reactivity from such operations, we must first derive the action of the channel $\mathcal{D}^{\text{Pauli}}_k$ on a Pauli operator $P$.
Since $\mathcal{D}^{\text{Pauli}}_k$ is invariant under permutations of the qubits as well as single-qubit Clifford rotations, this action depends solely on the weight of $P$.
A straightforward counting argument yields,
\begin{equation}
    \mathcal{D}^{\text{Pauli}}_k[P] 
    = \sum_{r=0}^{\min(k,w[P])} \left( - \frac{1}{3} \right)^r \frac{{w[P] \choose r} {N-w[P] \choose k-r}}{{N \choose k}}
     P
    = \prescript{}{2}F_1^{}(-w[P],-k,-N;4/3)  P,
\end{equation}
where $\prescript{}{2}F_1^{}(-w[P],-k,-N;4/3)$ is the hypergeometric function.
The first formula follows by counting the number of weight-$k$ Pauli operators $Q$ that overlap the weight-$w[P]$ Pauli operator $P$ on exactly $r$ qubits.
Each such $Q$ contributes, on average, a factor of $(-1/3)^r$, arising from the product of $r$ averages, $(X_i P_i X_i + Y_i P_i Y_i + Z_i P_i Z_i)/3 = -1/3$, on each qubit $i$ within their overlap.
The second equality follows by noting that the sum is equal to probability generating function of the hypergeometric distribution evaluated at $-1/3$ (see e.g. Eq.~(6.4) of Ref.~\cite{johnson2005univariate}).
In the limit $k \ll N$, the hypergeometric function simplifies, to $\prescript{}{2}F_1^{}(-w[P],-k,-N;4/3) \approx (1-4w/3N)^k$.
This corresponds to the damping when one replaces the random $k$-qubit Pauli operator with a product of $k$ independently random single-qubit Pauli operators.

With this formula in hand, the average outcome of an experiment with a random weight-$k$ Pauli operator inserted at layer $\tau$ is,
\begin{equation}
    C_\tau(k) \equiv \tr( U^\dagger_{t-\tau} O U_{t-\tau} \, \mathcal{D}_k^{\text{Pauli}} \big[ U_\tau \psi U^\dagger_\tau \big] ) = \sum_w R_\tau(w) \cdot \prescript{}{2}F_1^{}(-w,-k,-N;4/3).
\end{equation}
The exact reactivity can be estimated as,
\begin{equation}
    \hat{R}_\tau(w) \equiv \sum_k b^{\text{Pauli}}_{w k} C_\tau(k),
\end{equation}
where $b^{\text{Pauli}}_{wk}$ is the matrix inverse of $a^{\text{Pauli}}_{kw} = \prescript{}{2}F_1^{}(-w,-k,-N;4/3)$.
The uncertainty in the estimate is described by a covariance matrix,
\begin{equation}
    \Sigma_{w w'} \equiv \text{Cov}( \hat{R}_\tau(w) , \hat{R}_\tau(w') ) = \sum_{k k'} b^{\text{Pauli}}_{w k} \, \Sigma^0_{k k'} \, b^{\text{Pauli}}_{w' k' }, \,\,\,\,\,\, \,\,\,\,\,\, \Sigma^0_{k k'} = \delta_{k k'} \varepsilon_k^2,
\end{equation}
where $\varepsilon_k$ is the uncertainty in the experimental measurement of $C_\tau(k)$.
Similar to the first variant of the protocol, the covariance matrix will typically feature both very large and small singular values.
To this end, we can instead estimate the overlap of the reactivity with smoothed functions,
\begin{equation}
    h(w) = \sum_k h_k \cdot \prescript{}{2}F_1^{}(-w,-k,-N;4/3),
\end{equation}
similar to before.
Working with random Pauli insertions compared to local depolarizing noise yields significant advantages when the relevant weights $w$ are a significant fraction of the system size, e.g.~$w \gtrsim N/20$ in practice.

\subsubsection{Sampling overhead} \label{sec: numerical filter functions}

In this subsection, we address the sampling overhead of experimentally measuring the overlap of the reactivity function with a given filter function, $h(w)$.
This will also provide the optimal distribution of experimental shots to allocate to each noise rate $\gamma$ or number of random Pauli insertions $k$ for a given filter function of interest.
For concreteness, we focus on experiments with random Pauli insertions of weight $k$; all formulas immediately generalize to noise rates $\gamma$ as well.

As discussed, our estimate of $\sum_w h(w) R_\tau(w)$ is formed from a linear combination of  expectation values with inserted depolarizing noise, $\sum_\gamma h_\gamma \hat C_\tau(\gamma)$,
where $h(w) = \sum_\gamma h_\gamma e^{-\gamma w}$, or a linear combination of expectation value with random Pauli insertions, $\sum_k h_k \hat C_\tau(k)$,
where $h(w) = \sum_k h_k F(k,w;N)$.
We assume $M_k$ samples are allocated to each value of $k$, where the total number of samples is $\sum_k M_k = M$.
The uncertainty in each expectation value is given by the shot noise error,
\begin{equation} \label{eq: eps C M}
    \varepsilon_k^2 = \frac{(1+C_\tau(k)) ( 1 - C_\tau(k) )}{ M_k} \leq \frac{1}{ M_k}.
\end{equation}
The uncertainty in our estimate  is therefore,
\begin{equation} \label{eq: eps h eps gamma}
    \varepsilon_h^2 = \sum_k h_k^2 \varepsilon_k^2 \leq \sum_k \frac{h_k^2}{M_k}.
\end{equation}
We would like to minimize the upper bound over the choices of $\{ M_k \}$.
The constraint $\sum_k M_k = M$ can be handled by introducing a Lagrange multiplier.
A standard analysis yields the optimal choice,
\begin{equation}
    M_k^* = \left( \frac{ | h_k | }{\sum_k |h_k|} \right) M.
\end{equation}
This produces an uncertainty,
\begin{equation} \label{eq: uncertainty h}
    (\varepsilon_h^*)^2 \leq \frac{\left( \sum_k | h_k | \right)^2}{M} \equiv \frac{X}{M}.
\end{equation}
The numerator, $X\equiv ( \sum_k | h_k | )^2$, quantifies the increase in the number of samples required to estimate our feature $h(w)$ compared to a standard expectation value.
As expected, larger coefficients $h_k$ lead to a larger sampling overhead.

\subsubsection{Further details on numerical optimization of filter functions}

In this subsection, we provide further details on our numerical optimization procedure for obtaining filter functions with desirable properties at a moderate sampling overhead.
We refer to the End Matter for an introduction to this approach, as well as  our other, analytic constructions of filter functions.

We focus on random Pauli insertions for specificity, and all results immediately extend to noise rates $\gamma$.
The choice of the optimal filter function must balance two competing factors: the desire for $h(w)$ to closely resemble $f(w)$, and the desire for the uncertainty [Eq.~(\ref{eq: uncertainty h})] to be small.
We handle the first point by defining a cost function,
\begin{equation}
    \sum_w c(w) (h(w) - f(w))^2,
\end{equation}
which we seek to minimize.
The coefficients $c(w)$ can be chosen according to one's specific desiderata for the filter function $h(w)$.
For example, if $f(w)$ was a delta function, one might wish to increase $c(w)$ as one moves farther from the center of the delta function, in order to assure that $h(w)$ strongly vanishes in such regions.
We handle the second point by setting a maximum value, $(\sum_k |h_k|)^2 \leq X$, on the sampling overhead.
This reduces our selection of the filter function coefficients $h_k$ to a convex optimization problem that can be easily solved numerically.
The resulting filter functions produced by this optimization are shown in our figures throughout the main text, End Matter, and Supplemental Material.

We find that generically, the sampling overhead $X$ and the achievable inverse relative width $w_*/\delta w$ (where the width $\delta w$ is, roughly speaking, the length-scale that the filter function varies on near weight $w_*$) of the numerically optimized filter functions are related as $X \sim e^{\mathcal{O}(w_*/\delta w)}$ (see Fig.~\ref{fig:filter-comparison}).
This matches the scaling predicted analytically from our Fourier-to-Chebyshev approach (see End Matter).
From this, we conclude that such a sampling overhead is likely fundamental to any estimation strategy using the decoherent measurement protocol.

\begin{figure*}[t]
  \centering
  \includegraphics[width=\textwidth]{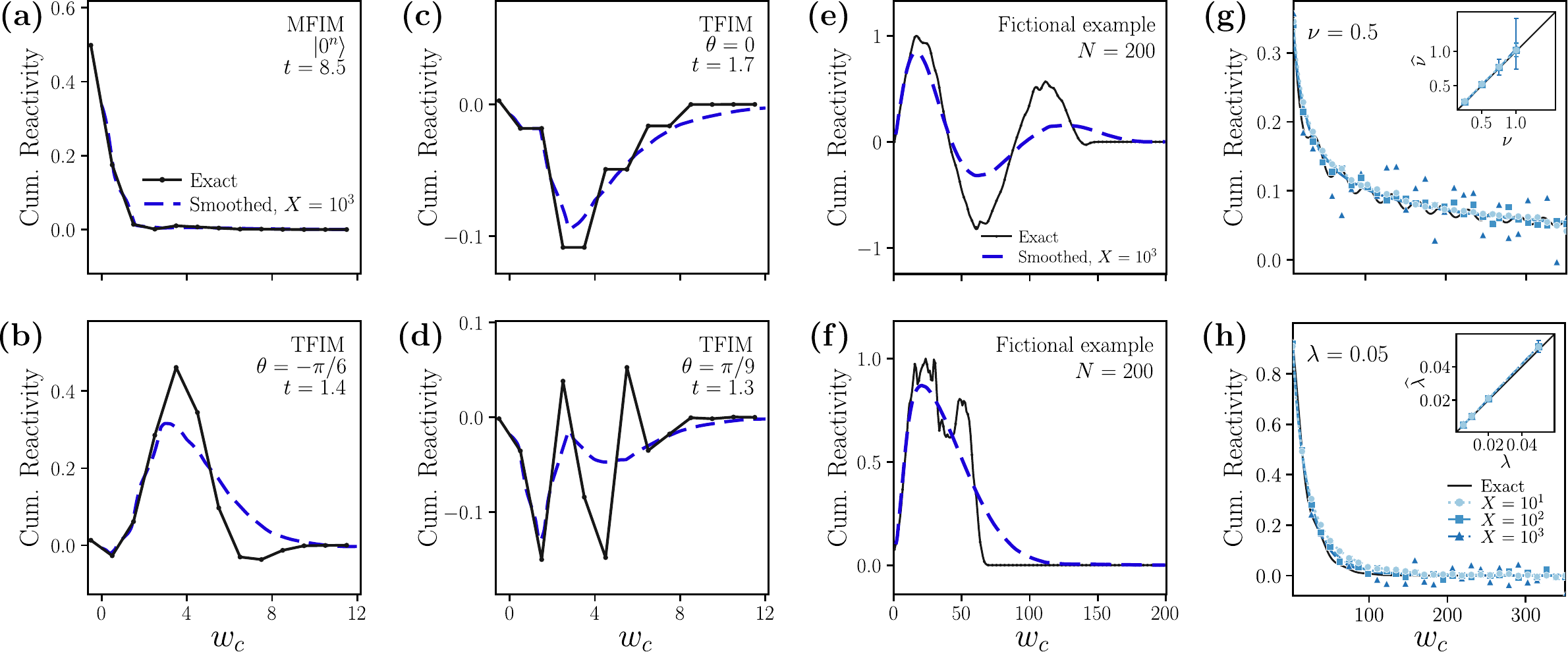}
  \caption{Comparisons of exact cumulative reactivity functions, $H(w_*)=\sum_{w>w_*}R(w)$ (black), and smoothed cumulative reactivity functions $\widetilde H(w_*)=\sum_w h(w;w_*)R(w)$, (blue) that can be efficiently measured, with sampling overhead $X=10^3$.
  \textbf{(a-d)} Examples from SPD numerical simulations for the 1D MFIM with initial state $|0^n\rangle$ at $t=8.5$, and the 2D TFIM with $\theta=-\pi/6$ at $t=1.4$, $\theta=0$ at $t=1.7$, and $\theta=\pi/9$ at $t=1.3$, respectively. 
  \textbf{(e-f)} Analogous comparisons for two fictional exact reactivity functions, to illustrate smoothing behavior at large $w_c$.
  \textbf{(g-h)} Two further examples of fictional reactivity functions, obeying a power-law and exponential decay with $w_c$, respectively.
  Insets denote the power-law degree and exponential decay rate estimated from smoothed reactivity data, which agree nearly precisely with their exact values. We refer to the text for further discussion.
}
  \label{fig:smoothed_examples}
\end{figure*}

\begin{figure*}[t]
  \centering
  \includegraphics[width=\textwidth]{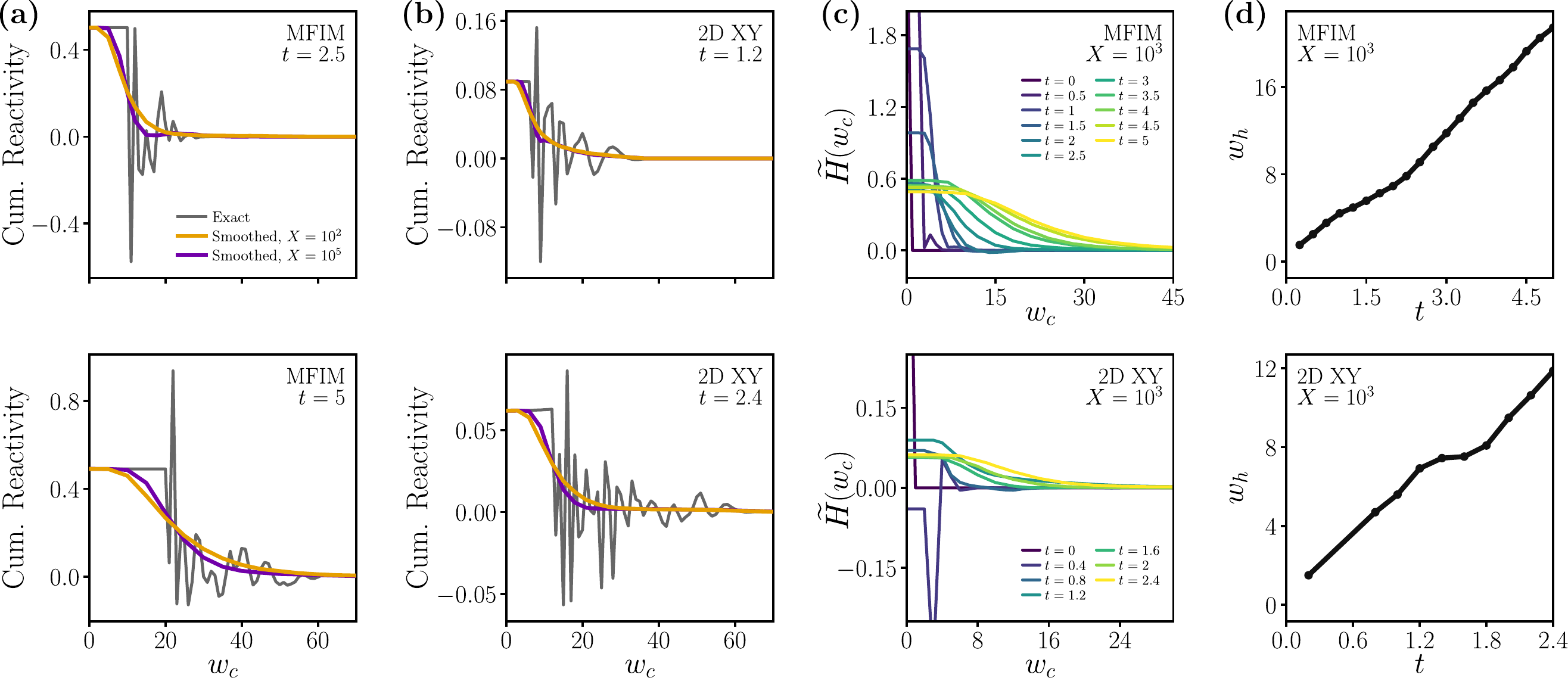}
  \caption{Numerical simulations of Pauli path spectroscopy for the global reactivity, which allows us to demonstrate the performance of the protocol at large weights.
  We consider the infinite-temperature autocorrelation function, $\frac{1}{2^N}\tr(h_1(t) h_1(0))$, of the local energy density $h_1$ in the 1D MFIM on $N=20$ qubits, and for a similar function, $\frac{1}{2^N}\tr(Z_{3,3}(t) Z_{3,3}(0))$, in the 2D XY model on a $4\times4$ lattice. 
  We trotterize each time-evolution with step size $\delta t = 0.25, 0.05$, respectively, use coefficient thresholds for the SPD algorithm of $\epsilon_{\text{SPD}} = 10^{-4}, 6 \times 10^{-4}$, respectively, and let $w[\vec{P}] = \sum_{\tau=1}^T w[P_\tau]$, where $\tau$ is sampled every $1, 4$ trotter steps, respectively.
  \textbf{(a,b)} Grey curves show the exact cumulative reactivity \(H(w_c)=\sum_{w>w_c}R(w)\), and colored curves show the smoothed cumulative reactivity \(\tilde{H}(w_c)\) obtained from random Pauli insertions for sampling overheads \(X=10^2\) and \(10^5\).
  \textbf{(c)}~Smoothed cumulative reactivity functions over time in the MFIM (top) and XY model (bottom), at sampling overhead \(X=10^3\).
  \textbf{(d)}~The ``halfway weight'' \(w_h\), at which \(\tilde{H}(w_c)\) falls below half its initial value, $\tilde{H}(w_h) \leq \frac{1}{2} \tilde{H}(0)$.
  Data points at early times in which $\tilde{H}(w_c)$ is not monotonically decreasing with $w_c$ are omitted.
  }
  \label{fig:global_supp}
\end{figure*}

\subsubsection{Numerical studies of smoothed reactivity functions} \label{sec: smoothed reactivity}

In this subsection, we present additional numerical studies of smoothed reactivity functions, complementing those in Fig.~\ref{fig:3} of the main text.

We begin with Fig.~\ref{fig:smoothed_examples}.
In Fig.~\ref{fig:smoothed_examples}(a-d), we show additional comparisons between smoothed cumulative reactivity function and exact cumulative reactivity functions for the same models studied using the SPD algorithm in the main text.
For extremely small $w_c \lesssim 3$, the smoothed filter function is able to exactly approximate the Heaviside function, and so the smoothed and exact reactivity functions agree.
For any larger $w_c$, they do not necessarily agree, and in general, across all four panels, the smoothed reactivity function resembles a smooth interpolation of the exact reactivity function, as anticipated.
As discussed in the main text, this still enables access to many interesting features of the reactivity, such as its typical weight over time.
It is also in some sense the only ``physical component'' of the exact reactivity function, as any sharp oscillations in the reactivity are not observable in quantum experiments (although, they could be observable and important for classical algorithms that perform weight-based truncation).

To illustrate the behavior the smoothed filter functions at weights $w_c$ beyond the reach of our numerical simulations---i.e.~where one hopes to apply them in  beyond-classical quantum experiments---in Fig.~\ref{fig:smoothed_examples}(e-f), we create fictional examples of reactivity functions and compare their exact versus smoothed cumulative reactivities.
We observe identical behavior at these large scales as already observed at small scales.
Namely, that the smoothed reactivity resembles a smooth interpolation of the exact reactivity as anticipated.
In particular, we see that the smoothed reactivity is still capable of resolving the difference in typical weight between (e) and (f), and of resolving the large oscillations in (e).

Finally, in Fig.~\ref{fig:smoothed_examples}(g-h), we display an example of how the smoothed cumulative reactivity can enable one to accurately extract simple parameters governing the exact cumulative reactivity.
In Fig.~\ref{fig:smoothed_examples}(g), we devise a fictional exact cumulative reactivity function that decays as a power law in $w_c$, $H(w_c) = [1+0.1 \sin(w_c/5)]/w_c^\nu$, with small oscillations added for generality.
We then ask whether one can extract the power law degree $\nu$ solely from the smoothed reactivity data.
We find that performing a best fit to the functional form $A/w_c^\nu$, with $A$ and $\nu$ as fitting parameters, yields an accurate estimate $\hat{\nu}$ of $\nu$ at all values tested.
In Fig.~\ref{fig:smoothed_examples}(h), we perform a similar test for an exact cumulative reactivity function that decays exponentially, as $H(w_c) = [1+0.1 \sin(w_c/5)]e^{-\lambda w_c}$.
In this case, we find that performing a functional fit that is aware of the smoothing effect of the filter function yields best results.
Namely, we propagate each trial exponential decay function \(\hat{A} e^{-\hat{\lambda} w_c}\) through our filter functions before fitting.
This yields an accurate estimate $\hat{\lambda}$ of $\lambda$ at all values tested.

We now turn to our second additional numerical study, in  Fig.~\ref{fig:global_supp}.
Here, we perform numerical computations of the \emph{global} reactivity function using the SPD algorithm.
As discussed in the main text, the global reactivity counts the total contribution of all Pauli paths of weight $w$, where the weight is summed over all circuit locations rather than at only a single instant in time.
By taking the time to be large, the global reactivity can acquire support at large weights $w$ even in a small-size system.

In Fig.~\ref{fig:global_supp}(a) and Fig.~\ref{fig:global_supp}(b), we display the exact and smoothed cumulative global reactivity functions for the 1D MFIM and the 2D XY model, respectively.
In both cases, we observe that the exact global reactivity functions features sharp oscillations as a function of $w_c$, which will necessarily be averaged over in any noisy expectation value.
The smoothed reactivity averages over these oscillations, and so features a gradual decay from its initial value, at small $w_c$, to zero, over a range of weights that mirrors the support of the exact reactivity.
To probe the evolution of this support in time, in Fig.~\ref{fig:global_supp}(c), we plot the smoothed cumulative reactivity function for each model as a function of $w$ for various times $t$.
We observe a gradual growth of the distribution towards higher weights over time.
To better visualize this growth, in Fig.~\ref{fig:global_supp}(d), we plot the weight $w_h$ at which the smoothed cumulative reactivity decays to half its initial value as a function of time for each model.
Interestingly, for both models, we observe that this weight grows roughly linearly in time.
This contrasts with the quadratic or cubic growth of the light-cone volume in 1D and 2D, respectively; providing evidence that the Pauli paths relevant to quantum expectation values can be of parametrically smaller weights than those relevant to exact operator evolution.

\subsection{Coherent measurement protocol}

We now provide the details of our coherent protocol to measure the reactivity.
Compared to the decoherent measurement protocol, the coherent protocol has the primary benefit that it can extract the precise values of the reactivity function with minimal sampling overhead.
Its primary drawbacks are that it requires an extra half-circuit of evolution time, and is only applicable to infinite-temperature correlation functions.
It also behaves substantially differently under experimental noise compared to the decoherent measurement protocol.

We discuss two variants of the coherent measurement protocol.
The first \emph{ancilla-assisted} variant uses $N$ ancillas prepared in a tensor product of EPR pairs with the system of interest.
This allows one to extract the reactivity function, $R_\tau(w)$, from the experimental results via a simple inverse Fourier transform.
The second \emph{ancilla-free} variant uses an extremely similar method but avoids the need for ancilla qubits.
This variant allows one to extract a ``classical'' variant of the reactivity function, $R_\tau(w_c)$, again via a simple inverse Fourier transform.
Here, the ``classical'' weight $w_c$ counts the number of $X$ or $Y$ components, but not $Z$ components, of a Pauli operator.

Our protocols are inspired by the size distribution measurement protocol introduced in Ref.~\cite{schuster2022many}. 

\subsubsection{Ancilla-assisted  protocol} \label{sec: coherent ancilla-assisted}

We begin with the ancilla-assisted protocol, depicted in Fig.~\ref{fig:coherent}.
We consider measuring the reactivity of an infinite-temperature correlation function, $\tr( B U A U^\dagger )/2^n$, where $A$ and $B$ are Pauli operators.
For a given time slice $t$, the reactivity is given by,
\begin{equation}
    R_\tau(w) = \tr( U^\dagger_{t-\tau} B U^{}_b \mathcal{P}_w \big[  U^{}_a A U_\tau^\dagger  \big] ) / 2^n,
\end{equation}
where we abbreviate $U = U_{t-\tau} U_\tau$, with $U_\tau = U_\tau \cdots U_1$ representing time-evolution up to layer $t$, and $U_{t-\tau} = U_t \cdots U_{\tau+1}$ after.
Traditionally, the infinite-temperature correlation function would be measured by preparing an initial state, $\psi_A = (\identity+A)/2^n$,
performing $U$, and measuring $B$.
For example, if $A$ is a single-qubit Pauli operator on qubit $i$, the state $\psi_A$ is the positive eigenstate of $A$ on qubit $i$ and maximally mixed on all other qubits.
In what follows, we will take a quite different approach in order to coherently measure the reactivity.

Our approach utilizes the Choi–Jamiolkowski mapping from operators $A$ on a Hilbert space $\mathcal{H}$, to states,
\begin{equation}
    \ket{A} \equiv ( A \otimes \identity )\ket{\text{EPR}},
\end{equation}
on a doubled Hilbert space $\mathcal{H} \otimes \mathcal{H}$.
Here, $\ket{\text{EPR}} \equiv \bigotimes_{i=1}^N \big( (\ket{00}_i + \ket{11}_i)/\sqrt{2} \big)$ is a tensor product of $N$ EPR pairs between the $N$ system qubits and $N$ ancilla qubits.
The un-perturbed EPR state corresponds to the identity operator $\ket{\identity} = \ket{\text{EPR}}$.
The benefit of this approach is that weight of an operator on $\mathcal{H}$ becomes a Hermitian observable on the doubled system.
Namely, we can define the weight operator,
\begin{equation} \label{eq: weight operator}
    W = - \frac{1}{4} \sum_i \left( X_i \otimes X_i - Y_i \otimes Y_i + Z_i \otimes Z_i - 3 \identity_i \otimes \identity_i \right),
\end{equation}
acting on $\mathcal{H} \otimes \mathcal{H}$.
The weight operator is a sum of local two-qubit terms, which couple a system qubit $i$ with its partner ancilla qubit.
The eigenstates of the weight operator are given by the Pauli-perturbed states,
$W ( P \otimes \identity ) \ket{\text{EPR}} = w[P] ( P \otimes \identity ) \ket{\text{EPR}}$, with eigenvalue equal to the Pauli operator's weight.
We note that the final term in Eq.~(\ref{eq: weight operator}) is an overall constant that can be neglected in practice.

The fact that the weight operator is a sum of local couplings opens the door to a simple protocol to measure the reactivity [Fig.~\ref{fig:coherent}].
We prepare the $N$ system and $N$ ancilla qubits in the EPR state.
We then perturb the EPR state with a local $\frac{\pi}{4}$-rotation, $(\identity \pm i A)/\sqrt{2}$, on the system side.
We then ``time-evolve'' this perturbation, by applying the unitary $U_\tau$ to the system side and $U_\tau^*$ to the ancilla side.
(Alternately, we could have applied the unitary $U_\tau^\dagger$ to the system side prior to the $\frac{\pi}{4}$-rotation, instead of $U_\tau^*$ after, using the identity $(\identity \otimes U_\tau^*)\ket{\text{EPR}} = (U_\tau^\dagger \otimes \identity )\ket{\text{EPR}}$.)
We then couple the two sides by evolving under the weight operator, as $e^{i g W}$, for a rotation angle $g$.
Finally, we apply the unitary $U_{t-\tau}$ to the system side and conclude by measuring the expectation value of the observable $B$ (on the system side).
In total, this produces a experimental outcome,
\begin{equation}
    \langle B \rangle_\pm = \frac{1}{2} \bra{\text{EPR}} \big( U_\tau [\identity \mp i A] U_\tau^\dagger \otimes \identity \big) e^{-ig W}  \big( U_{t-\tau}^\dagger B U_{t-\tau} \otimes \identity \big)  e^{ig W} \big( U_\tau [\identity \pm i A] U_\tau^\dagger \otimes \identity \big) \ket{\text{EPR}},
\end{equation}
which depends on the direction $\pm$ of the $\frac{\pi}{4}$-rotation, as well as the tunable angle $g$.

To extract the reactivity from these experimental outcomes, we evaluate the difference between the expectation value for the positive and negative $\frac{\pi}{4}$-rotations,
\begin{equation}
\begin{split}
    \mathcal{F}_t(g) \equiv \langle B \rangle_+ - \langle B \rangle_-  
    & = i \bra{\text{EPR}} e^{-ig W}  \big( U_{t-\tau}^\dagger B U_{t-\tau} \otimes \identity \big)  e^{ig W} \big( U_\tau A U_\tau^\dagger \otimes \identity \big) \ket{\text{EPR}} + h.c. \\
    & = i \bra{\text{EPR}} \big( U_{t-\tau}^\dagger B U_{t-\tau} \otimes \identity \big)  e^{ig W} \big( U_\tau A U_\tau^\dagger \otimes \identity \big) \ket{\text{EPR}} + h.c. \\
    & = i \tr( U_{t-\tau}^\dagger B U_{t-\tau} \cdot e^{ig \mathcal{W}} [ U_\tau A U_\tau^\dagger ] ) / 2^n + h.c. \\
    & = \sum_{w=1}^{N} 2\sin(gw) \cdot \tr( U_{t-\tau}^\dagger B U_{t-\tau} \cdot \mathcal{P}_w [ U_\tau A U_\tau^\dagger ] ) \\
    & = \sum_{w=1}^{N} 2\sin(gw) \cdot R_\tau(w), \\
\end{split}
\end{equation}
where in the third line, $\mathcal{W}[P] = w[P] P$, is the weight superoperator.
The result is equal to the Fourier transform of the reactivity.
To recover the reactivity from the experiment, we should repeat the protocol for $N$ different values of the rotation angle, $g \in \{ 2\pi/(2N+1), 4\pi/(2N+1), \ldots, 2N\pi/(2N+1) \}$.
This allows us to recover $\mathcal{F}_t( 2\pi k / (2N+1) )$ for all $k \in \{0,\ldots,2N\}$, using $\mathcal{F}_t(0) = 0$ and $\mathcal{F}_t(2\pi-g) = -\mathcal{F}_t(g)$.
We can then perform the inverse Fourier transform, to recover
\begin{equation}
    R_\tau(w) = \frac{i}{2N} \sum_{k=0}^{2N} e^{i 2\pi k w /(2N+1)} \mathcal{F}_t(2\pi k/(2N+1)) ,
\end{equation}
for each $w = 1,\ldots,N$.
This completes the measurement protocol.
The entire protocol requires $2N$ distinct experiments, and features a sampling error $\sim \mathcal{O}(1/\sqrt{M})$ in the estimate of each individual value of $R_\tau(w)$, where $M$ is total number of shots over all $2N$ experiments.

\subsubsection{Ancilla-free protocol} \label{sec: coherent ancilla-free}

We can also formulate an ancilla-free variant of the coherent measurement protocol.
This may be convenient in experiments with a limited total number of qubits, or whose connectivity does not easily allow the pairwise couplings needed for the ancilla-assisted protocol.

The ancilla-free coherent measurement protocol does not extract the reactivity function as discussed so far, but rather a closely-related variant of it.
We define the $Z$-weight of a Pauli operator $P$ to count its number of $X$ or $Y$ components, $w_Z[P] = \sum_{i=1}^N ( \delta_{P_i = X_i} + \delta_{P_i = Y_i} )$---i.e.~its number of components that anti-commute with the operator $Z$.
This contrasts with the weight, which also counts the $Z$ components.
Identical to before, we can define a reactivity function for the $Z$-weight.
For the infinite-temperature correlation function considered in this section, this is
\begin{equation}
    R^Z_t(w_Z) = \tr( U^\dagger_{t-\tau} B U^{}_b \mathcal{P}_{w_Z}^Z \big[  U^{}_a A U_\tau^\dagger  \big] ) / 2^n,
\end{equation}
where $\mathcal{P}_{w_Z}^Z[ \cdot ]$ projects onto Pauli operators with $Z$-weight $w_Z$.
We have $\sum_{w_Z=0}^N R^Z_t(w_Z) = \tr( B U A U^\dagger )$.
Note that the $Z$-weight may equal zero, corresponding to the contribution of Pauli operators entirely in the $Z$-basis.

The ancilla-free coherent measurement protocol proceeds as in Fig.~\ref{fig:3}.
We initialize the system in a random computational basis state, $\ket{\bs{s}}$, where $\bs{s} \in \{0,1\}^N$.
We then time-evolve backward under $U_\tau^\dagger$, apply the $\pm \frac{\pi}{4}$-rotation, $(\identity\pm A)/2$, and time-evolve forward under $U_\tau$.
We then apply the product of single-qubit rotations, $e^{ig W_Z(\bs{s})}$, for 
\begin{equation}
    W_Z(\bs{s}) = -\frac{1}{2} \sum_i \left( s_i Z_i - \identity_i \right).
\end{equation}
As before, the final term is a constant that can be neglected.
Finally, we time-evolve forward under $U_{t-\tau}$ and measure the observable $B$.

This protocol produces an expectation value,
\begin{equation}
    \langle B \rangle_{\pm,\bs{s}} = \frac{1}{2} \bra{\bs s}  U_\tau [\identity \mp i A] U_\tau^\dagger \cdot e^{-ig W_Z(\bs{s})} \cdot U_{t-\tau}^\dagger B U_{t-\tau} \cdot  e^{ig W_Z(\bs{s})} \cdot U_\tau [\identity \pm i A] U_\tau^\dagger  \ket{\bs s}.
\end{equation}
To extract the reactivity, we take the difference between the positive and negative $\frac{\pi}{4}$-rotations, and average over computational basis states,
\begin{equation}
\begin{split}
    \mathcal{F}_{t}^Z(g) & = \E_{\bs s} \big[ \langle B \rangle_{+,\bs{s}} - \langle B \rangle_{-,\bs{s}} \big] \\ 
    & = i \E_{\bs s} \big[ \bra{\bs s} e^{-ig W_Z(\bs{s})}   U_{t-\tau}^\dagger B U_{t-\tau} \cdot  e^{ig W_Z(\bs s)} \cdot U_\tau A U_\tau^\dagger   \ket{\bs s} \big] + h.c. \\
    & = i \E_{\bs s} \big[ \bra{\bs s}  U_{t-\tau}^\dagger B U_{t-\tau} \cdot  e^{ig W_Z(\bs s)} \cdot U_\tau A U_\tau^\dagger  \ket{\bs s} \big] + h.c. \\
    & = i \E_{\bs s} \left[  \tr(  U_{t-\tau}^\dagger B U_{t-\tau} \cdot  e^{ig W_Z(\bs s)} \cdot U_\tau A U_\tau^\dagger  \cdot \dyad{\bs s} ) \right] + h.c. \\
    & = i \E_{\bs s} \left[  \tr(  U_{t-\tau}^\dagger B U_{t-\tau} \cdot  e^{ig \mathcal{W}_Z} \big[ U_\tau A U_\tau^\dagger  \big] \dyad{\bs s} ) \right] + h.c. \\
    & = i \tr(  U_{t-\tau}^\dagger B U_{t-\tau} \cdot  e^{ig \mathcal{W}_Z} \big[ U_\tau A U_\tau^\dagger  \big] ) / 2^n + h.c. \\
    & = \sum_{w_Z=0}^{N} 2\sin(gw_Z) \cdot \tr( U_{t-\tau}^\dagger B U_{t-\tau} \cdot \mathcal{P}^Z_{w_Z} [ U_\tau A U_\tau^\dagger ] ) / 2^n \\
    & = \sum_{w_Z=0}^{N} 2\sin(gw_Z) \cdot R^Z_t(w_Z). \\
\end{split}
\end{equation}
aIn the second line, we use that $W_Z(\bs{s}) \ket{\bs s} = 0$, by construction, to eliminate the factor of $e^{-ig W_Z(\bs{s})}$.
In the fifth line, $\mathcal{W}_Z[P] = w_Z[P] P$ denotes the $Z$-weight superoperator.
In the sixth line, we utilize the fact, $W_Z(\bs{s})  P \ket{\bs s} = w[P] P \ket{\bs s} = \mathcal{W}_Z[ P ] \ket{\bs s}$.

From the experimental outcomes above, one can extract the reactivity for any $w_Z > 0$ via the inverse Fourier transform.
The details proceed identically to the ancilla-assisted coherent measurement protocol.
This leaves a single quantity missing: the value of the reactivity at weight $w_Z = 0$.
The reactivity at $w_Z=0$ cannot be extracted from the experiments described above, since its contribution to the experimental outcome vanishes due to the factor of $\sin(g w_Z)$.
Nonetheless, it is easily recovered by performing a single additional experiment.
Namely, one simply performs a separate measurement the value of the infinite temperature correlation function, $\tr( B U A U^\dagger )/2^n$.
This allows one to determine the weight-zero reactivity via $R_\tau^Z(0) = \tr( B U A U^\dagger )/2^n - \sum_{w_Z=1}^n R_\tau^Z(w_Z)$.

\subsection{Sampling-based measurement protocol for operator size distributions}

Operator size distributions can be measured scalably in three distinct ways.
There also exist other measurement protocols that are useful in practice but incur exponential sampling overhead when applied to large-weight operators~\cite{qi2019measuring,joshi2020quantum}.
The first two ways correspond to the coherent and decoherent protocols introduced in the previous two sections.
To measure the size distribution of a time-evolved operator $U^\dagger O U$ with these protocols, one simply sets $A = B = O$ and $U_\tau = U$, $U_{t-\tau} = U^\dagger$.
This utilizes the fact that the size distribution is equivalent to the reactivity of the Loschmidt echo circuit.

In this section, we present a third protocol to measure the operator size distribution.
This protocol exploits the fact that the size distribution is a normalized probability distribution, to perform sampling from the  distribution itself.
In general, this sampling protocol is strictly superior to the aforementioned coherent protocol, since it requires only two unitary applications and not three.
Compared to the decoherent protocol, the sampling protocol shares similar benefits and drawbacks to the coherent protocol, as discussed in the previous section.

\subsubsection{Ancilla-assisted  protocol}

Our protocol is extremely straightforward.
Following the discussion in Section~\ref{sec: coherent ancilla-assisted}, the protocol begins by preparing the EPR state between $N$ system qubits and $N$ ancilla qubits.
One then applies the unitary Pauli operation $O$ on the system side of the EPR state, and then the unitary $U^\dagger$ on the system side and $U^T$ on the ancilla side.
Equivalently, one can first apply the unitary $U$, then $O$, then $U^\dagger$, all on the system side.
This produces the state $( U O \otimes U^* ) \ket{\text{EPR}} = ( U O U^\dagger \otimes \identity ) \ket{\text{EPR}}$.
The protocol concludes by measuring this state in the Bell basis on each pair of qubits between the system and ancilla sides.

The four possible outcomes of the Bell basis measurement on the $i$-th pair of qubits are $\{ \ket{\identity_i} , \ket{X_i} , \ket{Y_i} , \ket{Z_i}\}$, where $\ket{P_i} \equiv (P_i \otimes \identity_i)\ket{\text{EPR}}_i$.
The $4^N$ possible outcomes of the $N$-fold Bell basis measurement are therefore $\{ \ket{P} = (P\otimes \identity)\ket{\text{EPR}} \}$, for all Pauli operators $P$.
Each outcome occurs with probability
\begin{equation}
    p(P) = \left| \bra{P} (U^\dagger O U \otimes \identity) \ket{\text{EPR}} \right|^2 = \frac{1}{4^n} \tr( P U^\dagger O U )^2 = |a(P; U,O)|^2,
\end{equation}
equal to the magnitude squared of the amplitude of $P$ in $U^\dagger O U$.
The probability distribution is normalized, $\sum_P p(P) = \sum_P |a(P; U,O)|^2 = \overline{\tr}(O^\dagger O) = 1$, as required.
If one associates each outcome $P$ with its weight $w[P]$, then this protocol immediately performs sampling from the size distribution, $P(w) = \sum_{P  :  w[P]=w} |a(P;U,O)|^2$.
In principle, this protocol also gives access to lots of additional information about the Pauli decomposition of $U^\dagger O U$ as well.

\subsubsection{Ancilla-free protocol}

In a similar fashion to the coherent measurement protocol, the sampling-based protocol can also be performed without any ancillas.
This allows one to sample from the modified size distribution, $P_Z(w_Z) = \sum_{P:w_Z[P]=w_Z} |a(P; U,O)|^2$, defined with respect to the $Z$-weight, $w_Z$, instead of the weight, $w$.

Our protocol is again extremely straightforward.
We prepare a random computational basis state, $\ket{\bs s}$, for $\bs s \in \{0,1\}^N$.
We then apply the unitary $U$, then $O$, then $U^\dagger$.
We conclude by measuring all $N$ qubits in the computational basis.
We repeat this procedure for many different random initial states $\ket{\bs s}$.

For a given initial state $\ket{\bs{s}}$, the probability to receive a measurement outcome $\ket{\bs s + \bs \delta}$ is equal to,
\begin{equation}
    p(\bs s + \bs \delta ; \bs s ) = \left| \bra{\bs s + \bs \delta} U^\dagger O U \ket{\bs s} \right|^2 = \left| \bra{\bs s} X_{\bs \delta} U^\dagger O U \ket{\bs s} \right|^2,
\end{equation}
where $X_{\bs \delta} = \bigotimes_{i=1}^N (X_i)^{\delta_i}$ flips each bit with $\delta_i = 1$.
We will be interested in the averaged probability distribution when $\bs{s}$ is drawn randomly from the computational basis states.
To compute this, we use the identity, $\E_{\bs s} [ \bra{\bs s} P \ket{\bs s} \bra{\bs s} Q \ket{\bs s} ] = \delta_{P=Q} \, \delta_{P \in \mathcal{Z}}$, where $\mathcal{Z}$ is the set of $Z$-basis Pauli operators, $\mathcal{Z} = \{ Z_{\bs t} \equiv \bigotimes_{i=1}^N (Z_i)^{t_i}\}$ for $\bs{t} \in \{0,1\}^N $.
This yields
\begin{equation}
    p(\bs \delta) \equiv \E_{\bs s} \left[ p(\bs s + \bs \delta ; \bs s ) \right] = \sum_P |a(P;U,O)|^2 \cdot \delta_{X_{\bs \delta} P \in \mathcal{Z}}.
\end{equation}
For each Pauli $P$, there is a unique $\bs{\delta}$ such that $X_{\bs{\delta}} P \in \mathcal{Z}$.
This $\bs \delta$ has Hamming weight $w_Z[P]$.
Thus, we can sample from the modified size distribution, $P_Z(w_Z)$, by sampling from $p(\bs \delta)$ and associating each $\bs{\delta}$ with its Hamming weight $|\bs{\delta}| = w_Z$.

\section{Quantum experiments with low reactivity can be efficiently learned}

In this section, we show that quantum experiments with low reactivity can be efficiently learned using random product state preparations and measurements~\cite{elben2022randomized}.
That is, the expectation value $\tr(O \mathcal{C}(\psi))$ for any observable $O$, quantum channel $\mathcal{C}$, and any state $\psi$ can be predicted accurately and efficiently using randomized measurements involving the channel $\mathcal{C}$, assuming the expectation value $\tr(O \mathcal{C}(\psi))$ has low reactivity.
To show this, we first summarize the standard randomized measurement protocol, and then state and prove our main result.

\vspace{4mm}
\noindent \textbf{\emph{Background on the randomized measurement toolbox.}} We are interested in characterizing the properties of a quantum channel $\mathcal{C}$ on $N$ qubits.
We assume that one has access to $\mathcal{C}$ through quantum experiments of the form: prepare a random product state, apply $\mathcal{C}$, measure in a random product basis. 
We would like to know what features of the channel can be learned from this extremely simple form of access.
In particular, we will ask: for what observables $O$ and states $\psi$ can the expectation values $\tr(O \mathcal{C}(\psi))$ be efficiently estimated from randomized measurement data?

We assume that the random product basis for both the state preparation and measurement is a tensor product of single-qubit 2-designs on each qubit.
To be specific, we focus on the single-qubit stabilizer states, $\mathcal{S} = \{ \ket{0}, \ket{1}, \ket{+}, \ket{-}, \ket{+i}, \ket{-i} \}$, which are a 2-design.
We let $\ket{u} = \otimes_{i=1}^n \ket{u_i}$ be a stabilizer product state, where each $\ket{u_i} \in \mathcal{S}$.
In each round $r$ of the experiment, we (1) prepare a random product state $\ket{u_r}$ (with probability $(1/6)^N$ for each $u_r$), (2) apply $\mathcal{C}$, and (3) receive a random product measurement outcome $\ket{v_r}$ (with probability $(1/3)^N \bra{v_r} \mathcal{C}(\dyad{u_r}) \ket{v_r}$ for each $v_r$).
We repeat this for a total of $M$ rounds.

Our result will rely on only one simple property of the random measurement toolbox: that it allows one to estimate Pauli transition amplitudes, $e_{PQ} \equiv \frac{1}{2^n} \tr(P \mathcal{C}(Q))$, to within mean-square error $3^{w[P]+w[Q]}/M$, where $M$ is the number of random measurement rounds.
Our estimate $\hat{e}_{PQ}$ is formed as follows,
\begin{equation}
    \hat{e}_{PQ} = 3^{w[P]+w[Q]} \cdot \frac{1}{M}  \sum_{r=1}^M \bra{v_r} P \ket{v_r} \bra{u_r} Q \ket{u_r}.
\end{equation}
Using 2-design properties, one can easily show that $\hat{e}_{PQ}$ provides an unbiased estimate of $e_{PQ}$, with a mean-square error $\text{Var}(\hat{e}_{PQ}) \leq 3^{w[P]+w[Q]}/M$.
Intuitively, the factors of $3^{w[P]}$ and $3^{w[Q]}$ correspond to the (inverse) probabilities that the measurement basis $v_r$ commutes with $P$, and $u_r$ with $Q$, respectively.

\vspace{4mm}
\noindent \textbf{\emph{Proof of efficiency.}} We are now in a position to state our result, that quantum experiments can be efficiently learned from randomized measurement data if they have low reactivity. This result follows extremely simply from the definition of the reactivity, and the standard randomized measurement analysis described above.

\begin{theorem}[Quantum experiments with low reactivity can be efficiently learned]
	Consider any local observable $O$, quantum channel $\mathcal{C}$, and state $\psi$ on $N$ qubits.
    Suppose that the cumulative reactivity, $H(w_*) = \sum_{w > w_*} R_{\tau=0}(w)$, where $R_{\tau=0}(w) = \tr(O \mathcal{C}( \mathcal{P}_w[\psi]))$, of the expectation value is bounded above by $|H(w_*)| \leq \varepsilon$ for some positive integer $w_*$.
    Then the expectation value $\tr(O \mathcal{C}(\psi))$ can be estimated using randomized product state preparations and measurements on $\mathcal{C}$ to within mean-square error  $\varepsilon^2 + n^{\mathcal{O}(w_*)}/M$.	
\end{theorem}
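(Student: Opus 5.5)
The plan is to truncate the Pauli expansion of the initial state at weight $w_*$ and estimate the retained part term by term from the randomized measurement data. Expand $\psi = \sum_Q \overline{\tr}(\psi Q)\, Q$. Grouping by weight gives $\mathcal{P}_w[\psi] = \sum_{Q: w[Q]=w} \overline{\tr}(\psi Q)\, Q$. Similarly expand the local observable as $O = \sum_P o_P P$, where $o_P = \overline{\tr}(OP)$ is supported on Paulis of weight at most $k = \mathcal{O}(1)$ inside the support of $O$. Then
\[
\tr(O\mathcal{C}(\psi)) = \sum_{w\le w_*} R_0(w) + H(w_*), \qquad \sum_{w \le w_*} R_0(w) = \sum_{P} \sum_{Q : w[Q]\le w_*} o_P\, \tr(\psi Q)\, e_{PQ},
\]
where $e_{PQ} = \overline{\tr}(P\mathcal{C}(Q))$ and the normalizations work out since $\overline{\tr}(\psi Q)\cdot 2^N = \tr(\psi Q)$. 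The estimator is $\hat E = \sum_P\sum_{w[Q]\le w_*} o_P \tr(\psi Q)\, \hat e_{PQ}$, with $\hat e_{PQ}$ the randomized-measurement estimator described above. Here $\tr(\psi Q)$ is a classically known coefficient of the (given) state, and $|\tr(\psi Q)|\le 1$.

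The error then splits into bias and variance. Since each $\hat e_{PQ}$ is unbiased, $\mathbb{E}\hat E = \sum_{w\le w_*}R_0(w)$, so the bias is exactly $H(w_*)$, which is at most $\varepsilon$ in magnitude. The mean-square error is therefore $\varepsilon^2 + \mathrm{Var}(\hat E)$.

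Bounding the variance is the main obstacle, because the $\hat e_{PQ}$ are built from the same rounds and are correlated. The simplest route is to write $\hat E = \frac{1}{M}\sum_r X_r$, with i.i.d.\ single-round estimators
\[
X_r = \sum_{P,Q} o_P \tr(\psi Q)\, 3^{w[P]+w[Q]} \bra{v_r}P\ket{v_r}\bra{u_r}Q\ket{u_r}.
\]
Then $\mathrm{Var}(\hat E)\le \mathbb{E}[X_r^2]/M \le \max |X_r|^2/M$. Bounding $|X_r|$ crudely by the triangle inequality, using $|\bra{v}P\ket{v}|,|\bra{u}Q\ket{u}|\le 1$, $|o_P|\le \|O\|$, and $|\tr(\psi Q)|\le 1$, gives
\[
|X_r| \le \|O\|\, 4^k 3^k \cdot \#\{Q: w[Q]\le w_*\}\, 3^{w_*}.
\]
Since $\#\{Q: w[Q]\le w_*\} \le \sum_{w\le w_*}\binom{N}{w}3^w \le (3N)^{w_*}+1$, this yields $|X_r|^2 = N^{\mathcal{O}(w_*)}$ for local $O$ with bounded norm.

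A sharper bound would use the 2-design second-moment structure: only products $\bra{u}Q\ket{u}\bra{u}Q'\ket{u}$ with $Q,Q'$ mutually compatible survive, weighted by $3^{-|\mathrm{supp}(Q)\cup\mathrm{supp}(Q')|}$. The crude bound is already enough for the stated $n^{\mathcal{O}(w_*)}/M$, so I would present the triangle-inequality version and note that the constant in the exponent is not optimized.

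Combining the bias and variance gives mean-square error at most $\varepsilon^2 + N^{\mathcal{O}(w_*)}/M$, as claimed. I would also check the convention for the pure-state Pauli coefficients (the factor between $\overline{\tr}$ and $\tr$), since the definition $R_0(w)=\tr(O\mathcal{C}(\mathcal{P}_w[\psi]))$ fixes it.
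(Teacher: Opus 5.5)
Your proposal is correct and follows essentially the same route as the paper: truncate the Pauli expansion of $\psi$ at weight $w_*$, estimate the retained $\overline{\tr}(P\mathcal{C}(Q))$ with the unbiased randomized-measurement estimators, and split the mean-square error into the bias $H(w_*)$ plus an $N^{\mathcal{O}(w_*)}/M$ variance. The one difference is in the variance step: the paper handles the correlated $\hat e_{PQ}$ via $\mathrm{Cov}\le\sqrt{\mathrm{Var}\,\mathrm{Var}}$ together with $\mathrm{Var}(\hat e_{PQ})\le 3^{w[P]+w[Q]}/M$, while you bound the single-round estimator uniformly, which sidesteps the correlations at the cost of a slightly worse constant that is still $N^{\mathcal{O}(w_*)}$.
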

\noindent Our proof is exceedingly short and is provided below. 
We suspect that the mean-square error in the theorem can be improved to $\varepsilon^2 + 2^{\mathcal{O}(w_*)}/M$ rather than $\varepsilon^2 + n^{\mathcal{O}(w_*)}/M$ using the techniques of Ref.~\cite{huang2022learning}. 
However, we do not do so here, as this is not the main focus of our work.

\begin{proof}
	Let $O = \sum_P o_P P$ denote the Pauli decomposition of $O$, where $o_P = \overline{\text{tr}}(O P)$ for $\overline{\text{tr}}(\cdot) \equiv \frac{1}{2^n} \tr(\cdot)$ and $w[P] \leq k = \mathcal{O}(1)$ for every $P$ with $o_P$ nonzero by assumption.
    Let $\psi = \frac{1}{2^n} \sum_Q c_Q Q$ denote the Pauli decomposition of $\psi$, where $c_Q = \tr(Q \psi)$ is the expectation value of $Q$ in the state $\psi$.
    Our expectation value of interest can be written as
        \begin{equation}
    \begin{split}
        \tr(O\mathcal{C}(\psi))\
        & = \sum_{w\geq 0} R_{\tau=0}(w) \\ & = H(w_*) +   \sum_{Q:w[Q]\leq w_*} c_Q \cdot \overline{\text{tr}}( O \mathcal{C}( Q ))  \\
        & =H(w_*) + \sum_{P:w[P]\leq k} \sum_{Q:w[Q]\leq w_*} o_P c_Q \cdot \overline{\text{tr}}( P \mathcal{C}( Q )),
    \end{split}
    \end{equation}
    where in the second line of the right side, we split the sum into  terms with weights less than or equal to $w_*$ and more than $w_*$ (i.e.\ $H(w_*)$), insert the  Pauli decomposition of $\psi$ and use     $R(w)  = \sum_{Q:\, w[Q]=w} c_Q \, \overline{\tr}(O \mathcal{C}(Q))$. In the third line, we insert Pauli decomposition of the local observable $O$. We note that $|H(w_*)| \leq \varepsilon$ by assumption.

    Now, we can use randomized product state preparations and measurements to compute an unbiased estimate $\hat{e}_{PQ}$ each inner product $e_{PQ} = \overline{\text{tr}}(P \mathcal{C}(Q))$, to within mean-square error $\text{Var}(\hat{e}_{PQ}) = 3^{w[P]+w[Q]}/M$ using $M$ random product states.
    This implies that we can compute an unbiased estimate of the low-weight component of our expectation value of interest, to within mean-square error
    \begin{equation}
    \begin{split}
        \text{Var}\left( \sum_{P:w[P]\leq k} \sum_{Q:w[Q]\leq w_*} o_P c_Q \hat{e}_{PQ} \right)
        & \leq \left( \sum_{P:w[P]\leq k} \sum_{Q:w[Q]\leq w_*} |o_P c_Q| \cdot \sqrt{\text{Var}(\hat{e}_{PQ})} \right)^2 \\
        & \leq \frac{1}{M} \cdot 3^{k+w_*} \left( \sum_{P:w[P]\leq k} \sum_{Q:w[Q]\leq w_*} |o_P c_Q| \right)^2  \\
        & \leq \frac{1}{M} \cdot 3^{k+w_*} \cdot {n \choose k} 4^{k} \cdot {n \choose w_*} 4^{w_*} \cdot \sum_{P:w[P]\leq k} |o_P|^2 \cdot \sum_{Q:w[Q]\leq w_*} |c_Q|^2  \\
        & \leq \frac{1}{M} \cdot 3^{k+w_*} \cdot {n \choose k} 4^{k} \cdot {n \choose w_*} 4^{w_*} \cdot 1 \cdot {n \choose w_*} 4^{w_*}  \\
        & \leq \frac{1}{M} (12n)^k (48n)^{w_*} = n^{\mathcal{O}(w_*)}/M.  \\
    \end{split}
    \end{equation}
    Here, in the first inequality, we upper bound $\text{Cov}(A,B) \leq \sqrt{\text{Var}(A)\text{Var}(B)}$; in the second inequality, we upper bound $\text{Var}(\hat{e}_{PQ}) \leq 3^{k+w_*}/M$; in the third inequality, we apply $\left( \sum_x |x| \right)^2 \leq \sum_x 1 \cdot \sum_x |x|^2$, and upper bound the number of Paulis with weight less than $k$ as ${n \choose k} 4^k$, and similar for $w_*$; in the fourth inequality, we use $\sum_{P} |o_P|^2 \leq \overline{\text{tr}}(O^2) \leq 1$ and $|c_Q|^2 \leq 1$; and in the fifth inequality, we combine terms.
    Finally, we note that the mean-square error of our estimate compared to the exact expectation value (including high-weight components) is equal to $\varepsilon^2 + n^{\mathcal{O}(1)}/M$, since the mean-square error of the randomized measurements and the systematic bias due to $|H(w_*)|\leq \varepsilon$ add in quadrature.
\end{proof}

\section{Quantum experiments with low reactivity can be efficiently fast-forwarded}

In this section, we show that quantum experiments with low reactivity can be efficiently fast-forwarded using random product state preparations and measurements~\cite{elben2022randomized}.
That is, the expectation value $\tr(O \mathcal{C}(\psi))$ for any observable $O$, quantum channel $\mathcal{C} = \mathcal{C}_2 \circ \mathcal{C}_1$, and any state $\psi$ can be predicted accurately and efficiently using randomized measurements involving the  channels $\mathcal{C}_1$ and $\mathcal{C}_2$ individually, assuming the expectation value $\tr(O \mathcal{C}(\psi))$ has low reactivity.
To show this, we first introduce a simple randomized measurement protocol for fast-forwarding quantum dynamics, and then prove that our protocol succeeds for quantum experiments with low reactivity.

\vspace{4mm}
\noindent \textbf{\emph{A randomized measurement protocol for fast-forwarding quantum dynamics.}} We consider the following randomized measurement protocol (see the previous section for an introduction to notation).
We consider two randomized measurement experiments.
In each round $r$ of the first experiment, we (1) prepare the state $\psi$, (2) apply the first channel $\mathcal{C}_1$, and (3) measure in a random product basis, receiving an outcome $\ket{v_r}$ (with probability $(1/3)^N \bra{v_r} \mathcal{C}_1(\psi) \ket{v_r}$ for each $v_r$).
In each round $r$ of the second experiment, we (1) prepare a random product state $\ket{u_r}$ (with probability $(1/6)^N$ for each $u_r$), (2) apply the second channel $\mathcal{C}_2$, and (3) measure $O$ with outcome $o_r$.
We repeat this for a total of $M$ rounds for each experiment.

To see how these two experiments can be used to compute the expectation value $\tr(O \mathcal{C}(\psi))$, let us expand the expectation value as follows,
\begin{equation} \label{eq: bP aP decomposition}
    \tr(O \mathcal{C}(\psi)) = \text{tr}(O \mathcal{C}_2(\mathcal{C}_1(\psi))) = \sum_P \overline{\text{tr}}( O \mathcal{C}_2(P)) \tr(P \mathcal{C}_1(\psi)) = \sum_P b_P a_P,
\end{equation}
where we abbreviate $a_P \equiv \tr(P \mathcal{C}_1(\psi))$ and $b_P \equiv \overline{\text{tr}}( O \mathcal{C}_2(P))$.
The aim of our fast-forwarding protocol will be to estimate the coefficients $a_P$ and $b_P$ from the first and second randomized measurement experiment above, respectively.
From these, the total expectation value can be computed directly from the equation above.

We compute our estimate $\hat{a}_P$ of the coefficient $a_P$ as follows,
\begin{equation}
    \hat{a}_P = 3^{w[P]} \cdot \frac{1}{M} \sum_{r=1}^M \bra{v_r} P \ket{v_r}.
\end{equation}
Here, $v_r$ are the product states obtained from the $M$ measurement outcomes in the experiment, and hence occur with probability $\bra{v_r} \mathcal{C}_1(\psi) \ket{v_r}$.
We compute our estimate $\hat{b}_P$ of the coefficient $b_P$ as follows,
\begin{equation}
    \hat{b}_P = 3^{w[P]} \cdot \frac{1}{M} \sum_{r=1}^M \bra{u_r} P \ket{u_r} o_r.
\end{equation}
One can easily verify that both estimates converge to their desired values in expectation, with a mean-square error $\text{Var}(\hat{a}_P), \text{Var}(\hat{b}_P) \leq 3^{w[P]}/M$.

\vspace{4mm}
\noindent \textbf{\emph{Proof of efficiency.}}
A straightforward analysis shows that the above randomized measurement fast-forwarding protocol is efficient whenever the quantum experiment has low reactivity.
Here, the reactivity is measured at the same time as the Pauli decomposition, i.e.~in between the applications of $\mathcal{C}_1$ and $\mathcal{C}_2$.
This is summarized in the following theorem.

\begin{theorem}[Quantum experiments with low reactivity can be efficiently fast-forwarded]
	Consider any local observable $O$, quantum channel $\mathcal{C} = \mathcal{C}_2 \circ \mathcal{C}_1$, and state $\psi$ on $N$ qubits.
    Suppose that the cumulative reactivity, $H(w_*) = \sum_{w > w_*} R(w)$, where $R(w) = \tr(O \mathcal{C}_2( \mathcal{P}_w[ \mathcal{C}_1(\psi)]))$, of the expectation value is bounded above by $|H(w_*)| \leq \varepsilon$ for some positive integer $w_*$.
    Then the expectation value $\tr(O \mathcal{C}(\psi))$ can be estimated using randomized product state preparations and measurements on $\mathcal{C}_1$ and $\mathcal{C}_2$ individually, to within mean-square error  $\varepsilon^2 + n^{\mathcal{O}(w_*)}/M$.	
\end{theorem}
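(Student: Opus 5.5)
The plan follows the learning theorem closely, using the decomposition in Eq.~(\ref{eq: bP aP decomposition}). We will first split the expectation value by the weight of the intermediate Pauli operator $P$ inserted between $\mathcal{C}_1$ and $\mathcal{C}_2$. Writing $\mathcal{C}_1(\psi) = \frac{1}{2^n}\sum_P a_P P$ with $a_P = \tr(P\mathcal{C}_1(\psi))$, we have $\mathcal{P}_w[\mathcal{C}_1(\psi)] = \frac{1}{2^n}\sum_{P:w[P]=w} a_P P$. Hence $R(w) = \sum_{P:w[P]=w} a_P b_P$, and
\begin{equation}
\tr(O\mathcal{C}(\psi)) = H(w_*) + \sum_{P:w[P]\leq w_*} b_P a_P ,
\end{equation}
with $|H(w_*)|\leq\varepsilon$ by assumption. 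Our estimator will be $\hat{E} = \sum_{P:w[P]\leq w_*} \hat b_P \hat a_P$, built from the two independent randomized-measurement experiments described above.

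Next we control the statistics of $\hat E$. The two experiments are independent, so $\mathbb{E}[\hat b_P\hat a_P] = b_P a_P$, and the estimator is unbiased for the low-weight part. For the variance we bound the fluctuation of each product using independence:
\begin{equation}
\text{Var}(\hat b_P\hat a_P) = \text{Var}(\hat a_P)\text{Var}(\hat b_P) + a_P^2\text{Var}(\hat b_P) + b_P^2 \text{Var}(\hat a_P).
\end{equation}
We then use $|a_P|\leq 1$, $|b_P|\leq \|O\|\leq 1$ (taking $O$ normalized) and $\text{Var}(\hat a_P),\text{Var}(\hat b_P)\leq 3^{w[P]}/M$. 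This gives $\sqrt{\text{Var}(\hat b_P \hat a_P)} \leq \mathcal{O}(3^{w_*/2}/\sqrt{M})$ when $M\geq 3^{w_*}$; for smaller $M$ the bound is $\mathcal{O}(3^{w_*}/M)$, which is still absorbed. As in the previous proof, we bound covariances by Cauchy--Schwarz, $\text{Var}(\sum_P X_P)\leq(\sum_P\sqrt{\text{Var}X_P})^2$. Summing over the at most $\binom{n}{w_*}4^{w_*}$ Paulis of weight $\leq w_*$ yields $\text{Var}(\hat E)\leq \binom{n}{w_*}^2 16^{w_*}3^{w_*}\cdot\mathcal{O}(1/M) = n^{\mathcal{O}(w_*)}/M$. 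The final mean-square error is then bias squared plus variance, $\varepsilon^2 + n^{\mathcal{O}(w_*)}/M$.

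The main obstacle is the product-estimator variance. The cross terms are not simply quadratic in $1/\sqrt{M}$, and a sharper count would require using the locality of $O$ to restrict which $b_P$ are non-negligible, rather than relying on the crude bound $|b_P|\leq 1$. We will settle for this crude bound, since the claimed scaling $n^{\mathcal{O}(w_*)}$ already tolerates it. We will also need to check the unbiasedness and variance claims for $\hat b_P$. The outcome $o_r$ is bounded by $\|O\|$, and the 2-design average $\mathbb{E}_u[3^{w[P]}\bra{u}P\ket{u}\dyad{u}] = P/2^n$ gives $\mathbb{E}[\hat b_P] = \overline{\tr}(O\mathcal{C}_2(P))$, as asserted in the protocol description.
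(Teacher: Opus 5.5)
Your proposal is correct and follows the paper's proof essentially step for step: the same split $\tr(O\mathcal{C}(\psi)) = H(w_*) + \sum_{P:w[P]\le w_*} b_P a_P$, the same independent-product variance formula (the paper absorbs the $9^{w_*}/M^2$ term via $M\ge 1$ rather than splitting into cases), and the same Cauchy--Schwarz counting carried over from the learning theorem, with bias and variance adding to $\varepsilon^2 + n^{\mathcal{O}(w_*)}/M$. One small correction: in the regime $M<3^{w_*}$ your explicit prefactor should carry $9^{w_*}$ rather than $3^{w_*}$, which does not affect the $n^{\mathcal{O}(w_*)}/M$ scaling.
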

\noindent As in the previous section on efficient learning, our proof is exceedingly short and is provided below. 

\begin{proof}
    From Eq.~(\ref{eq: bP aP decomposition}) and the assumptions of the theorem, we can write,
    \begin{equation}
        \tr(O \mathcal{C}(\psi)) = \sum_P b_P a_P = H(w_*) + \sum_{P : w[P] \leq w_*} b_P a_P
    \end{equation}
	where $a_P = \tr(P \mathcal{C}_1(\psi))$ and $b_P = \overline{\text{tr}}(O \mathcal{C}_2(P))$ are defined as before.
    As discussed, each coefficient $b_P$ and $a_P$ can be estimated to within mean-square error $3^{w[P]} \leq 3^{w_*}/M$.
    Since these estimates arise from different randomized measurements and hence are uncorrelated with one another, this implies that one can form an unbiased estimate of the product $b_P a_P$ with mean-square error less than $(9^{w_*} + 3^{w_*}(|a_P|^2 + |b_P|^2))/M = 2^{\mathcal{O}(w_*)}/M$.
    Following identical steps to our proof of efficient learning, this yields a mean-square error in our final estimate of the expectation value that is less than $\varepsilon^2 + n^{\mathcal{O}(w_*)}/M$.
\end{proof}

\section{Rigorous classical algorithm for random quantum circuits with low reactivity}

In this section, we provide a rigorous proof that random quantum circuits with low reactivity can be efficiently classically simulated via local information algorithms.
This proof is intended as a short example, to illustrate what aspects of classical simulability and the reactivity are able to be addressed by existing rigorous techniques.
This complements the extensive numerical studies presented in the main text, which support the use of the reactivity as a proxy for the complexity of local information algorithms in a much broader class of physical settings.

Our first result considers global reactivity functions, defined over the entire circuit volume.
It is exceedingly simple, and applies to any quantum circuit.
\begin{theorem}
    [Quantum experiments with low global reactivity can be efficiently classically simulated]
    Consider any local observable $O$, quantum circuit $\mathcal{C}$, and state $\psi$ on $N$ qubits.
    Suppose that the cumulative global reactivity, $C_g(w_*) = \sum_{w > w_*} R_g(w)$, where $R_g(w) = \sum_{\vec{P}} A_{\vec{P}} \delta_{w[\vec{P}]=w}$, is bounded above by $|C_g(w_*)| \leq \varepsilon$. 
    Then the expectation value $\tr(O \mathcal{C}(\psi))$ can be computed to precision $\varepsilon$ in time $2^{\mathcal{O}(w_*)}$ by a classical local information algorithm that truncates Pauli paths with global weight greater than $w_*$. 
\end{theorem}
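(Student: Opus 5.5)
The plan is to write the expectation value as its exact Pauli path sum, split off the high-weight tail, and evaluate the remaining low-weight paths exactly. From Section~\ref{sec: framework},
\begin{equation}
    \tr(O\mathcal{C}(\psi)) = \sum_{\vec P} A_{\vec P} = \sum_{w\le w_*} R_g(w) + C_g(w_*).
\end{equation}
The algorithm outputs the truncated sum $\tilde{E}=\sum_{\vec P:\, w[\vec P]\le w_*} A_{\vec P}$. The error is then exactly $|C_g(w_*)|\le\varepsilon$ by assumption. The important point is that the truncation error is the \emph{signed} tail sum, which is precisely what the hypothesis controls. No bound on individual path amplitudes or on $\sum|A_{\vec P}|$ is needed.

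It remains to show that $\tilde E$ can be computed in time $2^{\mathcal{O}(w_*)}$. I would do this with a layer-by-layer Heisenberg-picture propagation that tracks the accumulated weight budget.

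\textbf{Data structure.} Start from $O=\sum_P o_P P$. Since $O$ is local, only $\mathcal{O}(1)$ Paulis appear, each of constant weight. Maintain a table indexed by pairs $(P,\omega)$, where $P$ is the current Pauli and $\omega$ is the weight already accumulated on later layers. The entry stores the summed partial amplitude of all path suffixes ending at $P$ with accumulated weight $\omega$.

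\textbf{Update rule.} At layer $s$, conjugate each stored $P$ by the local gates of $\mathcal{C}_s$. This produces the coefficients $\overline{\tr}(P\,\mathcal{C}_s(P'))$. Keep only the $P'$ with $\omega+w[P']\le w_*$, and add $w[P']$ to the budget.

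\textbf{Final step.} Contract the surviving entries with $\tr(P_0\psi)$, and keep only paths with total budget at most $w_*$. This reproduces $\tilde E$ exactly.

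\textbf{Counting.} Each retained path has total weight at most $w_*$, so at most $w_*$ of the $T+1$ Paulis along it are non-identity. Identity layers contribute trivially: a unitary or unital layer maps $\identity$ to $\identity$ with amplitude one. Hence a path is determined by its sequence of non-identity Paulis and their locations. Because $O$ is local and each gate is local, the Pauli at each step must overlap the light-cone support of the previous one. I would then bound the number of relevant configurations by $2^{\mathcal{O}(w_*)}$. The argument is that each non-identity site contributes a factor of at most $4$ Pauli choices, and the connected space-time support must grow from the support of $O$ through bounded-degree gate adjacency. This gives a connected-cluster (animal) counting bound of $c^{w_*}$ for a constant $c$ depending on the gate locality. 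Each of these terms costs $\mathrm{poly}$ time to evaluate, assuming the overlaps $\tr(P_0\psi)$ for a product or otherwise efficiently describable state are cheap. The total time is then $2^{\mathcal{O}(w_*)}$ times a polynomial factor, which is absorbed or stated alongside.

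The main obstacle is this counting step. The issue is the factor from $T$ and $N$: the positions of the weight-carrying layers might naively contribute $\binom{T}{w_*}$ or $n^{w_*}$ choices. I would first try to rule these out via connectivity. A path that becomes identity at some layer stays identity before it, since the unital circuit maps $\identity$ only to $\identity$. Together with locality of $O$, this forces the non-identity support to form a single connected space-time cluster attached to $O$'s support at the final layer.

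One case needs separate treatment: paths with $O$'s identity component, or paths that die to identity. These give only the trivial constant term. The state factor $\tr(\identity\,\psi)=1$ handles them directly. With this, the lattice-animal bound gives $2^{\mathcal{O}(w_*)}$, up to polynomial bookkeeping in $N$ and $T$. If the paper's intended bound allows such polynomial factors, the argument closes. Otherwise, I would state the result as $\mathrm{poly}(N,T)\,2^{\mathcal{O}(w_*)}$.
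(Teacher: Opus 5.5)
Your proposal is correct and follows the paper's proof: split off the signed tail $C_g(w_*)$ and evaluate the remaining $2^{\mathcal{O}(w_*)}$ low-weight paths, with your rooted connected-cluster count supplying the counting bound that the paper instead imports from Lemma~8 of Ref.~\cite{aharonov2023polynomial}. Your closing hedge about $\mathrm{poly}(N,T)$ factors is unnecessary: the rooted cluster count is already independent of $N$ and $T$, and for unitary layers a path that is non-identity at the final layer is non-identity at every layer (so the ``dying'' paths you set aside have zero amplitude), which forces $T+1\le w_*$ for every nontrivial retained path.
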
 \label{thm:4}
\noindent 

Our second result considers individual reactivity functions defined on each individual time slice, and applies to random quantum circuits (defined below).
\begin{theorem}
    [Quantum experiments with low reactivity can be efficiently classically simulated]
    Consider any local observable $O$, Pauli-random quantum circuit $\mathcal{C}$, and state $\psi$ on $N$ qubits.
    Suppose that the summed reactivities, $\sum_{\tau=0}^t C_\tau(w_*) = \sum_{\tau=0}^t \sum_{w > w_*} R_\tau(w)$, where $R_\tau(w) = \sum_{\vec{P}} A_{\vec{P}} \delta_{w[P_t]=w}$, are bounded above by $\sum_{\tau=0}^t  \E_{\mathcal{C}} |C_{\tau,\mathcal{C}}(w_*)|^2 \leq \varepsilon^2$. 
    Then the expectation value $\tr(O \mathcal{C}(\psi))$ can be computed within mean-square error $\varepsilon$ in time $N^{\mathcal{O}(w_*)}$ by a classical local information algorithm that truncates Pauli operators with weight greater than $w_*$ at each time step. 
\end{theorem}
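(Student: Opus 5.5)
The plan is to analyze the natural weight-truncated Heisenberg (or Schrödinger) propagation and show that the error it incurs telescopes into a sum of per-slice cumulative reactivities. Write $\mathcal{P}_{\le w_*}$ for the projector onto Pauli operators of weight at most $w_*$ and $\mathcal{P}_{>w_*}=1-\mathcal{P}_{\le w_*}$. The algorithm evolves $\psi$ layer by layer, applying $\mathcal{P}_{\le w_*}$ after each layer:
\begin{equation}
\tilde\psi_\tau = \mathcal{P}_{\le w_*}\,\mathcal{C}_\tau(\tilde\psi_{\tau-1}), \qquad \tilde\psi_0=\mathcal{P}_{\le w_*}[\psi].
\end{equation}
It outputs $\tr(O\tilde\psi_t)$. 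Each $\tilde\psi_\tau$ has at most $\binom{N}{w_*}4^{w_*}=N^{\mathcal{O}(w_*)}$ Pauli coefficients. For local gates, one step maps each Pauli to $\mathcal{O}(1)$ Paulis, so the running time is $t\cdot N^{\mathcal{O}(w_*)}$.

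\textbf{Telescoping.} With $\psi_\tau$ the exact evolved state, the error is
\begin{equation}
\tr(O\psi_t)-\tr(O\tilde\psi_t)=\sum_{\tau=0}^{t} \tr\big(O\,\mathcal{C}_{t:\tau+1}\,\mathcal{P}_{>w_*}[\mathcal{C}_\tau(\tilde\psi_{\tau-1})]\big).
\end{equation}
Here $\mathcal{C}_{t:\tau+1}$ is the remaining evolution. Each summand is the contribution of Pauli paths that have weight $\le w_*$ at every time before $\tau$ and weight $>w_*$ at time $\tau$. Call this the first-exit contribution $E_\tau$.

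\textbf{Relating $E_\tau$ to $C_\tau(w_*)$.} The quantity $C_\tau(w_*)$ sums over all paths with weight $>w_*$ at time $\tau$, with no constraint on earlier times. The key step is to use Pauli-randomness to remove that constraint in mean square. I take a Pauli-random circuit to mean each layer is conjugated by independent uniformly random Pauli operators, or more generally has this invariance. Under such averaging, $\E_{\mathcal{C}}[A_{\vec P}A_{\vec P'}^*]=0$ unless $\vec P=\vec P'$; this is the standard orthogonality of Pauli paths from Refs.~\cite{aharonov2023polynomial,schuster2025polynomial}. Consequently
\begin{equation}
\E|C_\tau|^2=\sum_{\vec P:\,w[P_\tau]>w_*}\E|A_{\vec P}|^2 \;\ge\; \sum_{\vec P\in \text{first-exit at }\tau}\E|A_{\vec P}|^2=\E|E_\tau|^2,
\end{equation}
because the first-exit paths form a subset of the paths with large weight at $\tau$. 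The same orthogonality makes the $E_\tau$ for different $\tau$ mutually uncorrelated, since their path sets are disjoint. Therefore
\begin{equation}
\E|\text{error}|^2=\sum_\tau \E|E_\tau|^2\le \sum_\tau \E|C_\tau(w_*)|^2\le\varepsilon^2.
\end{equation}
This gives the claimed root-mean-square error $\varepsilon$.

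\textbf{Main obstacle.} The hard part is checking the orthogonality $\E[A_{\vec P}A^*_{\vec P'}]=\delta_{\vec P\vec P'}\E|A_{\vec P}|^2$ under exactly the paper's notion of a Pauli-random circuit. In particular, I need that the random Pauli twirls at each layer decorrelate distinct paths even when the state $\psi$ and observable $O$ are fixed.

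I would first try the following. Insert an independent uniformly random Pauli $\sigma_s$ before each layer. This multiplies $A_{\vec P}$ by signs $\prod_s \chi(\sigma_s,P_{s})$, where $\chi=\pm1$ records commutation. The average of $\prod_s\chi(\sigma_s,P_s)\chi(\sigma_s,P'_s)$ vanishes unless $P_s=P'_s$ for all $s$.

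One subtlety remains at the endpoints $s=0$ and $s=t$, where $\psi$ and $O$ are not twirled. I would handle this by placing twirls between every pair of layers and at the boundaries as the model allows. If a boundary term fails to decorrelate, I would absorb it with a Cauchy–Schwarz bound costing at most a constant factor.
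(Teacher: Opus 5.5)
Your proposal is correct and is essentially the paper's argument: both use the exact orthogonality $\E_{\vec u}[(-1)^{\vec P\cdot\vec u}(-1)^{\vec Q\cdot\vec u}]=\delta_{\vec P,\vec Q}$ to turn the mean-square error into $\sum|A_{\vec P}|^2$ over truncated paths, and then bound this by $\sum_\tau \E[C_\tau(w_*)^2]$. The paper bounds the truncated set by a union over slices, whereas you split it disjointly by first-exit time; your boundary worry is moot because the paper's model $U=u_tU_t\cdots u_1U_1u_0$ twirls every slice $\tau=0,\ldots,t$, including both endpoints, so no Cauchy--Schwarz fallback is needed (and a full gate layer sends each Pauli to $2^{\mathcal{O}(w_*)}$ rather than $\mathcal{O}(1)$ Paulis, which still gives $N^{\mathcal{O}(w_*)}$ time).
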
 \label{thm:5}

\vspace{4mm}
\noindent \textbf{\emph{Setup.}}
We consider quantum circuits in which random single-qubit Pauli rotations are inserted in between each fixed two-qubit circuit gate.
That is, the circuit is given by $U = u_t U_t \ldots  u_2 U_2 u_1 U_1 u_0$, where each $U_\tau$ is an arbitrary layer of any two-qubit circuit gates, and each $u_\tau = \otimes_{i=1}^n u_{\tau,i}$ is a tensor product of random single-qubit Pauli operations, $u_{\tau,i} \sim \{\identity_i,X_i,Y_i,Z_i\}$.
No classical algorithm for such circuits is known beyond exact simulation (with computational cost $2^n$).
We provide a short comparison between our result and previous work on classically simulating more specific classes of random circuits~\cite{aharonov2023polynomial,angrisani2024classically} after introducing our result.

The Pauli path decomposition of the experiment can be written as,
\begin{equation}
    \tr(O U \psi U^\dagger) = \sum_{\vec{P}} (-1)^{\vec{P} \cdot \vec{u}} A_{\vec{P}},
\end{equation}
where $A_{\vec{P}}$ are the path amplitudes for the bare experiment in the absence of any random Pauli rotations, and the term $(-1)^{\vec{P} \cdot \vec{u}}$ captures the phase applied to each Pauli path by the rotation.
Here, in an abuse of notation, $\vec{P}$ denotes the vector of length $2(t+1)N$ with entries $P_{1,\tau,i} = 1$ if $P_{\tau,i}$ is $X$ or $Y$, and $P_{1,\tau,i} = 0$ otherwise, and $P_{2,\tau,i} = 1$ if $P_{\tau,i}$ is $Z$ or $Y$, and $P_{2,\tau,i} = 0$ otherwise.
We have the same definition for $\vec{u}$, with the first index swapped between $1$ and $2$.
Hence, the inner product $\vec{P} \cdot \vec{u}$ encodes whether the Pauli path $\vec{P}$ commutes or anti-commutes the random Pauli rotations $\vec{u} = (u_0,\ldots,u_t)$.
When averaged over the random Pauli rotations, we have the convenient property
\begin{equation}
    \E_{\vec{u}} \left[ (-1)^{\vec{P} \cdot \vec{u}} (-1)^{\vec{Q} \cdot \vec{u}} \right]
    =
    \delta_{\vec{P},\vec{Q}}
\end{equation}
for all $\vec{P}$, $\vec{Q}$.

\vspace{4mm}
\noindent \textbf{\emph{Proofs of efficiency.}}

\begin{proof}[Proof of Theorem~3]
We can decompose the expectation value into contributions from Pauli paths of weight greater than $w_*$, which sum to $C_g(w_*)$, and contributions from paths of weight less than or equal to $w_*$,
\begin{equation}
    \tr(O U \psi U^\dagger) = C_g(w_*) + \!\!  \sum_{\vec{P}:w[\vec{P}]\leq w_*} \!\! A_{\vec{P}}.
\end{equation}
From Lemma~8 of Ref.~\cite{aharonov2023polynomial}, there are at most $2^{\mathcal{O}(w_*)}$ Pauli paths of weight greater than $w_*$ if $O$ is local (see also Appendix~C of Ref.~\cite{schuster2025polynomial}).
The amplitude of each Pauli path can be computed in $\mathcal{O}(w_*)$ time.
Hence, the expectation value can be computed to within error $|C_g(w_*)|$ in $2^{\mathcal{O}(w_*)}$ time.
\end{proof}

\begin{proof}[Proof of Theorem~4]
We can decompose the experiment into contributions from Pauli paths that are truncated (left) versus kept (right) by the classical algorithm,
\begin{equation}
    \tr(O U \psi U^\dagger) = 
    \!\!\!\!\!\!\!\!\!\!  
    \sum_{\substack{\text{paths $\vec{P}$ such that} \\ \text{$w[P_\tau] > w_*$ for some $\tau$}}} 
    \!\!\!\!\!\!\!\!\!\!\!\!\!\!\!   
    (-1)^{\vec{P} \cdot \vec{u}} A_{\vec{P}}
    \,\,\,\,\, + 
    \!\!\!\!\!\!\!\!\!\!  
    \sum_{\substack{\text{paths $\vec{P}$ such that} \\ \text{$w[P_\tau] \leq w_*$ for all $\tau$}}}
    \!\!\!\!\!\!\!\!\!\!\!  
    (-1)^{\vec{P} \cdot \vec{u}} A_{\vec{P}}.
\end{equation}
The mean-square error of the classical algorithm is equal to the mean-square of the left summation.
Since the circuit is Pauli-random, the mean-square sum over Pauli path contributions is equal to the sum of squared contributions,
\begin{equation}
    \E_{\vec{u}} \Bigg[ \Bigg( \sum_{\substack{\text{paths $\vec{P}$ such that} \\ \text{$w[P_\tau] > w_*$ for some $\tau$}}} 
    \!\!\!\!\!\!\!\!\!\!\!\!\!\!\!   
    (-1)^{\vec{P} \cdot \vec{u}} A_{\vec{P}} \Bigg)^2 \Bigg]
    \,\,\,\,\,\, =
    \!\!\!\!\!\!\!\!\!\!  
    \sum_{\substack{\text{paths $\vec{P}$ such that} \\ \text{$w[P_\tau] > w_*$ for some $\tau$}}} \!\!\!\!\!\!\!\!\!\!  
    | A_{\vec{P}} |^2.
\end{equation}
As the latter sum is over positive quantities, it can be upper bounded as follows,
\begin{equation}  
    \sum_{\substack{\text{paths $\vec{P}$ such that} \\ \text{$w[P_\tau] > w_*$ for some $\tau$}}} \!\!\!\!\!\!\!\!\!\!  
    | A_{\vec{P}} |^2
    \,\,\,\,\,\, \leq
    \,\,\,\,\,\, \sum_{\tau=0}^t
    \Bigg( \sum_{\substack{\text{paths $\vec{P}$ such that} \\ \text{$w[P_\tau] > w_*$}}} | A_{\vec{P}} |^2 \Bigg).
\end{equation}
On the left hand side, each truncated Pauli path is counted once; on the right hand side, each truncated Pauli path is counted a number of times equal to the number of time slices where its weight is greater than $w_*$.
Finally, each sum over squared Pauli path contributions can be re-written in terms of the mean-square cumulative reactivity at that time slice, using the same identity as before,
\begin{equation}
    \sum_{\substack{\text{paths $\vec{P}$ such that} \\ \text{$w[P_\tau] > w_*$}}} \!\!\!\!\!\!\!\!\! | A_{\vec{P}} |^2
    \,\, = \,\,
    \E_{\vec{u}} 
    \Bigg[
    \Bigg( \sum_{\substack{\text{paths $\vec{P}$ such that} \\ \text{$w[P_\tau] > w_*$}}} \!\!\!\!\! (-1)^{\vec{P} \cdot \vec{u}} A_{\vec{P}} \Bigg)^2 \Bigg] 
    = \E_{\mathcal{C}} \left[ C_{\tau,\mathcal{C}}(w_*)^2 \right],
\end{equation}
where we note that $\E_{\vec{u}} = \E_{\mathcal{C}}$.
The mean-square error of the classical algorithm is less than $\sum_{\tau=0}^t \E_{\mathcal{C}} \left[ C_{\tau,\mathcal{C}}(w_*)^2 \right]$, as desired.
There are at most $N^{\mathcal{O}(w_*)}$ Pauli operators of weight $w_*$, and hence our algorithm takes at most $N^{\mathcal{O}(w_*)}$ time.
\end{proof}

\vspace{4mm}
\noindent \textbf{\emph{Comparison to previous works~\cite{aharonov2023polynomial,angrisani2024classically}.}} Our Theorems~\ref{thm:4} and~\ref{thm:5} generalize classical simulation algorithms introduced in two lines of recent work.
First, when quantum circuits with random single-qubit Pauli rotations are \emph{noisy}, they can be classically simulated with a local information algorithm in time $n^{\mathcal{O}(1/\gamma)}$, where $\gamma$ is the noise rate~\cite{aharonov2023polynomial}. 
This can be viewed as a specific case of our statement in Theorem~\ref{thm:4}.
Noise damps the contribution of high-weight components, leading to a mean-square cumulative reactivity $\E_U C_g(w_*)^2 = \sum_{\vec{P}:w[\vec{P}]>w_*} e^{-\gamma w[\vec{P}]} A_{\vec{P}}^2 \leq e^{-\gamma w[\vec{P}]}$.
Second, when such circuits have random single-qubit \emph{Clifford} rotations instead of random single-qubit Pauli rotations, they can be classically simulated with a local information algorithm even in the absence of noise~\cite{angrisani2024classically}. 
This can be viewed as a specific case of our statement in Theorem~\ref{thm:5}.
The mean-square reactivity under random single-qubit Clifford rotations is given by, $\E_U R(w;U)^2 = \E_u \text{tr}( U_2^\dagger O U_2 \cdot u \cdot \mathcal{P}_w[ U_1 \psi U_1^\dagger] \cdot u^\dagger)^2 = 3^{-w} \sum_{P:w[P]=w} \text{tr}( U_2^\dagger O U_2 P)^2 \text{tr}(P U_1 \psi U_1^\dagger)^2 \leq 3^{-w}$.
This decays exponentially in the weight $w$.
Setting a desired precision $\varepsilon$ and $w_* = \mathcal{O}(\log 1/\varepsilon)$ yields an efficient classical algorithm with runtime $n^{\mathcal{O}(1/\varepsilon)}$~\cite{angrisani2024classically}.

\end{document}